\newtheorem{theorem}{Theorem}[section]
\newtheorem{lemma}[theorem]{Lemma}
\newtheorem{claim}[theorem]{Claim}
\newtheorem{corollary}[theorem]{Corollary}
\newtheorem{definition}[theorem]{Definition}
\newtheorem{conjecture}[theorem]{Conjecture}
\theoremstyle{definition}
\newtheorem{remark}[theorem]{Remark}
\theoremstyle{definition}
\newtheorem*{remark*}{Remark}
\newtheorem*{proposition*}{Proposition}
\theoremstyle{definition}
\newcommand{\mrm}[1]{\mathrm{#1}}
\newcommand{\skipline}{$\phantom{}$}
\newcommand{\anderbrace}[2]{
	\if\relax\detokenize{#2}\relax
	\sbox0{$\underbrace{#1}_{}$}
	\mathrel{\mathmakebox[\wd0]{#1}}
	\else
	\mathrel{\underbrace{#1}_{\mathclap{#2}}}
	\fi}
\newcommand{\integers}{\mathbb{Z}}
\newcommand{\reals}{\mathbb{R}}
\newcommand{\field}{\mathbb{F}}
\newcommand{\spm}{\left\{-1,1\right\}}
\newcommand{\zo}{\left\{0,1\right\}}
\newcommand{\czo}{\left[0,1\right]}
\newcommand{\seq}{\subseteq}
\newcommand{\sm}{\setminus}
\newcommand{\es}{\emptyset}
\newcommand{\defeq}{\mathrel{\mathop:}=}
\newcommand{\im}{\mrm{Im}}
\newcommand{\one}{\mathbbm{1}}
\newcommand{\pr}{\Pr}
\newcommand{\given}[2]{#1\,\middle|\, #2}
\DeclareMathOperator*{\be}{\mathbb{E}}
\DeclareMathOperator*{\cov}{\mrm{Cov}}
\newcommand{\andd}{\wedge}
\newcommand{\wh}[1]{\widehat{#1}}
\newcommand{\wt}[1]{\widetilde{#1}}
\newcommand{\norm}[1]{\left\Vert {#1}\right\Vert}
\newcommand{\li}{\left}
\newcommand{\ri}{\right}
\newcommand{\cc}{\colon}
\newcommand{\func}[3]{{#1}\cc {#2} \to {#3}}
\newcommand{\set}[2]{\left\{ \given{#1}{#2}\right\}}
\newcommand{\restrict}[2]{{
		\left.\kern-\nulldelimiterspace 
		#1 
		\vphantom{\big|} 
		\right|_{#2} 
}}
\titlespacing{\paragraph}{%
		0pt}{
		0.5\baselineskip}{
		1em}
\newcommand{\tops}[1]{\texorpdfstring{#1}{}}
\newcommand{\Z}{\mathbb{Z}}
\newcommand{\z}{z}
\newcommand{\complex}{\mathbb{C}}
\newcommand{\modd}{\,\%\,}
\newtheorem*{lemma*}{Lemma}
\newtheorem{alg}[theorem]{Algorithm}
\begin{document}
	\title{Fine-Grained Cryptanalysis:\\ Tight Conditional Bounds for Dense $k$-SUM and $k$-XOR}
\author{
	Itai Dinur\footnote{The first author was supported by the Israel Science Foundation (grants no. 573/16 and 1903/20).} \\\small Ben-Gurion University \\\small dinuri@cs.bgu.ac.il
    \and
	Nathan Keller\footnote{This research was supported by the European Research Council under the ERC starting grant agreement no. 757731 (LightCrypt) and by the BIU Center for Research in Applied Cryptography and Cyber Security in conjunction with the Israel National Cyber Bureau in the Prime Minister’s Office. In addition, the second author was supported by the Israel Science Foundation (grants no. 1612/17 and 2669/21).} \\\small Bar-Ilan University \\\small Nathan.Keller@biu.ac.il
	\and
	Ohad Klein\footnote{The work was performed when the third author was in Bar-Ilan University and was supported by the Clore Scholarship Programme.} \\\small Hebrew University \\\small ohadkel@gmail.com
}
    \date{}
	\maketitle
	
	\begin{abstract}
An average-case variant of the $k$-SUM conjecture
asserts that finding $k$ numbers that sum to 0 in a list of $r$ random numbers,
each of the order $r^k$, cannot be done in much less than $r^{\lceil k/2 \rceil}$ time.
On the other hand, in the \emph{dense regime} of parameters,
where the list contains more numbers and many solutions exist,
the complexity of finding one of them can be significantly improved by Wagner's $k$-tree algorithm.
Such algorithms for $k$-SUM in the dense regime have many applications, notably in cryptanalysis.

In this paper, assuming the average-case $k$-SUM conjecture,
we prove that known algorithms are essentially optimal for $k= 3,4,5$.
For $k>5$, we prove the optimality of the $k$-tree algorithm for a limited range of parameters.
We also prove similar results for $k$-XOR, where the sum is replaced with exclusive or.

Our results are obtained by a self-reduction that, given an instance of $k$-SUM which has a few solutions, produces from it many instances in the dense regime.
We solve each of these instances using the dense $k$-SUM oracle,
and hope that a solution to a dense instance also solves the original problem.
We deal with potentially malicious oracles (that repeatedly output correlated useless solutions)
by an obfuscation process that adds noise to the dense instances.
Using discrete Fourier analysis, we show that the obfuscation eliminates
correlations among the oracle's solutions, even though its inputs are highly correlated.
	\end{abstract}


\pagenumbering{Alph}
\thispagestyle{empty}
\newpage
\pagenumbering{arabic}
\setcounter{page}{1}
	
\section{Introduction}

\subsection{Background}

\paragraph{The $k$-SUM problem.} For parameters $k = O(1),r$, the classical worst-case search variant of the $k$-SUM problem asks: Given a list of $r$ numbers, find (with high probability) $k$ of them whose SUM is zero, assuming such numbers exist.\footnote{\label{foot:lists} Another variant of the $k$-SUM problem asks, given $k$ lists of $r/k$ numbers, find $k$ numbers -- one from each list -- whose SUM is zero. The two problems are equivalent, up to $O_k(1)$ factors.}
Given that a solution exists, a simple sort-and-match (or meet-in-the-middle) algorithm finds it in time $T=\tilde{O}(r^{\lceil k/2 \rceil})$
(the notation $\tilde{O}$ hides logarithmic factors in $r$),
and the well-known $k$-SUM conjecture (that generalizes the $3$-SUM conjecture~\cite{GO95}),
states that no algorithm can do substantially better in standard computational models (such as the word RAM model).

In this paper, we consider average-case variants of the $k$-SUM problem.
\begin{definition}[Average-case $k$-SUM problem]
In the $(k,N,r)$-SUM problem, the input consists of $r$ elements $\z_1,\ldots,\z_r$, each of them chosen independently and uniformly at random from $\{-N,\ldots,N\}$. The goal is to find a $k$-tuple (an ordered set of distinct indices) $K = \{i_1,\ldots,i_k\}$, such that $\sum_{j \in K} \z_j = 0$, where the sum is over $\mathbb{Z}$.
\end{definition}
In a \emph{sparse} regime of parameters \emph{only a few solutions exist on average}, i.e., $r^k \approx N$.
It is considered folklore that the uniform distribution is a hard distribution for $k$-SUM under a standard model of computation
(see~\cite{LLW19} and~\cite{P15} for a formulation for $k=3$):

\begin{conjecture}[Sparse average-case $k$-SUM conjecture]
\label{conj:ksum}
Any algorithm that solves the $(k,N,r)$-SUM problem where $r = N^{1/k}$ with probability $\Omega_k(1)$ has expected running time of at least $T= \Omega_k( r^{\lceil k/2 \rceil- o(1)})$.
\end{conjecture}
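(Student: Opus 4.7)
The statement is a hardness conjecture asserting a lower bound that matches a known algorithmic upper bound, so a ``proof'' must establish that no word-RAM algorithm beats $r^{\lceil k/2 \rceil - o(1)}$ time on the uniform distribution when $r = N^{1/k}$. I would approach the task in two stages.

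First, as a sanity check I would verify that the conjectured exponent is in fact achieved, since any proof of the lower bound must be tight against this algorithm. The matching procedure is meet-in-the-middle: writing $k = 2m$ or $k = 2m+1$, enumerate all $O(r^m)$ sums of $m$-subsets of the input, sort them, and for each of the $O(r^m)$ or $O(r^{m+1})$ complementary subsets binary-search for a cancelling partner, discarding subsets that reuse an index. This produces a solution in $\tilde{O}(r^{\lceil k/2 \rceil})$ time. In the sparse regime $r = N^{1/k}$, a Poisson-type computation shows that the expected number of $k$-tuples summing to zero is $\Theta_k(1)$, so the algorithm succeeds with probability $\Omega_k(1)$.

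Second, for the lower bound I would look for a reduction from a well-studied hardness assumption. For $k=3$ the natural route is random self-reducibility from worst-case $3$-SUM to its average-case variant by shifting the inputs by uniformly random offsets whose sum is zero, which is the standard formulation. For general $k$ one would try to bootstrap analogous shift-based reductions, or else prove the lower bound in a restricted computational model such as linear or algebraic decision trees, where information-theoretic and combinatorial-geometric techniques have yielded partial results for $3$-SUM. However, the hard part is that an unconditional lower bound of $r^{\lceil k/2 \rceil - o(1)}$ against general algorithms is far beyond current techniques---it would, in particular, resolve the classical $3$-SUM conjecture. I therefore expect that within this paper Conjecture~1.1 is \emph{assumed} rather than established, serving as a hypothesis that underlies the fine-grained reductions carried out later.
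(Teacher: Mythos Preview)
Your reading is correct: the statement is labeled \emph{Conjecture} in the paper and is not proved there. The paper explicitly presents it as a folklore hardness assumption and then uses it (in slightly strengthened form) as the hypothesis of its conditional results, notably Theorems~\ref{thm:intro-main-informal-SUM} and~\ref{thm:red-sum}. Your final paragraph anticipates this exactly; no proof is expected or supplied.
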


We note that for constant $k$ and $N = \omega(r^k)$, a solution exists with probability $o(1)$, hence the problem cannot be solved with probability $\Omega_k(1)$, regardless of the running time.

In the \emph{dense} regime where \emph{many solutions exist} on average (namely, when $r^k \gg N$), one can do much better.

For $k=3$, there is a simple algorithm that filters the input by keeping only numbers that are smaller than some threshold in absolute value. This gives a smaller sparse instance to which the standard algorithm is applied to solve the problem in time
$T = \tilde{O}(N/r)$ (for $N^{1/3} \leq r \leq N^{1/2}$). For $k > 3$, improvements are obtained via the celebrated Wagner's \emph{$k$-tree algorithm}~\cite{W02} discussed below.

\paragraph{The $k$-XOR problem.} The discussion above equally applies to the average-case $k$-XOR problem.
\begin{definition}[Average-case $k$-XOR problem]
\label{def:kxor}
In the $(k,2^n,r)$-XOR problem, the input consists of $r$ vectors $\z_1,\ldots,\z_r$, each chosen independently and uniformly at random from $\{0,1\}^n$. The goal is to find a $k$-tuple, $K = \{ i_1,\ldots,i_k\}$, such that $\bigoplus_{j \in K} \z_j = 0_n$.
\end{definition}
Similarly to $k$-SUM, the following conjecture is considered folklore.
\begin{conjecture}[Sparse average-case $k$-XOR conjecture]
\label{conj:kxor}
Any algorithm that solves the $(k,N = 2^n,r)$-XOR problem where $r = N^{1/k}$ with probability $\Omega_k(1)$ has expected running time of at least $T= \Omega_k( r^{\lceil k/2 \rceil- o(1)})$.
\end{conjecture}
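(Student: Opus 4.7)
This statement is a hardness conjecture, the $k$-XOR analogue of the preceding sparse $k$-SUM conjecture, so an unconditional proof is presumably out of reach of current techniques and any serious plan must be conditional. My plan is to reduce the sparse $(k,N,r)$-SUM problem to the sparse $(k,2^n,r)$-XOR problem, thereby inheriting the lower bound from Conjecture~\ref{conj:ksum}.

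The first step is to specify the reduction. Given a $(k,N,r)$-SUM instance $z_1,\ldots,z_r$ with $r=N^{1/k}$, I would embed each $z_i$ into an $n$-bit vector, producing a $(k,2^n,r)$-XOR instance $y_1,\ldots,y_r$ with $n=\log_2 N$ so that the density relation $r=(2^n)^{1/k}$ is preserved on the XOR side. The naive embedding — the binary representation of $z_i$ — fails because XOR is carry-free, so a sum-zero $k$-tuple need not be an XOR-zero $k$-tuple. A more promising attempt is a gadget-based reduction that appends carry-tracking blocks, or that concatenates an auxiliary bit string sampled so that the XOR cancels exactly when the original sum vanishes; similar ideas appear in cryptanalytic reductions between additive and boolean problems.

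The second step is to verify two correctness properties. \emph{Completeness:} the induced distribution on $y_1,\ldots,y_r$ must be close enough to uniform on $\{0,1\}^n$ for the $(k,2^n,r)$-XOR assumption to apply, and every genuine sum-zero $k$-tuple in the SUM instance must map to an XOR-zero $k$-tuple in the image. \emph{Soundness:} an XOR-zero $k$-tuple returned by the hypothetical fast $k$-XOR algorithm must yield a sum-zero $k$-tuple of the original SUM instance with probability $\Omega_k(1)$, so that an algorithm violating the $k$-XOR conjecture would, via the reduction, violate Conjecture~\ref{conj:ksum}.

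The principal obstacle, and where I expect this plan to break, is the structural tension between carry-ful addition over $\mathbb{Z}$ and carry-free XOR over $\mathbb{F}_2$. Any gadget that faithfully encodes carries tends either to create correlations among the $y_i$, invalidating the independent-and-uniform hypothesis required by the $k$-XOR conjecture, or to inflate $n$ well beyond $\log_2 N$, breaking the sparsity relation $r=(2^n)^{1/k}$ that the conjecture demands. As a fall-back plan, I would aim instead at a restricted lower bound — for instance in the $\mathbb{F}_2$-linear decision tree model — where the meet-in-the-middle bound $r^{\lceil k/2 \rceil}$ should be approachable via Fourier-analytic or rank-based arguments, in the spirit of known linear-decision-tree lower bounds for $k$-SUM.
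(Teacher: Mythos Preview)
The statement you are trying to prove is a \emph{conjecture}, and the paper does not prove it. It is introduced as ``considered folklore'' and is used throughout the paper as a standing hardness assumption from which conditional lower bounds are derived, not as a theorem to be established. So there is no ``paper's own proof'' to compare against: the intended content here is a statement of belief, not a result.

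Your plan to derive Conjecture~\ref{conj:kxor} from Conjecture~\ref{conj:ksum} via a reduction is, as you yourself diagnose, blocked by the incompatibility between integer addition and $\field_2$-addition. No reduction of this kind is known, and the two conjectures are generally treated as independent hypotheses. Any bit-encoding that respects carries either destroys the independence/uniformity of the $y_i$'s (so the $k$-XOR oracle's guarantee no longer applies) or blows up $n$ beyond $\log_2 N$ (so the sparsity parameter $r=(2^n)^{1/k}$ is lost). Your soundness requirement is also problematic in the other direction: even if completeness held, an XOR-zero $k$-tuple in the image has no reason to correspond to a sum-zero $k$-tuple, because carry cancellation in XOR is far weaker than cancellation in $\integers$. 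The fall-back to restricted models is reasonable as a research direction, but it would not establish the conjecture as stated, which concerns general algorithms in the word RAM model.
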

In the dense $3$-XOR problem, the input consists of $r \gg 2^{n/3}$ uniform vectors.
Similarly to $3$-SUM, a simple filtering algorithm has complexity $T = \tilde{O}(N/r)$ for $N^{1/3} \leq r \leq N^{1/2}$.
The dense $3$-XOR problem has various applications in cryptography and cryptanalysis~\cite{BDF18,DKMN21,J09,LS19,NS15}.
While mild (logarithmic in $N$) improvements to the simple filtering algorithm are known~\cite{J09,LS19,NS15}, any substantial (i.e., polynomial in $N$) improvement would be considered a breakthrough.

\paragraph{Wagner's $k$-tree algorithm.} For $k > 3$, the \emph{$k$-tree algorithm} of Wagner~\cite{W02}
allows finding a solution to $k$-XOR in time $T= \tilde{O}(N^{1/(1+\lfloor \log_2 k \rfloor)})$, when $r$ is of similar size.
The generalized algorithm of Minder and Sinclair~\cite{MS12} provides a tradeoff between $r$ and $T$, for all
$N^{1/k} \leq r \leq N^{1/(1+\lfloor \log_2 k \rfloor)}$. For the most basic case of $k=4$, the tradeoff curve is $T=\tilde{O}(N/r^2)$ for $N^{1/4} \leq  r \leq N^{1/3}$. As was noted in~\cite{W02}, the algorithm is also applicable to the modular $k$-SUM problem in $\integers_N$. Similarly, it can be easily modified to work for the average-case variant of $k$-SUM stated above. For the sake of completeness, we give a high-level overview of the $k$-tree algorithm and its generalization in Appendix~\ref{app:ktree}.

In the 20 years since its introduction, the $k$-tree algorithm (notably for small $k$ values) has become a central tool in cryptanalysis for solving both dense $k$-SUM and $k$-XOR problems (see~\cite{J09}). Specifically, it is used in breaking hash functions~\cite{MPRKS08,W02}, stream ciphers~\cite{LV04}, block ciphers~\cite{DKS15}, signature schemes~\cite{BenhamoudaLLO021} (where the optimal value of $k$ depends on the amount of available data), etc. Furthermore, it has found multiple applications that are not directly related to cryptanalysis. Notably, the \emph{representation technique}~\cite{HGJ10}
crucially relies on variants of the algorithm for small values of $k$ to find one out of many representations of a solution to a problem. This technique gave rise to breakthrough algorithms for solving subset-sum~\cite{HGJ10,NW21} and related problems such as decoding binary linear codes~\cite{BJMM12}.

Finally, the $k$-tree algorithm is closely related to the Blum-Kalai-Wasserman (BKW) algorithm for solving the LPN (learning parity with noise) problem~\cite{BKW03} and its extensions, such as Lyubashevsky's algorithm~\cite{L05} (although these use $k = \omega(1)$).

In this paper we address the question: \textbf{Are the best-known algorithms for dense $\mathbf{k}$-SUM and $\mathbf{k}$-XOR optimal?}

\subsection{Our results}

We show that in some of the most basic cases, $k=3,4,5$, as well as in other settings, the best known algorithms for $k$-SUM (resp., $k$-XOR) in the dense regime are optimal up to logarithmic factors in the input list size, unless the sparse average-case $k$-SUM (resp., $k$-XOR) conjecture fails.

\paragraph{Informal statement of the main results.}

Our main theorem for $k$-SUM is as follows.
\begin{theorem}[Conditional dense $k$-SUM hardness, informal]\label{thm:intro-main-informal-SUM}
Assume that any algorithm that solves $(k,N,N^{1/k})$-SUM with probability $\Omega_k(1)/(\log N)^2$ has expected running time of at least $T = T(N,k)$.

Then, there is $C=C(k)$ such that for any $0 \leq \epsilon \leq 1/2$, any algorithm that solves $(k,N',(N')^{(1 + \epsilon)/k})$-SUM with probability $1/2$ has expected running time of at least $C \cdot T((N')^{1 + \epsilon}, k) \cdot (N')^{- \epsilon}$.
\end{theorem}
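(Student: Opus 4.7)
The plan is to argue in the contrapositive: assume a dense solver $\mathcal{A}$ runs in time $T_d$ and succeeds with probability $1/2$, and from it build a sparse solver with time $\Theta((N')^{\eps})\cdot T_d$ and success probability $\Omega_k(1)/(\log N)^2$. Set $N \defeq (N')^{1+\eps}$ so that both problems share the list length $r = N^{1/k} = (N')^{(1+\eps)/k}$. On average a sparse instance has $\Theta_k(1)$ solutions while a dense instance has $D \defeq \Theta_k((N')^{\eps})$ solutions, so a hypothetical unbiased oracle would pick out a given real solution with probability $\Theta((N')^{-\eps})$ per call, and $M \defeq \Theta((N')^{\eps})$ calls would suffice in expectation to catch it.

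Given a sparse instance $\z = (\z_1,\dots,\z_r)$, the reduction produces $M$ obfuscated dense instances $\z^{(1)},\dots,\z^{(M)}$ by applying fresh random noise: $\z^{(t)} \defeq (\z + y^{(t)}) \bmod M'$ for some $M' \asymp N'$, with the noise distribution chosen so that (i) each marginal of $\z^{(t)}$ is close to the uniform law on $(\Z/M'\Z)^r$ that $\mathcal{A}$ is promised to solve with probability $1/2$, and (ii) any real solution $K$ of the sparse instance remains, with non-negligible probability, a solution of $\z^{(t)}$ -- equivalently, $\sum_{j\in K} y^{(t)}_j \equiv 0 \pmod{M'}$ occurs often enough. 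For each $t$ we then invoke $\mathcal{A}(\z^{(t)})$, obtain a candidate $k$-tuple $K^{(t)}$, directly verify $\sum_{j\in K^{(t)}} \z_j = 0$ over $\Z$, and output any successful candidate. The running time is $O(M\cdot T_d + Mr) = \Theta((N')^{\eps})\cdot T_d$, matching the target once correctness is established.

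The main obstacle is the success probability analysis, which has to cope with an adversarial $\mathcal{A}$ that could in principle always output the same dense-only tuple, or more subtly correlate its outputs across the $M$ queries so as to systematically avoid real solutions -- a genuine concern because the $\z^{(t)}$'s are jointly determined by the single underlying $\z$. To defeat this, the plan is to view $\one[\mathcal{A}(\z^{(t)}) = K_0]$ as a function of $(\z, y^{(t)})$, expand it in the character basis of $(\Z/M'\Z)^r$, and use the fact that the randomness of $y^{(t)}$ annihilates every non-trivial Fourier coefficient on the $\z$-coordinates; this should imply that the empirical frequency $\#\{t : \mathcal{A}(\z^{(t)}) = K_0\}/M$ concentrates around its unconditional expectation, uniformly over candidate tuples $K_0$. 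Bucketing candidates by the rough magnitude of $\sum_{j\in K_0}\z_j$ (the source of the $O(\log N)$ factors and hence of the $(\log N)^{-2}$ loss) and a final union bound then force $\mathcal{A}$'s outputs to spread across the full dense solution set, so a real sparse solution is caught with the claimed probability. The hardest step will be to make the Fourier decorrelation estimate quantitatively sharp while simultaneously designing a noise distribution that delivers uniform marginals, preserves real solutions with usable probability, and produces the required cancellation; getting these three properties to cooperate is the crux of the argument.
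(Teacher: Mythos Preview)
Your high-level framework---contrapositive, repeated randomized reduction, Fourier-based decorrelation of the oracle's outputs---matches the paper's, but the specific obfuscation you propose, additive noise $\z^{(t)}=(\z+y^{(t)})\bmod M'$, cannot simultaneously deliver the three properties you list, and this is a genuine obstruction rather than a tuning issue. For a real sparse solution $K$ to survive in $\z^{(t)}$ with probability $\Omega((N')^{-\eps})$ you need $\Pr\bigl[\sum_{j\in K} y^{(t)}_j\equiv 0\pmod{M'}\bigr]\gtrsim (N')^{-\eps}$; with independent coordinates this forces the noise to be supported on an interval of width $B=O_k((N')^{\eps})\ll M'$. But then the Fourier claim fails: writing $\chi_S(\z+y^{(t)})=\chi_S(\z)\chi_S(y^{(t)})$ and averaging over $y^{(t)}$ multiplies by $\widehat{\mu}(S)=\prod_j \widehat{\mu_1}(S_j)$, which is \emph{not} small for characters whose coordinates all satisfy $|S_j|\lesssim M'/B$. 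An adversarial oracle whose output depends only on the top $\log(M'/B)$ bits of each coordinate has its Fourier mass concentrated on exactly these surviving characters and will return essentially the same tuple on every call. Taking $B\asymp M'$ repairs the Fourier side but drops the survival probability to $\Theta(1/M')$, too small by a factor $(N')^{1-\eps}$.

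The paper avoids this by abandoning additive noise for a \emph{multiplicative} obfuscation: it passes to $\Z_{pq}$ with $p\approx N'$, $q\approx N/N'$, and sets $y_i=\gamma\lfloor(\alpha\,\z_{P(i)}\bmod pq)/q\rceil\bmod p$ for random $\alpha\in\Z_{pq}^\ast$, $\gamma\in\Z_p^\ast$, and---crucially---a random permutation $P\in S_r$ of the indices, which your sketch omits entirely and which is what allows the reduction to a real-valued function on the fixed tuple $\{1,\dots,k\}$. The family $\{x\mapsto\alpha x\}$ is rich enough to decorrelate outputs yet approximately linear so that solutions pull back with probability $\Theta(1/q)$. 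Even then, plain Fourier analysis does not close the argument: the paper needs Littlewood--Offord anti-concentration and a delicate summation of the Fourier weights $M_{p,q,r}(S)$ over $2$-dimensional subspaces of $\Z_p^r$. The $(\log N)^2$ loss arises from a divisor-counting estimate in that step, not from bucketing candidates by $\bigl|\sum_{j\in K_0}\z_j\bigr|$.
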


\begin{remark}
We make several related remarks about the theorem.
\begin{itemize}
	\item While the success probability in the hardness assumption of Theorem~\ref{thm:intro-main-informal-SUM} is slightly smaller than the constant success probability in Conjecture~\ref{conj:ksum}, disproving this stronger assumption with the same time complexity, would be considered a breakthrough (e.g., for cryptanalytic applications).
	
	\item It is possible to amplify the success probability in the hardness assumption from $\Omega_k(1)/(\log N)^2$ to $1/2$ (or any constant), at the cost of increasing the number of input elements $r$ and the expected running time of the algorithm by a factor of $(\log N)^2 \cdot O_k(1)$. The amplification is obtained by partitioning the elements into disjoint groups of size $N^{1/k}$,
	and running the algorithm for $(k,N,N^{1/k})$-SUM on each group independently.
	
	\item The input size of $N^{1/k}$ in the conditional hardness assumption can be adjusted to $C_1 \cdot N^{1/k}$ for any constant $C_1 > 0$. This only requires adjusting the hidden constant behind the success probability $\Omega_k(1)/(\log N)^2$.
	\end{itemize}
\end{remark}

We obtain similar results for $k$-XOR
(with constant success probability in the hardness assumption, as in Conjecture~\ref{conj:kxor}).
Our main theorem for $k$-XOR is as follows.

\label{sec:main}
\begin{theorem}[Conditional dense $k$-XOR hardness, informal]\label{thm:intro-main-informal-XOR}
Assume that any algorithm that solves $(k,N,N^{1/k})$-XOR with probability $\Omega_k(1)$ has expected running time of at least $T = T(N,k)$.

Then, there is $C=C(k)$ such that for any $0 \leq \epsilon \leq 1/2$, any algorithm that solves $(k,N',(N')^{(1 + \epsilon)/k})$-XOR with probability $1/2$ has expected running time of at least $C \cdot T((N')^{1 + \epsilon}, k) \cdot (N')^{- \epsilon}$.
\end{theorem}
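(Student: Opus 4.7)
The plan is to argue by contrapositive: from an algorithm $\mathcal{A}$ solving the dense $(k, N', (N')^{(1+\epsilon)/k})$-XOR problem in expected time $T^*$ with success probability $1/2$, I would build an algorithm $\mathcal{A}^*$ for the sparse $(k, N, N^{1/k})$-XOR problem with $N = (N')^{1+\epsilon}$, having success probability $\Omega_k(1)$ and expected running time $O_k((N')^{\epsilon} \cdot T^*)$. Comparing with the assumed sparse lower bound $T(N,k)$ then forces $T^* \geq C \cdot T((N')^{1+\epsilon}, k) \cdot (N')^{-\epsilon}$, as claimed.

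For the reduction, observe first that the sparse and dense regimes share the same list size: $r = N^{1/k} = (N')^{(1+\epsilon)/k}$. Given a sparse input $z_1, \ldots, z_r \in \{0,1\}^n$ with $n = \log_2 N$, write each $z_i = (z'_i, z''_i)$ with $z'_i \in \{0,1\}^{n'}$, $z''_i \in \{0,1\}^{n-n'}$, and $n' = n/(1+\epsilon) = \log_2 N'$. The ``low-bit'' projection $L = (z'_i)_i$ is a uniform dense instance, in which every index tuple $K$ with $\bigoplus_{i\in K} z'_i = 0$ is a dense solution, and the sparse solutions of $z$ are precisely those $K$ that additionally satisfy $\bigoplus_{i\in K} z''_i = 0$. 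By a standard counting argument, a typical $L$ admits $\Theta_k((N')^{\epsilon})$ dense solutions, of which $\Theta_k(1)$ lift to sparse solutions of $z$.

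The algorithm $\mathcal{A}^*$ then issues $M = \Theta_k((N')^{\epsilon})$ queries to $\mathcal{A}$, each on an independently \emph{obfuscated} copy of $L$. Fix a partition $\phi \colon [r] \to [k]$ into $k$ equal-size groups; for query $t$, draw fresh shifts $s^{(t)}_1, \ldots, s^{(t)}_k \in \{0,1\}^{n'}$ uniformly subject to $\bigoplus_{j} s^{(t)}_j = 0$, and invoke $\mathcal{A}$ on the instance $(z'_i \oplus s^{(t)}_{\phi(i)})_{i=1}^{r}$. Because the shifts XOR to zero, every index tuple with exactly one index per group is valid for the obfuscated instance iff it is valid for $L$; and because $L$ is uniform, the obfuscated instance has the same distribution as a fresh uniform dense input, so the oracle's success guarantee still applies. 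Whenever $\mathcal{A}$ returns a tuple $K$, check in $O(k)$ time whether $\bigoplus_{i\in K} z_i = 0$ in $\{0,1\}^n$; output the first $K$ that passes.

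The correctness analysis is the main challenge. The crucial observation is that $\mathcal{A}$ sees only the low bits $z'$, so its output $K$ is independent of $z''$; hence, conditional on $\mathcal{A}$ returning a valid $K$, the probability that $K$ is sparse for $z$ equals $\Pr_{z''}[\bigoplus_{i\in K} z''_i = 0] = (N')^{-\epsilon}$. Taking expectations, the per-query success probability $p_z = \be_{s}[\Pr_A[\mathcal{A}(L\oplus s)\text{ valid and sparse for } z]]$ satisfies $\be_z[p_z] \geq \tfrac{1}{2}(N')^{-\epsilon}$. However, this averaged bound alone does not preclude either (a) a malicious $\mathcal{A}$ concentrating all its success on a vanishing fraction of sparse instances (making $p_z$ heavy-tailed), or (b) $\mathcal{A}$ returning the same useless valid $K$ across all $M$ obfuscations of a given $L$ (making the $M$ queries redundant). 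Overcoming these pathologies is where I expect discrete Fourier analysis to be essential: I would Fourier expand $f_K(L') = \Pr_A[\mathcal{A}(L') = K]$ over the group $(\{0,1\}^{n'})^r$ and exploit that the obfuscation averages $f_K$ over an affine coset of a $(k-1)n'$-dimensional subspace, so only characters vanishing on this subspace survive. A Parseval-type estimate should then bound the oracle's $\ell_\infty$-concentration over the coset by its $\ell_2$-mass, showing that the obfuscation decorrelates the oracle's outputs across queries, forcing $p_z = \Omega_k((N')^{-\epsilon})$ on an $\Omega_k(1)$ fraction of sparse instances, and that the $M$ per-query events are essentially independent. A standard Binomial concentration over $M = \Theta_k((N')^{\epsilon})$ queries then yields $\Omega_k(1)$ overall success, and the resulting sparse algorithm runs in expected time $O_k(M \cdot T^*)$, completing the contrapositive.
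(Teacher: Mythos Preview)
Your contrapositive framing and the observation that the sparse and dense instances share the same list size $r$ are correct, and match the paper. The gap is in your obfuscation: adding group-constant shifts $s_{\phi(i)}$ with $\bigoplus_j s_j = 0$ is far too weak to decorrelate the oracle's outputs, and the Fourier argument you sketch cannot rescue it. Concretely, an adversarial oracle that knows the fixed partition $\phi$ can compute, for each $i$, the relative value $r_i \defeq w_i \oplus w_{\iota_{\phi(i)}} = z'_i \oplus z'_{\iota_{\phi(i)}}$ (where $\iota_j$ is a fixed representative of group $j$), which is invariant across all your queries; it can also recover $c \defeq \bigoplus_j z'_{\iota_j} = \bigoplus_j w_{\iota_j}$ since the shifts cancel. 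A one-per-group tuple $K=\{i_1,\ldots,i_k\}$ is a dense solution iff $\bigoplus_j r_{i_j} = c$, so the oracle can deterministically output the lexicographically smallest such $K$ on every query. Your $M$ queries then all return the same $K$, and with probability $1-(N')^{-\epsilon}$ you learn nothing. In Fourier terms, your obfuscation averages only over a $(k-1)n'$-dimensional subspace, and every character $\chi_S$ with $\bigoplus_{i:\phi(i)=j} S_i$ independent of $j$ survives; there are enough of these for the oracle to reconstruct a canonical solution.

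The paper's obfuscation is substantially richer: at each iteration it applies a fresh uniformly random full-rank linear map $T:\field_2^{n}\to\field_2^{m}$ (not a fixed truncation) together with a uniformly random permutation $P\in S_r$ of the indices. The permutation is what lets one reduce to a single real-valued function $B'(x)=\Pr[B \text{ outputs } \{1,\ldots,k\}]$, and the random $T$ is what kills almost all Fourier mass: the key structural step is a Cartesian/non-Cartesian decomposition of the Fourier support of $B'$, showing that two independent draws of $T$ force $\Pr[\mathcal{K}_1=\mathcal{K}_2]\le O(2^{-2t})$. The paper explicitly remarks that weaker obfuscations (such as truncating a random set of bits, even followed by an invertible linear map) fail to achieve this $2^{-2t}$ pairwise bound; your shift-based scheme is weaker still. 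So the missing idea is the choice of obfuscation, not the Fourier framework itself.
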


\paragraph{Discussion.}
To better understand the tradeoff obtained by the theorem, set $N' = N$ and $T(N,k) = \Omega(N^{\alpha(k) -o(1)})$ (for some function $\alpha(k)$). The theorem implies that any algorithm for $(k,N,r)$-SUM with $r \approx N^{(1 + \epsilon)/k}$ (for $0 \leq \epsilon \leq 1/2$) that succeeds with probability $1/2$ has expected running time of
\begin{align*}
\Omega( N^{(\alpha(k) - o(1))(1 + \epsilon) - \epsilon})
&=  \Omega(N^{1 - o(1)} \cdot N^{-1 + \alpha(k) + \epsilon \cdot \alpha(k) - \epsilon}) \\
&= \Omega(N^{1 - o(1)} \cdot N^{(1+\epsilon)(\alpha(k) - 1)}) \\
&= \Omega(N^{1 - o(1)} \cdot r^{k \cdot(\alpha(k) - 1)}).
\end{align*}
Assuming Conjecture~\ref{conj:ksum}, we plug-in $\alpha(k) = \lceil k/2 \rceil / k$, and derive the conditional lower bound
\[
T = \Omega(N^{1 - o(1)} / r^{\lfloor k/2 \rfloor}).
\]
Specifically, for $k=3$, by a slightly stronger variant of the sparse average-case $k$-SUM conjecture (with the success probability in the hardness assumption adjusted to $\Omega_k(1)/(\log N)^2$),
we deduce that any algorithm for $(3,N,r)$-SUM
that succeeds with probability $1/2$ has expected running time of
$\Omega({N^{1-o(1)}}/r)$.
We conclude that
the tradeoff $T = \tilde{O}(N/r)$ for $N^{1/3} \leq r \leq N^{1/2}$
obtained by the simple (filtering) sort-and-match algorithm is essentially optimal.
Similar tightness holds for the $3$-XOR problem.
	
By a similar calculation for $k=4$, the (extended) $k$-tree algorithm (obtaining the tradeoff $T=\tilde{O}(N/r^2)$) is essentially optimal, under the sparse average-case $k$-SUM (resp., $k$-XOR) conjecture.
Theorems~\ref{thm:intro-main-informal-SUM} and~\ref{thm:intro-main-informal-XOR} yield optimality of the best known algorithm also for $k=5$ and for part of the range for other values of $k$. In particular, for even values of $k$, we conclude that
any algorithm for $(k,N,r)$-SUM (resp., XOR)
that succeeds with probability $1/2$ has expected running time of
$T = \Omega(N^{1 - o(1)} / r^{k/2})$.
This essentially matches the extended $k$-tree algorithm for $k$ values divisible by 4 in the range $N^{1/k} \leq r \leq N^{4/3k}$.

We further note that the loss in the reduction provided by the above theorems is almost \emph{linear}, i.e., a $O_k(1)$ factor for $k$-XOR and logarithmic in $N$ for $k$-SUM.
This means that for $k = 3,4,5$ any improvement of known algorithms, \emph{even by a sufficiently large logarithmic factor in $N$}, can be leveraged through the theorem to obtain a similar improvement in the algorithms for the sparse average-case $k$-SUM (resp., $k$-XOR) problem.

Figure~\ref{fig:tradeoff} shows our $k$-SUM and $k$-XOR density-complexity tradeoff lower bounds compared to the best-known upper bounds for $k = 3,4,8$ (ignoring logarithmic factors).

\begin{figure}[ht]
 \begin{center}
 \includegraphics[scale=0.8,clip=true]{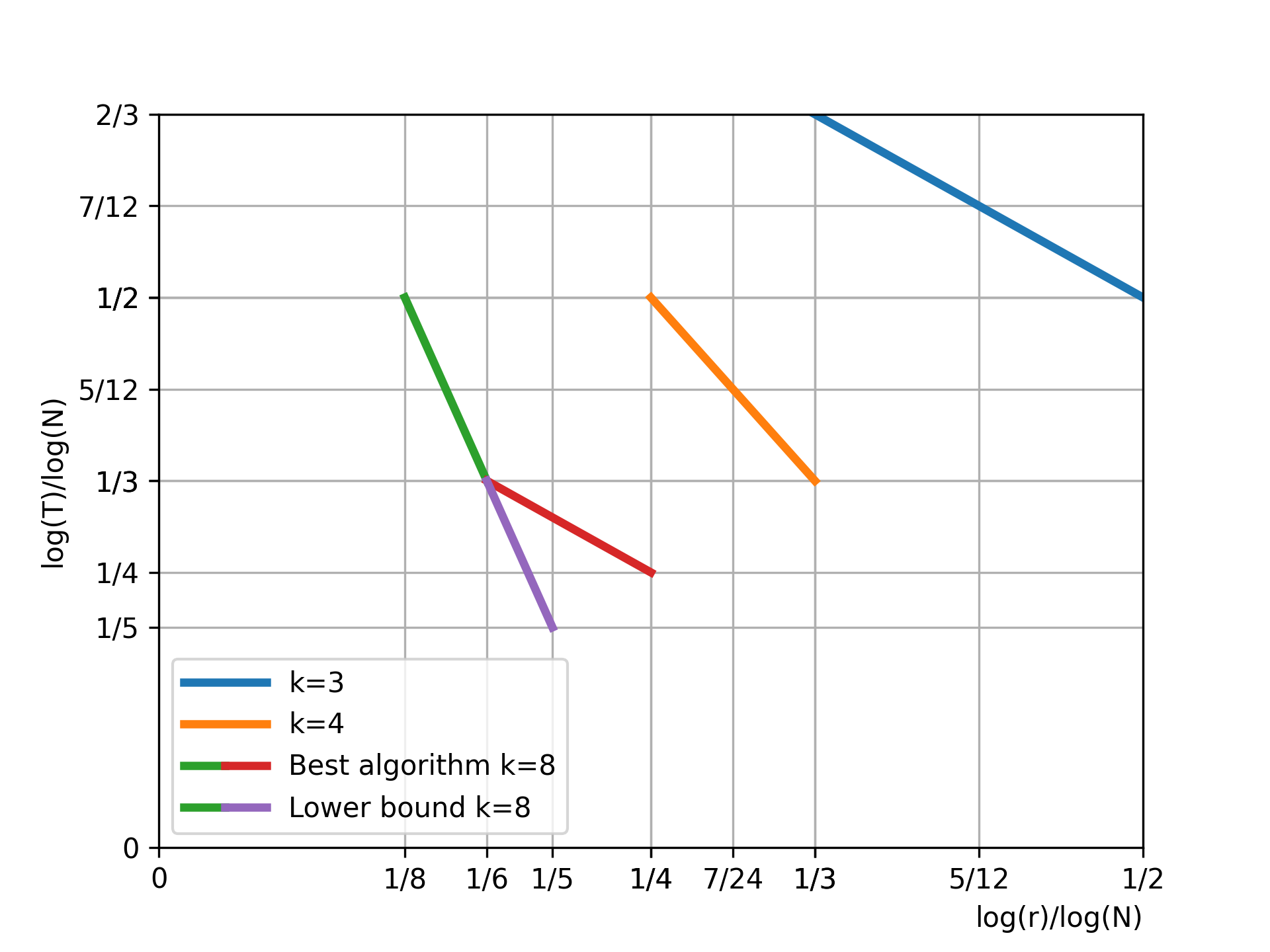}
 \caption{$k$-SUM and $k$-XOR density-complexity tradeoff lower and upper bounds for $k = 3,4,8$}
 \label{fig:tradeoff}
 \end{center}
\end{figure}

\subsection{Our methods}

We achieve our results by a self-reduction from the average-case $k$-SUM ($k$-XOR) problem in the sparse regime, to the average-case $k$-SUM ($k$-XOR) problem in the dense regime.
In the following description, we focus on $k$-XOR, as technical details are simpler for this problem. We then summarize the main different ingredients for $k$-SUM.

\paragraph{The reduction for $k$-XOR.}The basic observation is that we can generate a dense instance with $r$ input vectors from a sparse instance with $r$ input vectors by truncating the $n$-bit input vectors to obtain shorter $m$-bit input vectors for $m < n$. This increases the effective input list size (relative to the vector length) and the number of solutions.

We rewrite our main result in a more convenient form based on this observation by a change of variables for the dense regime:
\begin{theorem}[Conditional dense $k$-XOR hardness, informal, reformulated]\label{thm:intro-main-informal-XOR-2}
Assume that any algorithm that solves $(k,N=2^n,N^{1/k} = 2^{n/k})$-XOR with probability $\Omega_k(1)$ has expected running time of at least $T = T(N,k)$.

Then, there is $C=C(k)$ such that for any $n/2 \leq m \leq n$,
any algorithm that solves $(k,M = 2^m, 2^{n/k})$-XOR with probability $1/2$
has expected running time of at least $C \cdot T(2^m,k) \cdot 2^{m - n}$.		
\end{theorem}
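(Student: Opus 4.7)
I would argue by contrapositive: assume an algorithm $\mathcal{A}$ solves $(k,2^m,2^{n/k})$-XOR with probability $\tfrac{1}{2}$ in expected time $t$, and construct from it an algorithm $\mathcal{B}$ solving the \emph{sparse} $(k,2^m,2^{m/k})$-XOR problem with probability $\Omega_k(1)$ in expected time $O(L\cdot t)$, where $L = \Theta(2^{n-m})$. The hypothesized lower bound $T(2^m,k)$ for the sparse problem then forces $t \ge C\cdot T(2^m,k)\cdot 2^{m-n}$, which is exactly the desired conclusion.

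On input $z_1,\ldots,z_r \in \{0,1\}^m$ with $r = 2^{m/k}$, the algorithm $\mathcal{B}$ iteratively constructs $L$ dense instances of size $r' = 2^{n/k}$ by planting the $r$ originals at a uniformly random subset of positions in $[r']$ and filling the remaining $r'-r$ positions with fresh uniform auxiliary vectors; it then applies an obfuscation (see below) and invokes $\mathcal{A}$. Whenever $\mathcal{A}$ returns a $k$-tuple $K$ that falls entirely within the "original" positions, $\mathcal{B}$ outputs the corresponding indices as its sparse answer. Because the $z_j$'s are uniform and independent of the auxiliaries, each dense instance is marginally uniform in $(\{0,1\}^m)^{r'}$, so $\mathcal{A}$ succeeds with probability $\tfrac12$; against a benign oracle that outputs a random dense solution, a $k$-tuple is "all-original" with probability $(r/r')^k = 2^{m-n}$, so $L=\Theta(2^{n-m})$ calls would suffice. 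The per-iteration overhead for generating the fresh randomness and verifying the returned tuple is $O_k(r')$, which is absorbed into $t=\Omega(r')$ (since $\mathcal{A}$ must read its input), so the total expected time of $\mathcal{B}$ is indeed $O(L\cdot t)$.

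The main obstacle is a worst-case $\mathcal{A}$ that may systematically bias its $k$-tuple toward the auxiliary positions and whose answers can be correlated across queries, because the $L$ dense instances share the same $z_j$'s. The plan to defeat this is to add, independently per query, an obfuscation --- concretely, XORing each input vector with a freshly drawn random element of a small noise distribution on $\{0,1\}^m$ together with a fresh random permutation of positions --- so that the oracle's view is identically distributed on each query. Using discrete Fourier analysis on $\{0,1\}^m$, one expands the indicator "$\mathcal{A}$'s output equals a specific tuple" as a sum over Fourier characters of the per-query noise; the fresh randomness should suppress every nonzero character, leaving only the constant-character contribution of magnitude $(r/r')^k = 2^{m-n}$. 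This yields near-pairwise-independence of the $L$ oracle outputs, and a standard second-moment (Paley--Zygmund) argument then shows that at least one query returns an all-original $k$-tuple with probability $\Omega_k(1)$, completing the reduction. This Fourier-analytic decorrelation step under highly correlated inputs is where I expect the main conceptual difficulty to lie and where the "obfuscation that adds noise" mentioned in the abstract plays its central role.
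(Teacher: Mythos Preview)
Your reduction goes in a \emph{different direction} from the paper's. The paper keeps the number of vectors fixed at $r=2^{n/k}$ and reduces the \emph{vector length}: it starts from a sparse instance $z_1,\dots,z_r\in\{0,1\}^n$ and, in each iteration, applies a fresh uniformly random full-rank linear map $T:\mathbb{F}_2^{\,n}\to\mathbb{F}_2^{\,m}$ (together with a random permutation) to obtain a dense instance in $\{0,1\}^m$. A dense solution lifts to a sparse one with probability $2^{m-n}$, and the core of the proof is a Fourier-analytic obfuscation lemma showing that for two independent choices of $T$, the oracle's pulled-back outputs collide with probability $O(2^{-2(n-m)})$. You instead keep the vector length at $m$ and increase the \emph{number} of vectors from $2^{m/k}$ to $2^{n/k}$ by planting the originals among fresh auxiliaries.

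There is a genuine gap. First, your proposed obfuscation (``XOR each input vector with a freshly drawn random element of a small noise distribution'') does not preserve $k$-XOR: if the noise is independent per vector it destroys the relation $\bigoplus_{i\in K}y_i=0\Leftrightarrow\bigoplus_{i\in K}z_{\sigma(i)}=0$, and if it is a single common shift it only works for even $k$ and in any case does not decorrelate outputs. The paper's obfuscation is crucially \emph{linear} (the random full-rank map $T$), and its richness ($2^{\Theta(mn)}$ choices) is exactly what drives the Fourier argument. Second, and more seriously, without a strong obfuscation your padding scheme fails against a simple adversarial oracle: let the oracle output, among all valid $k$-XOR tuples, the one whose multiset of vector values has minimum hash. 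The planted solution has a fixed hash value $h^\ast$ (determined by $z$), while the $\approx 2^{n-m}$ auxiliary-involving solutions have fresh random hashes each call. One computes $q(z):=\Pr[E_1\mid z]\approx(1-h^\ast)^{2^{n-m}}$, so $\mathbb{E}_z[q(z)]\approx 2^{m-n}$ as needed but $\mathbb{E}_z[q(z)^2]\approx 2^{m-n}/2$, i.e.\ $\Pr[\mathcal{K}_1=\mathcal{K}_2]=\Theta(p)$ rather than $\Theta(p^2)$. This kills the second-moment/Paley--Zygmund step for $L=\Theta(2^{n-m})$. The Fourier analysis you sketch (characters of the per-query noise on $\{0,1\}^m$) does not address this, because the correlation structure here is ``same $z$-values planted at random positions with fresh auxiliaries,'' not the $(m,r,t)$-distribution (two random linear hashes of a common $z\in\{0,1\}^{r\times n}$) that the paper's Fourier lemma analyzes.
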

Observe that this is a indeed a reformulation of~Theorem~\ref{thm:intro-main-informal-XOR}, obtained by setting $n'=m$ and $n = (1 + \epsilon)m$
(hence, $(N')^{- \epsilon} = 2^{- \epsilon n' } = 2^{- \epsilon m} = 2^{m - n}$).

The basic idea of the reduction is to take the sparse input of $r$ uniform $n$-bit vectors, generate from it \emph{many dense $k$-XOR inputs} of $r$ uniform $m$-bit vectors, and solve each one using a black-box algorithm $B$. We then check whether each solution yields a $k$-XOR solution for the original $n$-bit vectors. If we could make the input sets of $m$-bit vectors \emph{completely independent}, then $O(2^{n-m})$ calls to $B$ would be sufficient (as the probability that a solution for $m$-bit vectors corresponds to a solution for $n$-bit vectors is $2^{m-n}$),
and the assertion of Theorem~\ref{thm:intro-main-informal-XOR-2} would be achieved. In fact, it is easy to show that pairwise independence suffices. However, one cannot make these input sets pairwise independent and maintain their relation to the original $n$-bit vectors at the same time (unless $m \leq n/2$, which is not useful in our case). Thus, even though it is called about $2^{n-m}$ times, $B$ could potentially repeatedly output solutions to dense $k$-XOR instances that reside in a small set which does not contain any solution to the sparse instance.

We overcome this obstacle by an \emph{obfuscation} process, which applies to the input vectors two different types of noise consecutively, and allows us to achieve almost pairwise independence of \emph{$B$'s outputs} even though its inputs are \emph{significantly correlated}.
In the reduction, we are given a sequence $\z_1, \ldots, \z_r \in \{0,1\}^n$ for which we wish to solve the $k$-XOR problem. We apply the following procedure.
\begin{enumerate}
			\item \textbf{Draw} a uniformly random matrix $T \in \{0,1\}^{m \times n}$ of full rank $m$, and a uniformly random permutation $P$ on $r$ elements.		
			\item \textbf{Let} $x_{i} = T(\z_{P(i)})$ for all $i \in [r]$.	
			\item \textbf{Feed} $B$ with $x_1, \ldots, x_r$. In case it outputs a $k$-tuple $K$ with $\bigoplus_{i \in K} x_i = 0_m$, \textbf{test} whether
			$\mathcal{K} = P(K)$ satisfies $\bigoplus_{j \in \mathcal{K}} \z_{j} = 0_n$, and \textbf{if} so -- \textbf{output} the $k$-tuple $\mathcal{K}$. \textbf{Otherwise, repeat}.
\end{enumerate}

We prove that after $2^{n-m}$ trials, with probability of $\Omega_k(1)$, the process outputs a solution of the sparse $k$-XOR problem. We use \emph{discrete Fourier analysis} in order to bound the correlation between $B$'s outputs.

\begin{remark}
If we hash down a hard sparse instance to get a dense instance, then clearly any procedure that enumerates all solutions to the dense instance (if there aren't too many) is hard as well.
However, our reduction does not follow this standard paradigm, as the oracle for the dense instance can only produce a single solution. Thus, we hash the sparse instance down in many different ways and repeatedly invoke the oracle in order to force it to produce many different potential solutions to the sparse instance.
\end{remark}

\paragraph{The reduction for $k$-SUM.} The reduction for $k$-SUM follows the same general strategy (with modified obfuscation), but its proof is more involved. In particular, in addition to discrete Fourier-analytic techniques, it uses tools from Littlewood-Offord theory~\cite{Erdos45}.

Concisely, the reason for further complexity in the $k$-SUM case, is that there are only a few group-homomorphisms from $\integers_{N}$ to $\integers_{M}$ (that may be used to obfuscate the input), while group-homomorphisms from $\field_2^n$ to $\field_2^m$ are abundant. We now elaborate on this point.

Both $k$-XOR and $k$-SUM reductions employ an obfuscation procedure to the input of the sparse problem. For $k$-XOR we hash $r$ elements of $\zo^n$ to $r$ elements of $\zo^m$ ($m < n$), while for $k$-SUM we hash $r$ elements of $\integers_{N}$ to $r$ elements of $\integers_{M}$. Since we are required to pull-back linear information from the output of the process (i.e. a solution to a dense problem), to its input (i.e., solve the sparse problem), it is important that the obfuscation would be linear. However, the class of \emph{obfuscation} functions must be rich enough to mask additional information that can be exploited by the dense algorithm to output correlated solutions. In the $k$-XOR case, we make use of a random, rank-$m$, linear map from $\field_2^n$ to $\field_2^m$. There are $2^{\Theta(mn)}$ options to choose this linear map.  In the $k$-SUM problem, if we would insist on a (surjective) linear map from $\integers_N$ to $\integers_M$, there would be at most $M$ such functions, which gives insufficient obfuscation (in particular, any $x \in \integers_N$ that is divisible by $M$, would be mapped to $0$). Hence, we must settle for a \emph{somewhat-linear} obfuscation. We choose a function $\func{\phi}{\integers_N}{\integers_M}$ of the form
\[
	\phi(x) = \gamma \lfloor \alpha \cdot x \cdot M/N \rceil \bmod M,
\]
where $\alpha \in \integers_{N}^\ast$, $\gamma \in \integers_{M}^\ast$ are chosen uniformly at random and $\lfloor \cdot \rceil$ denotes rounding to the nearest integer. This new class of non-strictly-linear obfuscation functions still enables us to pull back a solution from a dense problem to the sparse problem. Moreover, it turns out that this class is rich enough to allow obfuscation. However, proving this latter point is more involved, since the non-linearity of the obfuscation makes the use of Fourier-analytic tools more complex.

\subsection{Related work}

Our result is naturally related to three lines of work.

\paragraph{Algorithms for dense $k$-SUM and $k$-XOR.} The first line of work is the quest for designing better algorithms for \emph{generalized birthday problems}, where the goal is to find a single $k$-SUM or $k$-XOR solution out of many. The systematic analysis of this problem was initiated by Wagner~\cite{W02} in his $k$-tree algorithm,\footnote{We note that variants of the $k$-tree algorithm were presented earlier, starting with Camion and Patarin~\cite{CP91}, as is stipulated in~\cite[full version]{W02}.} and has led to numerous refined algorithms and applications thereof~\cite{Dinur19,J09,LS19,MS12,NS15}.

In this respect, we show that -- perhaps surprisingly -- the best-known algorithms are \emph{essentially optimal} for $k=3,4,5$, unless a standard conjecture in computational complexity fails. Moreover, under similar conjectures, the best known algorithms for any $k > 5$ are essentially optimal for some range of parameters.

\paragraph{Fine-grained complexity.} The second related line of work deals with complexity reductions to the $k$-SUM problem and its variants, that have become a flourishing field of research in the last decade, as part of the \emph{fine-grained complexity} research direction~\cite{VW19}. Reductions to $k$-SUM or to $k$-XOR were shown for problems in computational geometry~\cite{AH08,GO95}, dynamic algorithms~\cite{AW14,Patrascu10}, graph algorithms~\cite{JV16,WW13}, pattern matching~\cite{AWW14} and more. In the last few years, such reductions were shown also for several cryptographic problems~\cite{BRSV18,GGHPV20,LLW19}, as part of the emerging \emph{fine-grained cryptography} research area~\cite{DVV16}.

Our work provides yet another reduction for a cryptography-related problem, however the context of reduction in our case is somewhat different. While previous works prove security of (mainly theoretical) \emph{classes of cryptographic primitives}, based on well-founded hardness assumptions, our work shows \emph{a bound on the possible effectiveness of an important class of cryptanalytic algorithms} that are widely used for breaking cryptosystems. Thus, our results also have more practical significance.

\paragraph{Asymptotic hardness for dense $k$-SUM.} Finally, a recent work related to ours is the paper~\cite{BDV20} by Brakerski, Stephens-Davidowitz, and Vaikuntanathan, which proves asymptotic optimality of the $k$-tree algorithm for average-case $k$-SUM (whose complexity is $N^{O(1/\log k)}$) by reducing it from \emph{worst-case complexity of lattice problems}. While~\cite{BDV20} is related to our work, the results of the two papers are complementary due to several important differences which we summarize below.

First,~\cite{BDV20} yields asymptotic bounds as $k \to \infty$, while our work concentrates on small values of $k$, which appear in applications of the $k$-tree algorithm to a different type of cryptanalytic problems. In this respect, our result resolves an open problem stated in~\cite[Sec.~1.3]{BDV20}, yet our reduction is not from a worst-case problem (due to various technical differences, it is not clear how to combine our techniques with the ones of~\cite{BDV20} to obtain a reduction from a worst-case problem). Second, while the bounds of~\cite{BDV20} are tight up to a constant multiplicative factor in the exponent, our reduction is tight up to constant (or logarithmic) factors for certain parameter ranges such as $k=3,4,5$. Third, while the reduction of~\cite{BDV20} is from a different problem, involving lattices, our reduction is from a conjecture in the sparse regime to the dense regime of the same problem (i.e., a \emph{self-reduction}). We note that the density (or size) of the instance also plays a role in~\cite{BDV20}, as a faster algorithm for sparser $k$-SUM instances yields a faster algorithm for the corresponding lattice problem. Fourth, in addition to $k$-SUM, we also obtain conditional hardness results for the $k$-XOR problem.

In terms of techniques, both papers aim at obtaining `sufficiently different' variants of the same input sample $z$. However, in~\cite{BDV20} these are obtained by \emph{re-randomization} (repeatedly generating almost independent inputs to the $k$-SUM algorithm from the same list of vectors)
and their independence is proved via the \emph{leftover hash lemma}~\cite{ILL89}.
On the other hand,
our variants are obtained via the \emph{obfuscation} method described above
in which the inputs to the $k$-SUM algorithm are highly correlated,
unlike the setting of~\cite{BDV20}.
We prove the low correlation of the algorithm's outputs via \emph{discrete Fourier analytic} methods.

\subsection{Additional application and open problems}

\paragraph{Additional application.}
The security proof of the hash construction $T5$, recently proposed by Dodis et al.~\cite{DKMN21},
is based on dense $3$-XOR and $4$-XOR assumptions.
Our results directly imply that the security of the construction can be based on standard sparse $3$-XOR and $4$-XOR assumptions instead of non-standard dense ones. In this sense, our work (in combination with the original security proof of~\cite{DKMN21}) allows to prove security for a cryptosystem, similarly to~\cite{BRSV18,GGHPV20,LLW19}, yet this proof is obtained for a practical cryptosystem.

\paragraph{Open problems.} The main remaining open problem is to improve our lower bound in the setting of a large $k$ and a large number of solutions, or alternatively, to improve the $k$-tree algorithm in this range.

\paragraph{The structure of the paper.} Next, we summarize our notations and conventions.
In Section~\ref{sec:kxor} we prove our main result for $k$-XOR,
while in Section~\ref{sec:k-SUM} we prove our main result for $k$-SUM.

\section{Notations and Conventions}
\label{sec:notations}

In this section we introduce notations and conventions that will be used throughout the paper.

\paragraph{Notations.}

\begin{itemize}[noitemsep]
	\item $x \sim S$ means that $x$ is a random variable uniformly distributed in the set $S$.

	\item We interchangeably write $\zo$ and $\field_2$, where $0=1 \oplus 1$.
	
	\item When $\z \in \field_2^{r \times n}$, $\z_{ij}$ denotes the $(i,j)$'th entry of $\z$, and $\z_i\defeq (\z_{i1}, \ldots, \z_{in})$.
	
	\item For $x \in \field_2^{r \times m}$ and a permutation $P \in S_r$, we denote by $P(x)$ the value $y \in \field_2^{r \times m}$ satisfying $y_{P(i)}=x_i$ for all $i$.
	
	\item For a linear map $\func{T}{\field_2^n}{\field_2^m}$ and for $\z \in \field_2^{r \times n}$, we denote by $T(\z)$ the value $x \in \field_2^{r \times m}$ satisfying $x_{i}=T(\z_i)$ for all $i$.
	
	\item $\mathbb{Z}_L$ is the group whose elements are $\{0,1,\ldots,L-1\}$, and whose operation is addition modulo $L$.
	
	\item $a \modd b$ (or $a \bmod b$) stands for the single element in $(a + b\integers) \cap [0, b)$. We switch to this shorter notation in subsection~\ref{sec:sub:sum-obfuscation-proof}, as it is more convenient to use inside long arithmetic expressions.
	
	\item For a real number $u$, the rounded value $\lfloor u\rceil$ is the unique integer in $u+(-1/2, 1/2]$.
	
	\item For functions $f,g: \mathbb{N} \to \mathbb{R}_+$ and a fixed parameter $k$, $f=O(g)$ means that $\forall n: f(n) \leq Cg(n)$ for an absolute constant $C$, $f=\tilde{O}(g)$ means that $\forall n: f(n) \leq C_1 \cdot (\log n)^{C_2} \cdot g(n)$ for absolute constants $C_1,C_2$, and $f=O_k(g)$ means that $\forall n: f(n) \leq h(k) \cdot g(n)$ for some function $h:\mathbb{N} \to \mathbb{R}_+$.
\end{itemize}

\paragraph{Conventions.}

\begin{itemize}[noitemsep]
	\item \textbf{Operations within domains.} Throughout the paper, we consider variables in various domains. For example, when analyzing $k$-XOR, we consider variables in $\{0,1\}^{\ell}$ and $\{0,1\}^{r \times \ell}$ for different values of $\ell,r$, while for $k$-SUM, we consider variables in $\Z_L$ and $\Z_L^r$ for different values of $L,r$. Whenever an operation is applied on two elements of the same domain, the result belongs to the same domain. For example, addition between two elements of $\Z_L$ is always performed modulo $L$.
	
	\item \textbf{Names of variables.} In the reductions presented in the paper, we begin with vectors that belong to a `large' space -- $\{0,1\}^{r \times n}$ for $k$-XOR ($\Z_N^r$ for $k$-SUM), and use them to define vectors that belong to a `smaller' space -- $\{0,1\}^{r \times m}$ for $k$-XOR ($\Z_M^r$ for $k$-SUM).
	
	Throughout the paper, vectors that belong to a `large' space are denoted by $\z$, while vectors that belong to a `small' space are denoted by $x$ or $y$. Auxiliary vectors denoted by $u$ or $v$ may belong to an arbitrary domain, which will be explicitly defined.
	
	\item \textbf{Inner products.} All inner products in Section~\ref{sec:kxor} are between functions $f,g:\{0,1\}^{r \times m} \rightarrow \mathbb{R}$ (for a particular choice of $m$), and consequently, their results belong to $\mathbb{R}$. Most inner products in Section~\ref{sec:k-SUM} are between vectors in $\Z_p^r$ (for a particular choice of $p$), and consequently, their results belong to $\Z_p$.
\end{itemize}


\section{Hardness of Dense \tops{$k$}-XOR}
\label{sec:kxor}

In this section we prove Theorem~\ref{thm:intro-main-informal-XOR} (or equivalently, Theorem~\ref{thm:intro-main-informal-XOR-2}).
The precise formulation of the main theorem is as follows.
\begin{theorem}[Sparse to dense $k$-XOR reduction]
\label{thm:red}
        Let $m,n$ be integers such that $n/2 \leq m \leq n$.
        Assume there is an algorithm for $(k,2^m, 2^{n/k})$-XOR with
        success probability $\beta$ and expected running time $\mathcal{T}$.
		Then, there is an algorithm for $(k, 2^n, 2^{n/k})$-XOR
		with success probability $\frac{\beta^4}{(16k)^{k+2}}$, and expected running time at most
		$2^{n-m} \cdot (\mathcal{T}+ \tilde{O}_k(2^{n/k}))$.
\end{theorem}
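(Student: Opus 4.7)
The plan is to instantiate the reduction sketched in the introduction and to control its failure probability by a second-moment argument. On input $\z \in (\field_2^n)^r$, the algorithm $A$ runs $N \defeq 2^{n-m}$ independent trials; in trial $t$ it samples a uniform surjective linear map $T_t \colon \field_2^n \to \field_2^m$ and a uniform permutation $P_t \in S_r$, forms the dense input $x^{(t)} \in (\field_2^m)^r$ by $x^{(t)}_i = T_t(\z_{P_t(i)})$, feeds it to $B$ to obtain $K^{(t)} = B(x^{(t)})$, and outputs $P_t(K^{(t)})$ as soon as $\sigma^{(t)} \defeq \bigoplus_{i \in K^{(t)}} \z_{P_t(i)}$ equals $0_n$. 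Since $\z$ is uniform and $T_t$ is a uniform rank-$m$ linear map, $x^{(t)}$ is uniformly distributed on $(\field_2^m)^r$, so $B$ returns a valid dense $k$-XOR solution with probability exactly $\beta$ in each trial; the running-time bound is immediate from the cost of sampling and applying $(T_t,P_t)$, verifying the candidate, and invoking $B$.

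Define $Y_t = \one[\sigma^{(t)} = 0_n]$ and $Y = \sum_t Y_t$. The first step is to compute $\be Y$ by a kernel argument. Condition on $x^{(t)}, T_t, P_t$ and on $B$ returning a valid dense solution $K^{(t)}$. The $k$ vectors $\{\z_{P_t(i)}\}_{i \in K^{(t)}}$ are then conditionally independent, each uniform on the coset $T_t^{-1}(x^{(t)}_i)$ of $\ker T_t$; writing each as a fixed representative plus an independent uniform element of $\ker T_t$, their XOR is uniform on a coset of $\ker T_t$, which the validity condition $\bigoplus_i x^{(t)}_i = 0_m$ forces to be $\ker T_t$ itself. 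Hence $\sigma^{(t)}$ is conditionally uniform on $\ker T_t$ (a set of size $2^{n-m}$), so $\Pr[Y_t = 1] = \beta \cdot 2^{m-n}$ and $\be Y = \beta$.

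The main obstacle is lower bounding $\Pr[Y \geq 1]$. I plan to apply a Paley-Zygmund-type inequality, so the task reduces to upper bounding $\be[Y^2]$. Since the trials use independent fresh randomness conditional on $\z$, one has $\be[Y_t Y_{t'}] = \be_{\z}[q(\z)^2]$ for $t \neq t'$, where $q(\z) \defeq \Pr[Y_t = 1 \mid \z]$; a malicious $B$ could in principle make $q$ concentrate on a thin set of instances, inflating $\be q^2$ far beyond $(\be q)^2$, and the double obfuscation by $T_t$ and $P_t$ is designed precisely to preclude this. This is the hard step. I intend to expand each indicator via discrete Fourier on $\field_2^n$,
\[
\one[\sigma^{(t)} = 0_n] \;=\; \frac{1}{2^n}\sum_{v \in \field_2^n}(-1)^{v \cdot \sigma^{(t)}},
\]
substitute into $\be[Y_t Y_{t'}]$, and estimate each character: the randomness of $T_t$ kills every character whose frequency $v$ fails to lie in the random $m$-dimensional dual subspace $\mathrm{Im}\, T_t^\top \seq \field_2^n$, so only an $O(2^{m-n})$ fraction of characters survive, while the randomness of $P_t$ decorrelates the lifted $k$-tuples $\mathcal{K}_t = P_t(K^{(t)})$ across trials even though both invocations of $B$ share the sparse instance $\z$. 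Controlling the surviving characters by enumerating the $O_k(1)$ coincidence patterns between $\mathcal{K}_t$ and $\mathcal{K}_{t'}$ will give $\be[Y_t Y_{t'}] \leq 2^{2(m-n)} \cdot (16k)^{k+2}$, hence $\be[Y^2] \leq \beta + (16k)^{k+2}$; feeding this into the Paley-Zygmund bound $\Pr[Y \geq 1] \geq (\be Y)^2/\be[Y^2]$ and absorbing into the combinatorial constant the additional loss needed to pass from the expected number of successes to the existence of one then yields the stated success probability $\beta^4/(16k)^{k+2}$.
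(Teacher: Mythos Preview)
Your algorithm and the first-moment computation are exactly right, and you have correctly identified Paley--Zygmund as the mechanism. The gap is in the second-moment step. Your plan is to bound $\be[Y_tY_{t'}]$ directly by Fourier-expanding $\one[\sigma^{(t)}=0_n]$ over $\field_2^n$ and arguing that the randomness of $T_t$ kills characters $v\notin\im(T_t^\top)$. But $\sigma^{(t)}=\bigoplus_{i\in\mathcal K_t}z_i$ with $\mathcal K_t=P_t\bigl(B(T_t(z_{P_t(\cdot)}))\bigr)$, so the index set itself depends on $T_t$ (and on $z$) through the black box $B$. You therefore cannot average $(-1)^{v\cdot\sigma^{(t)}}$ over $T_t$ holding everything else fixed; the ``character-killing'' step does not separate. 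If instead you condition on both $x^{(t)}$ and $x^{(t')}$ so that $\mathcal K_t,\mathcal K_{t'}$ become deterministic, then (since $m\ge n/2$, generically $\ker T_t\cap\ker T_{t'}=\{0\}$) $z$ is fully determined and there is no residual randomness to drive cancellation. Either way the sketch does not close.

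The paper proceeds differently in two respects. First, the Fourier analysis is not aimed at $\be[Y_tY_{t'}]$ at all, but at the \emph{collision probability} $\Pr[\mathcal K_t=\mathcal K_{t'}]$. After a symmetrization step this becomes $\binom{r}{k}\,\be[B'(x)B'(y)]$ for a \emph{fixed} function $B'\colon\field_2^{r\times m}\to[0,1]$ and a pair $(x,y)$ with an explicit joint law; now the Fourier expansion is of $B'$ (a deterministic object), and one shows the obfuscation suppresses every non-Cartesian Fourier coefficient by $2^{-2t}$ while the $k$-XOR constraint kills the Cartesian part. Second, the collision bound is used to \emph{lower-bound the first moment}, not to upper-bound the second: one sets $Z'=\sum_l\one\{S_l\}-\sum_{l<l'}\one\{\mathcal K_l=\mathcal K_{l'}\}$ and $Z=\max(Z',0)$, so $Z$ undercounts the number of \emph{distinct} solutions found; then $\be[Z]\ge L\beta 2^{m-n}-\binom{L}{2}\cdot O(2^{2(m-n)})$, while $\be[Z^2]\le\be[D^2]$ with $D$ the total number of sparse solutions---a quantity that does not involve $B$ and is bounded by an elementary pairwise-independence calculation. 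This is also where the $\beta^4$ arises (two powers from $\be[Z]^2$ and the extra loss from $\be[D^2]$), and why one first pads to $r=k\cdot 2^{n/k}$ so that $\binom{r}{k}\ge 2^n$; with your $r=2^{n/k}$ the hypotheses of the main reduction lemma are not met.
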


We make a few remarks.
First, we work in the standard word RAM computational model, where an operation on each vector (of size $O(n)$) takes unit time.
However, our results do not change substantially in other standard computational models (e.g., in a model that counts the number of
bit operations).
Second, when applying the contrapositive of this theorem to prove hardness results for dense $k$-XOR with $k=O(1)$ and $\beta = 1/2$ to obtain the theorems~\ref{thm:intro-main-informal-XOR} and~\ref{thm:intro-main-informal-XOR-2},
the factor $\tilde{O}_k(2^{n/k})$ in this theorem is bounded by $O_k(\mathcal{T})$ (and hence consumed by the factor $C = C(k)$ of theorems~\ref{thm:intro-main-informal-XOR} and~\ref{thm:intro-main-informal-XOR-2}). Indeed, the input in the $(k,M = 2^m, 2^{n/k})$-XOR problem already contains $2^{n/k}$ vectors. Finally, the loss factor $\tfrac{\beta^4}{(16k)^{k+2}}$ in the success probability can be significantly improved by a more refined analysis.

The proof of the theorem is based on two reductions: the main sparse to dense $k$-XOR reduction
and the simple (but inefficient) sparse to dense reduction which is required for parameter ranges where the main reduction is not applicable.
We summarize these reductions in the two lemmas below and prove the simpler lemma first.
\begin{lemma}[Main $k$-XOR reduction]
\label{lem:mainxor}
		Let $m,n,r$ satisfy $n/2 \leq m \leq n \leq \log_2 \binom{r}{k}$.
        Assume there is an algorithm for $(k,2^m, r)$-XOR, with success
        probability $\beta$ and expected running time $\mathcal{T}$.
		Then, there is an algorithm for $(k, 2^n, r)$-XOR with success probability
		\[
			\frac{\beta^4}{128} \cdot \Big(2^n/\binom{r}{k}\Big)^2
		\]
		and expected running time at most $2^{n-m} \cdot (\mathcal{T}+ \tilde{O}(r))$.
\end{lemma}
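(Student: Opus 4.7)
The plan is to analyze the reduction described in the introduction: given a sparse instance $\z \in \{0,1\}^{r \times n}$, perform $2^{n-m}$ independent trials, each of which samples a fresh uniform-random full-rank matrix $T \in \{0,1\}^{m \times n}$ and a fresh uniform-random permutation $P \in S_r$, computes $x_i = T(\z_{P(i)})$, invokes $B(x)$ to obtain a $k$-tuple $K$, and outputs $P(K)$ if both $\bigoplus_{i \in K} x_i = 0_m$ and $\bigoplus_{j \in P(K)} \z_j = 0_n$. The running-time bound is immediate from this description, so all the work goes into the success probability.

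I would first isolate the per-trial success probability by reparameterizing: let $\z' \defeq P(\z)$, so that over uniform $(\z,P)$ the vector $\z'$ is itself uniform in $\{0,1\}^{r \times n}$, and $x = T(\z')$ is uniform in $\{0,1\}^{r \times m}$. In this parametrization the oracle $B$ returns a valid dense solution $K$ with probability $\beta$, and the sparse condition $\bigoplus_{j \in P(K)} \z_j = 0_n$ becomes the equivalent linear condition $\bigoplus_{i \in K} \z'_i = 0_n$. Conditioned on $(T,x)$ and on $K$ being a dense solution, the rows $\z'_i$ are independent uniforms over their preimage cosets of $\ker T$, so $\bigoplus_{i \in K}\z'_i$ is uniformly distributed over $\ker T$; it vanishes with conditional probability exactly $2^{-(n-m)}$. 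Therefore each trial succeeds with marginal probability $\beta \cdot 2^{-(n-m)}$, and the total number of successful trials $Y$ satisfies $\mathbb{E}[Y]=\beta$.

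Since the $2^{n-m}$ trials share $\z$ but use freshly independent $(T, P)$, they are conditionally independent given $\z$, and I would pass from $\mathbb{E}[Y]$ to $\Pr[Y \geq 1]$ by a second-moment (Paley--Zygmund) argument applied in two stages. First, one shows that with probability $\Omega(1) \cdot (2^n/\binom{r}{k})^2$ the instance $\z$ has per-trial success probability $p_\z$ of order $\beta \cdot 2^{-(n-m)}$; this requires bounding $\mathbb{E}_\z[p_\z^2]$, which is exactly the joint probability that two independent trials both succeed on the same $\z$. Second, for such a ``good'' $\z$ the conditional success probability over $2^{n-m}$ trials is at least $1-(1-p_\z)^{2^{n-m}} = \Omega(\beta)$. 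Multiplied, these two factors produce the claimed order $\beta^4 \cdot (2^n/\binom{r}{k})^2$, where each layer of Paley--Zygmund contributes a factor $\beta^2$ and the factor $(2^n/\binom{r}{k})^2$ reflects the concentration of the number of sparse solutions (mean $\binom{r}{k}/2^n$) around its mean.

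The main obstacle is the bound on $\mathbb{E}_\z[p_\z^2]$, and this is where discrete Fourier analysis is essential. Decompose by whether the two trials return the same sparse solution $\mathcal{K}_1 = \mathcal{K}_2$ or different ones. The ``different'' case reduces, after averaging the outputs over $P_1,P_2$, to the near-independence of two linear constraints on $\z$ and contributes only the expected order $(\beta \cdot 2^{-(n-m)})^2 \cdot (\binom{r}{k}/2^n)^2$. The delicate case is $\mathcal{K}_1 = \mathcal{K}_2$, in which an adversarial oracle could in principle concentrate its output on one particular $k$-tuple across two independent obfuscations. For this I would expand the indicator ``$B$ outputs a specific $k$-tuple $K$'' in the Fourier basis on $\{0,1\}^{r \times m}$ and control the resulting cross-terms by exploiting two independent sources of randomness: (i) for a uniform full-rank $T \in \{0,1\}^{m \times n}$, expectations of pull-back characters $\chi_S \circ T$ exhibit substantial cancellation except on a tiny family of structured $S$, and (ii) the uniform permutation $P$ decorrelates the identity of the pulled-back tuple from any fixed linear structure visible in $\z$. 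Combining these Fourier estimates with the two-level Paley--Zygmund yields the bound $\frac{\beta^4}{128} \cdot (2^n/\binom{r}{k})^2$, completing the proof.
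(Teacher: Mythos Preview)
Your algorithm, running-time bound, and per-trial success probability are all correct and match the paper. You have also correctly identified the central obstacle --- bounding the probability that two independent obfuscations lead the oracle to the same $k$-tuple --- and that this is handled by Fourier analysis on $\{0,1\}^{r\times m}$. That is exactly the paper's obfuscation lemma (Lemma~\ref{Cor:Obfuscation}): $\Pr[\mathcal{K}_1=\mathcal{K}_2]\le 2^{2-2t}$.

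The gap is in your probabilistic wrapper. Your two-stage Paley--Zygmund scheme requires an upper bound on $\mathbb{E}_\z[p_\z^2]=\Pr[\text{both trials succeed}]$, which you split into the cases $\mathcal{K}_1=\mathcal{K}_2$ and $\mathcal{K}_1\neq\mathcal{K}_2$. Your asserted bound on the ``different'' case, of order $(\beta 2^{m-n})^2(\binom{r}{k}/2^n)^2$, is not justified: the two events ``trial $j$ outputs a sparse solution'' share $\z$ in a way that is not captured merely by the near-independence of two linear constraints, and an adversarial oracle could in principle make $p_\z$ large on a small set of $\z$'s, inflating $\mathbb{E}[p_\z^2]$ well beyond your claimed order. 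Nothing in your outline controls this; the Fourier argument you sketch is aimed at the collision case $\mathcal{K}_1=\mathcal{K}_2$, not at $\mathcal{K}_1\neq\mathcal{K}_2$. Your accounting of how two layers of Paley--Zygmund combine to $\beta^4$ is also not consistent with the bounds you state.

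The paper sidesteps this entirely with a single Paley--Zygmund applied to a different random variable. It sets
\[
Z' \;=\; \sum_{l=1}^{L}\one\{S_l\}\;-\;\sum_{l<l'}\one\{\mathcal{K}_l=\mathcal{K}_{l'}\},\qquad Z=\max(Z',0),
\]
observes that $Z$ lower bounds the number of \emph{distinct} sparse solutions found, and applies $\Pr[Z>0]\ge \mathbb{E}[Z]^2/\mathbb{E}[Z^2]$. The first moment is handled by the obfuscation lemma alone: with $L=\beta\,2^{n-m-2}$ one gets $\mathbb{E}[Z]\ge\beta^2/8$. For the second moment one uses the trivial bound $Z\le D$ (the total number of sparse solutions in $\z$), so $\mathbb{E}[Z^2]\le\mathbb{E}[D^2]\le 2\binom{r}{k}^2 2^{-2n}$ by pairwise independence of the events $\bigoplus_{i\in K}\z_i=0$. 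This is where the factor $(2^n/\binom{r}{k})^2$ comes from --- not from a separate analysis of the ``different'' case. The inclusion--exclusion-style definition of $Z'$ is the missing idea in your plan.
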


\begin{lemma}[Simple reduction]
\label{lem:sparse}
    Let $d > 0$ by an integer. Assume there is an algorithm for $(k,2^n, d \cdot r)$-XOR, with success
    probability $\beta$ and expected running time $\mathcal{T}$.
	Then, there is an algorithm for $(k, 2^n, r)$-XOR with success probability at least $\tfrac{\beta}{(2d)^k}$ and expected
    running time at most $\mathcal{T} + \tilde{O}_k(d \cdot r)$.
\end{lemma}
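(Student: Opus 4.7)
The natural approach is a padding/embedding reduction: given the $(k, 2^n, r)$-XOR input $\z_1, \ldots, \z_r \in \zo^n$, I would sample an additional $(d-1)r$ i.i.d.\ uniform vectors $w_1, \ldots, w_{(d-1)r} \in \zo^n$, draw a uniformly random injection $\sigma \cc [r] \hookrightarrow [dr]$ (equivalently, a uniformly random $r$-subset $S \defeq \sigma([r])$ of $[dr]$), and assemble a list $x \in (\zo^n)^{dr}$ by placing $\z_i$ at position $\sigma(i)$ and filling the remaining $dr - r$ positions with the $w_j$'s in some fixed order. I would then feed $x$ to the assumed algorithm $B$ for $(k, 2^n, dr)$-XOR, and accept whenever $B$ returns a valid $k$-tuple $K \seq [dr]$ that happens to satisfy $K \seq S$, in which case $\sigma^{-1}(K)$ is a $k$-XOR solution for the original instance.

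The key analytic observation is an independence-by-symmetry argument: since the $\z_i$'s and $w_j$'s are all i.i.d.\ uniform, the assembled vector $x$ is i.i.d.\ uniform in $(\zo^n)^{dr}$ irrespective of $\sigma$. Hence the input distribution to $B$ is exactly the one it expects, so $B$ succeeds with probability at least $\beta$; and, more importantly, $x$ is independent of $\sigma$, so the output $K = B(x)$ is also independent of $S$. Conditioned on $B$ producing a valid $k$-tuple, the probability that $K \seq S$ is therefore exactly
\[
    \frac{\binom{dr-k}{r-k}}{\binom{dr}{r}} \;=\; \frac{\binom{r}{k}}{\binom{dr}{k}} \;=\; \prod_{i=0}^{k-1}\frac{r-i}{dr-i}.
\]
Assuming $r \geq 2k$ (the remaining range $r < 2k$ can be handled by brute-force enumeration of all $\binom{r}{k} = O_k(1)$ subsets and absorbed into the $\tilde{O}_k(dr)$ term), each factor is at least $1/(2d)$, because $r - i \geq r/2$ and $dr - i \leq dr$; hence the product is at least $(2d)^{-k}$, yielding the claimed lower bound $\beta/(2d)^k$ on the overall success probability.

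Finally, the overhead of the reduction --- sampling $(d-1)r$ fresh $n$-bit vectors, drawing $\sigma$, assembling $x$, and translating $B$'s output back --- costs $\tilde{O}_k(dr)$ in the word RAM model, giving total expected running time $\mathcal{T} + \tilde{O}_k(dr)$. I do not anticipate a real obstacle; the only care points are the corner case of small $r$ and the fact that $B$'s output must be explicitly checked to lie in $S$ (rather than being coerced into $S$), which is precisely what preserves the independence of $K$ and $\sigma$ used in the probability calculation.
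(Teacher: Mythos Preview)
Your proposal is correct and essentially identical to the paper's own proof: both pad the instance with $(d-1)r$ fresh uniform vectors, randomly embed the original $r$ positions into $[dr]$, invoke $B$, and accept only when $B$'s output lands inside the embedded set, using the independence of $B$'s input from the embedding to factor the success probability as $\beta \cdot \binom{r}{k}/\binom{dr}{k} \geq \beta/(2d)^k$ (with the same $r \geq 2k$ cutoff and brute-force fallback for small $r$). The only cosmetic difference is that the paper phrases the embedding as a full random permutation of all $dr$ vectors rather than a random injection of the real ones, and lower-bounds the product $\prod_{i=0}^{k-1}(r-i)/(dr-i)$ via $((r-k)/(dr-k))^k$ instead of your per-factor bound; both routes yield the same $(2d)^{-k}$.
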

\begin{proof}[Proof of Lemma~\ref{lem:sparse}]
Let $B$ be an algorithm for $(k, 2^n, d \cdot r)$-XOR.
We construct an algorithm for $(k, 2^n, r)$-XOR as follows. Given an instance $\z_1,\ldots,\z_r$,
pick $(d-1) \cdot r$ vectors in $\{0,1\}^n$
uniformly at random and append them the original instance. Then, apply a uniform permutation to the
$d \cdot r$ vectors and run $B$ on this instance.
If $B$ succeeds to return a $k$-XOR, and this $k$-tuple is included in $\z_1,\ldots,\z_r$, then the algorithm returns it.
Otherwise, it fails.

As the input to $B$ is uniformly permuted,
the events (1) $B$ succeeds, and (2) the returned $k$-tuple is included in $\z_1,\ldots,\z_r$,
are independent. By assumption, the probability of the first event is $\beta$. Note that we may assume $r \geq 2k$, as otherwise, a trivial algorithm for $(k,2^n,r)$-XOR that goes over all possible $k$-tuples of input vectors runs in time $O_k(1) \leq \mathcal{T}+\tilde{O}_k(d \cdot r)$.
Given that $r \geq 2k$, the probability of the second event is at least
$((r-k)/(d \cdot r - k))^k \geq (2d)^{-k}$.
Thus, the algorithm succeeds with probability at least $\beta/(2d)^k$.
\end{proof}

We now prove that Theorem~\ref{thm:red} follows from the lemmas.
\begin{proof}[Proof of Theorem~\ref{thm:red}]
Our goal is to devise an algorithm for $(k, 2^n, 2^{n/k})$-XOR
given an algorithm $B$ for $(k, 2^m, 2^{n/k})$-XOR with
success probability $\beta$ and expected running time $\mathcal{T}$.

We first construct an algorithm for $(k, 2^n, 2^{n/k} \cdot k)$-XOR
and then use Lemma~\ref{lem:sparse} to sparsify the input.

Clearly, $B$ is also applicable to $(k, 2^m, 2^{n/k} \cdot k)$-XOR
with similar success probability and complexity
(by ignoring all but the first $2^{n/k}$ input vectors).
Denote $r = 2^{n/k} \cdot k$ and note that as $k! > (k/e)^k$ for all $k$, we have
\[
	e^{-k}\binom{r}{k} \leq 2^n = (r/k)^k \leq \binom{r}{k}.
\]
Therefore, based on algorithm $B$ for $(k, 2^m, 2^{n/k} \cdot k)$-XOR,
by Lemma~\ref{lem:mainxor}, there is an algorithm $B_1$
for $(k, 2^n, 2^{n/k} \cdot k)$-XOR
with success probability
\[
	\beta_1 = \frac{\beta^4}{128} \cdot \Big(\frac{2^n}{\binom{r}{k}} \Big)^2 \geq \frac{\beta^4}{128 e^{2k}}
\]
and expected
running time at most
$2^{n-m} \cdot (\mathcal{T}+ \tilde{O}_k(2^{n/k}))$.

Using algorithm $B_1$ for $(k, 2^n, r = 2^{n/k} \cdot k)$-XOR, Lemma~\ref{lem:sparse}
(applied with $d = k$) implies that there is
an algorithm for $(k, 2^n, 2^{n/k})$-XOR
with success probability at least
\[
	\frac{\beta_1}{(2d)^k} \geq \frac{\beta^4}{(16k)^{k+2}}
\]
and expected running time at most
$2^{n-m} \cdot (\mathcal{T}+ \tilde{O}_k(2^{n/k})) + \tilde{O}_k(k \cdot 2^{n/k}) = 2^{n-m} \cdot (\mathcal{T}+ \tilde{O}_k(2^{n/k}))$,
as claimed.
\end{proof}

\subsection{Overview of the main reduction lemma}

The proof of the main reduction lemma (Lemma~\ref{lem:mainxor}) is constructive -- namely, we construct an algorithm and show that it satisfies the assertion of the lemma.  A natural way to solve the sparse $k$-XOR problem using an oracle for the dense $k$-XOR problem, is to \textbf{truncate} some of the bits. That is, given $r$ vectors, $\z_1, \ldots, \z_r \in \zo^n$, with $2^n \approx r^k$ (so that we expect only $\Theta(1)$ solutions) we may feed the oracle with $x_1, \ldots, x_r \in \zo^m$, where $x_i$ is obtained from $\z_i$ by truncating the last $t\defeq n-m$ bits. In the new problem corresponding to $x_1, \ldots, x_r$ we expect to have $\Theta(2^t)$ (i.e., many) solutions, and hence the oracle is applicable. A $k$-tuple output by the dense $k$-XOR oracle has a probability of $2^{-t}$ to be a solution to the original $\z$-problem (since we truncated exactly $t$ bits, and $\z$ is uniformly distributed).
Thus, it seems that if we feed the oracle $\Theta(2^t)$ times with truncated inputs, we expect $\Theta(1)$ out of the $\Theta(2^t)$ output $k$-tuples to solve the original $\z$-problem, and consequently, solving the $x$-problem is at most $2^t$ times easier than solving the $\z$-problem.

The \textbf{flaw} in this argument is that we cannot expect the oracle to output a newly-forged $k$-tuple in every application (especially if the oracle is deterministic and is fed with the same inputs in all applications).
Hence, although the expected number of solutions we find is $\Theta(1)$, it might be that we solve the $\z$-problem only with a small probability (e.g., it might be that with a high probability the oracle outputs many identical solutions).

Therefore, we have to \textbf{trick} the oracle so that the $k$-tuples it outputs in the different applications will be \textbf{pairwise independent} of each other -- almost as if we feed it with a fresh uniformly chosen input every time.

For this purpose, we devise a method that receives $r$ vectors in $\zo^n$ (denoted $\z_1,\ldots,\z_r$) and randomly obfuscates them, returning $r$ vectors in $\zo^m$ (denoted $x_1,\ldots,x_r$).
This obfuscation meets two criteria:
\begin{enumerate}[noitemsep]
 	\item A solution to the $k$-XOR $x$-problem gives rise to a solution to the $k$-XOR $\z$-problem with good probability, (i.e., $p \approx 2^{-t}$).

 	\item The obfuscation is powerful enough to disguise the fact all the $x$'s are generated from the same $\z$, so that each application of the oracle \textbf{(pairwise) independently} has a chance to solve the $\z$-problem.
\end{enumerate}
As we show below, this obfuscation method guarantees that after applying the oracle sufficiently many times on obfuscated $x$-problems, with a high probability a solution of the original $\z$-problem will be obtained.

\begin{remark}
	The proof crucially relies on the strength of the obfuscation -- specifically, on the fact that the probability of outputting the same $k$-tuple in an iteration pair is $O(2^{-2t})$. This is the motivation behind using the obfuscation we propose, as weaker obfuscations, such as \textbf{truncating $t$ randomly chosen bits}, have oracles that output the same $k$-tuple in an iteration pair with probability much higher than $2^{-2t}$, even if we apply an invertible linear transformation after truncation. 

In detail, consider a $k$-tuple of vectors whose XOR is $v \in \zo^n$ ($v \neq 0$). Then, the probability that after randomly truncating bits from this $k$-tuple, they would XOR to $0$ is higher for $v$ of low Hamming weight. Now, suppose that there is a vector in the $x$-problem that belongs to several $k$-tuples whose XOR has low (non-zero) Hamming weight. Then, the corresponding vector in the $z$-problem is expected to belong to more $k$-XORs than the average vector. This vector can thus be singled out by the oracle, which would repeatedly output one of the $k$-XORs it belongs to (with relatively high probability). 
\end{remark}

\paragraph{Structure of the proof.} First we present the obfuscation algorithm and the lemma which asserts that it indeed satisfies the aforementioned properties. Then we prove the main reduction lemma, assuming the obfuscation lemma. Finally, we prove the obfuscation lemma, which is the most complex part of the conditional $k$-XOR hardness proof.

\subsection{The obfuscation algorithm}\label{ssec:XOR-obfuscation}

Let $m,n,r$ satisfy $n/2 \leq m \leq n \leq \log_2 \binom{r}{k}$, and let $L$ be a parameter to be specified below. Let $B$ be an algorithm for $(k,2^m, r)$-XOR. The algorithm $A$ for $(k,2^n, r)$-XOR, which receives as input an $r$-tuple of $n$-bit vectors $(\z_1,\ldots,\z_n) \in \{0,1\}^{r \times n}$, is defined as follows.

\begin{alg}\label{alg:xor}\skipline
		\begin{enumerate}[noitemsep]
			\item \textbf{Repeat} $L$ times:
			\item
			\begin{adjustwidth}{5mm}{}
			\textbf{Draw} a uniformly random full-rank matrix $T \in \field_2^{m \times n}$ (rank $= m$) and a uniformly random permutation $P \in S_r$.
			\end{adjustwidth}
			\item
			\begin{adjustwidth}{5mm}{}
			\textbf{Let} $x_{i} = T(\z_{P(i)})$ for all $i \in [r]$.
			\end{adjustwidth}
			\item
			\begin{adjustwidth}{5mm}{}
			\textbf{Feed} $B$ with $(x_1, \ldots, x_r)$. In case it outputs a $k$-tuple $K$ with $\bigoplus_{i \in K} x_i = 0_m$, \textbf{test} whether $\mathcal{K} = P(K)$ satisfies $\bigoplus_{i \in \mathcal{K}} \z_{i} = 0_n$, and \textbf{if} it does -- \textbf{output} the $k$-tuple $\mathcal{K}$. \textbf{Otherwise, continue}.
			\end{adjustwidth}
		\end{enumerate}
\end{alg}

		Thus, we try to solve $A$'s problem, by considering many derived problems (in which we consider only $m$-bit vectors, instead of $n$-bit ones), trying to solve these using $B$, and in case its output solves $A$'s problem, we output the result. Each of these trials succeeds with some probability, and only one success is required. Repeating this procedure enough times, we are expected to find a solution with reasonable probability -- unless failures are correlated. This is where the obfuscation lemma comes into play -- it shows the trials are sufficiently independent for $A$ to succeed.
		
\paragraph{The obfuscation lemma.} The heart of the proof is the following lemma:
\begin{lemma}\label{Cor:Obfuscation}
	Let $(\z_1,\ldots,\z_r) \in \{0,1\}^{r \times n}$ be chosen uniformly at random. Let $(x_1^{(1)},\ldots,x_r^{(1)}) \in \{0,1\}^{r \times m}$ and $(x_1^{(2)},\ldots,x_r^{(2)}) \in \{0,1\}^{r \times m}$ be obtained from it by the procedure described above (in two out of the $L$ iterations). Let $\mathcal{K}_1, \mathcal{K}_2$ be the two corresponding $\mathcal{K}$'s obtained in the process. Then, assuming $t \defeq n-m \leq m \leq n \leq \log_2 \binom{r}{k}$, we have
	\begin{equation}\label{eq:indep}
		\Pr[\mathcal{K}_1 = \mathcal{K}_2] \leq 2^{2-2t},
	\end{equation}
	where the probability is taken over $z,x^{(1)}, x^{(2)}$, and $B$'s randomness.
\end{lemma}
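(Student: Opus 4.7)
The plan is to condition on $\z$ and exploit the conditional independence of the two iterations, which gives
\[
	\Pr[\mathcal{K}_1 = \mathcal{K}_2 \mid \z] = \sum_\mathcal{K} q_\mathcal{K}(\z)^2, \qquad q_\mathcal{K}(\z) \defeq \Pr[\mathcal{K}_1 = \mathcal{K} \mid \z].
\]
By the $S_r$-symmetry that the uniform $P$ builds into the algorithm, $\be_\z[q_\mathcal{K}(\z)] = 1/\binom{r}{k}$ does not depend on $\mathcal{K}$, so decomposing $q_\mathcal{K}(\z) = 1/\binom{r}{k} + e_\mathcal{K}(\z)$ (with $\sum_\mathcal{K} e_\mathcal{K} \equiv 0$) yields
\[
	\Pr[\mathcal{K}_1 = \mathcal{K}_2] \;=\; \frac{1}{\binom{r}{k}} + \be_\z\!\Big[\sum_\mathcal{K} e_\mathcal{K}(\z)^2\Big].
\]
The first term is at most $2^{-n} \leq 2^{-2t}$ via $\binom{r}{k} \geq 2^n$ and $m \geq n/2$, so the target bound $2^{2-2t}$ reduces to showing that the variance-like term is at most $O(2^{-2t})$.

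To analyze the variance I would simplify the joint distribution by the change of variables $w \defeq \z_{P_1}$ (uniform in $(\field_2^n)^r$) and $\pi \defeq P_1^{-1} P_2$ (uniform in $S_r$, independent of $w, T_1, T_2$). A direct check gives $\z_{P_2} = w_\pi$ with $(w_\pi)_i = w_{\pi(i)}$, and the collision probability rewrites as
\[
	\Pr[\mathcal{K}_1 = \mathcal{K}_2] \;=\; \Pr_{T_1, T_2, w, \pi}\!\Big[ B(T_1 w) \,=\, \pi\!\bigl( B(T_2 w_\pi) \bigr) \Big].
\]
Conditioning on $B$'s internal randomness (so $B$ becomes deterministic), I define $f_K(x) \defeq \mathbf{1}[B(x) = K]$ on $(\field_2^m)^r$ and expand both occurrences of $B$ in the Fourier basis. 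The decisive computation is that, because $w_1, \ldots, w_r$ are i.i.d.\ uniform in $\field_2^n$,
\[
	\be_w\!\big[\chi(T_1 w)\, \chi'(T_2 w_\pi)\big] \;=\; \prod_{i=1}^{r} \mathbf{1}\!\big[\,T_1^\top \chi_i \,=\, T_2^\top \chi'_{\pi^{-1}(i)}\,\big],
\]
so only character pairs $(\chi, \chi')$ matching coordinate-wise through $\pi$ survive. The term $\chi = \chi' = 0$ reproduces exactly the main contribution $1/\binom{r}{k}$.

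For the remaining modes, injectivity of $T_j^\top$ (from $T_j$ being full-rank) forces both $\chi$ and $\chi'$ to be nonzero, and the matching constraint traps every $T_1^\top \chi_i$ inside the uniformly random $m$-dimensional subspace $\mathrm{im}(T_2^\top) \subseteq \field_2^n$; once trapped, $\chi'$ is uniquely determined by $(T_1, T_2, \pi, \chi)$. For a nonzero $\chi$ whose $r$ coordinates span an $s$-dimensional subspace of $\field_2^m$, the probability over $T_2$ that the trap is sprung is $O(2^{-s t})$. Summing the resulting contributions via Parseval's identity $\sum_K \sum_\chi \hat{f}_K(\chi)^2 = \sum_K p_K = 1$ together with Cauchy--Schwarz bounds the variance-like term by $O(2^{-2t})$, the quadratic decay arising because the two independent random projections $T_1, T_2$ each contribute a factor of $2^{-t}$. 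The main obstacle is making this last Fourier step rigorous: dominating the ``bad'' small-span characters (where $s$ is tiny and a single constraint is weak) by exploiting cancellations from averaging $\hat{f}_{\pi(K)}$ over the random permutation $\pi$, and absorbing the (constant-factor) discrepancy between uniform $T_j$ and uniform full-rank $T_j$ for $m \leq n$. With these in place, combining the main term with the variance bound yields $\Pr[\mathcal{K}_1 = \mathcal{K}_2] \leq 2^{-2t} + O(2^{-2t}) \leq 2^{2-2t}$, as required.
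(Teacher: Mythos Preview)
Your overall strategy---conditioning on $z$, reducing to a Fourier computation, and classifying characters by the dimension of the span of their rows---matches the paper's approach. You correctly isolate the main term $\leq 1/\binom{r}{k} \leq 2^{-2t}$ and correctly observe that characters whose rows span a space of dimension $\geq 2$ already pick up $O(2^{-2t})$ from the random-subspace constraint.

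The gap is in your treatment of the rank-$1$ characters (what the paper calls the \emph{Cartesian} part: $S = U \times V$ with $U \subseteq [r]$, $V \subseteq [m]$). For these a single subspace constraint only yields $2^{-t}$, and your plan to recover the missing factor ``by exploiting cancellations from averaging $\hat f_{\pi(K)}$ over the random permutation $\pi$'' is neither carried out nor, as far as I can see, sufficient on its own. The permutation symmetry is already fully absorbed once one passes to the symmetrized function $B'(x) = \Pr_{P,\bar T}[B(\bar T(P(x))) = P(\{1,\ldots,k\})]$ (the paper's Lemma~\ref{lem:to-bounded}); after that reduction one is still left with bounding the Cartesian Fourier mass of a single bounded function, and permutation invariance by itself does not make that mass small.

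What is missing is the one feature of $B$ your proposal never invokes: because $B$ is a $k$-XOR oracle, $B'(x) = 0$ whenever $\bigoplus_{i=1}^{k} x_i \neq 0_m$. The paper exploits this by writing $B' = f \cdot g$ with $f(x) = \one\{\bigoplus_{i \le k} x_i = 0\}$ and $g$ independent of $x_1$, and shows (Lemma~\ref{lem:non-cartesian}) that this forces $\|(B')^C\|_2^2 \leq 2^m \mu^2 + 2^{1-m}\mu$ with $\mu \leq 1/\binom{r}{k}$. Combined with the $2^{-t}$ factor from the obfuscation, this yields the terms $2^{m-t}/\binom{r}{k}$ and $2^{1-m-t}$, each of which is at most $2^{-2t}$ under the hypotheses $t \le m$ and $\binom{r}{k} \ge 2^n$. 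Without this structural use of the $k$-XOR constraint, a permutation-symmetric bounded function can place essentially all its Fourier mass on Cartesian characters, and the argument stalls at $2^{-t}$.
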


Note that cases where at least one of $\mathcal{K}_1,\mathcal{K}_2$ is not obtained (that is, when in at least one of the two iterations, Algorithm~$B$ fails to find a solution to the $(k,2^m, r)$-XOR problem) are not counted as equality between $\mathcal{K}_1$ and $\mathcal{K}_2$.

We note that each of $(x_1^{(1)},\ldots,x_r^{(1)})$ and $(x_1^{(2)},\ldots,x_r^{(2)})$ is likely to have about $2^{t}$ solutions to $k$-XOR, but only $O(1)$ of these are common to both and typically correspond to $k$-XOR solutions for $(\z_1,\ldots,\z_r)$. Therefore, the lemma essentially asserts that $B$ cannot do much better than output a uniform solution to each $x$-problem.

\subsection{Proof of the main reduction lemma}
\label{sec:mainred}

We prove now that the assertion of Lemma~\ref{lem:mainxor} follows from Lemma~\ref{Cor:Obfuscation}. The proof is a rather standard probabilistic argument. Afterwards we present the considerably more complex proof of Lemma~\ref{Cor:Obfuscation}.

\begin{proof}[Proof of Lemma~\ref{lem:mainxor}, assuming Lemma~\ref{Cor:Obfuscation}]
		Consider a slightly tweaked obfuscation process which has exactly $L$ iterations (and may output multiple solutions). Clearly, the success probability of the tweaked obfuscation process is identical to the original one, and thus we analyze it instead.
		For any $1 \leq l \leq L$, let $\mathcal{K}_l$ be the $k$-tuple obtained in the $l$'s iteration ($\mathcal{K}_l$ exists only when B succeeds, i.e. with probability $\beta$).
		Denote by $S_l$ the event that $\mathcal{K}_l$ admits a solution to the $(k, 2^n, r)$-XOR problem.
		We have $\Pr[S_l]= 2^{m-n}\beta$ for each $l = 1,\ldots,L$, since $z$ is uniformly random, and $m$ out of the $n$ dimensions of $\bigoplus_{i \in \mathcal{K}_l} z_i$ are known to nullify, independently of the other, beforehand-erased, $n - m$ dimensions. (In other words, we know that $\bigoplus_{i \in \mathcal{K}_l} z_i$ belongs to the kernel of a randomly chosen full-rank linear transformation $T:\field_2^{n} \to \field_2^m$, and hence, $\Pr[\bigoplus_{i \in \mathcal{K}_l} z_i=0_n]=2^{m-n}$.)

		Define the random variables
		\[
		Z'
		\defeq
		\sum_{l=1}^{L} \one\{S_l\} -\sum_{1 \leq l < l' \leq L} \one\{\mathcal{K}_l = \mathcal{K}_{l'}\} , \qquad Z \defeq \max(Z', 0),
		\]
		where $\one\{E\}$ is the indicator of the event $E$.
		A simple inclusion-exclusion-like principle shows that $Z'$ lower bounds the number of distinct solutions found for the $(k, 2^n, r)$-XOR problem in (tweaked) Algorithm~\ref{alg:xor}. As the number of solutions is non-negative, $Z$ lower bounds it as well.
		The Paley-Zygmund inequality~\cite[ch.2]{BoucheronL13}, applied for the non-negative random variable $Z$, implies
		\[
			\pr[Z > 0] \geq \frac{\be[Z]^2}{\be[Z^2]}.
		\]
		Since $\pr[Z > 0]$ lower bounds the probability that (tweaked) Algorithm~\ref{alg:xor} solves the $(k, 2^n, r)$-XOR problem, our task is reduced to lower bounding $\be[Z]^2$, and upper bounding $\be[Z^2]$. The value of $\be[Z]$ is easily bounded as
		\[
			\be[Z] \geq \be[Z'] = \sum_{l=1}^{L} \pr[S_l] - \sum_{1 \leq l < l' \leq L} \pr[\mathcal{K}_l = \mathcal{K}_{l'}].
		\]
		Using $\be[S_l] = 2^{m-n} \beta$ and Lemma~\ref{Cor:Obfuscation}, we obtain $\be[Z] \geq L\cdot 2^{m-n}\beta - \binom{L}{2} 2^{2(1+m-n)}$. We choose $L=\beta \cdot 2^{n-m-2}$, and obtain
		\[
		\be[Z]\geq \beta^2/8.
		\]
		We henceforth upper bound $\be[Z^2]$.
		Let $D$ be the number of distinct $k$-tuples $T\seq [r]$ with $\bigoplus_{i \in T} z_i = 0_n$ (that is, the number of actual solutions for the $(k, 2^n, r)$-XOR problem in the set $\{\z_1,\ldots,\z_r\}$).
		Since $Z$ is not larger than the number of solutions found in (tweaked) Algorithm~\ref{alg:xor}, we have $Z \leq D$, and in particular, $\be[Z^2] \leq \be[D^2]$.
		
		Note that for different $k$-tuples $T, T'$, the events $\bigoplus_{i \in T} z_i = 0_n$ and $\bigoplus_{i \in T'} z_i = 0_n$ are independent, and each of them has probability $2^{-n}$. Hence,
		\[
			\be[D^2] = \binom{r}{k}2^{-n} + \binom{r}{k} \left(\binom{r}{k} - 1 \right) 2^{-2n} \leq \binom{r}{k}^2 2^{1-2n},
		\]
		where the ultimate inequality holds since $\binom{r}{k} \geq 2^n$ by assumption. Therefore, the algorithm succeeds with probability at least
		\[
			\pr[Z > 0] \geq \frac{\be[Z]^2}{\be[D^2]} \geq 			
			\frac{\beta^4}{128 \binom{r}{k}^2 2^{-2n}}.
		\]
		The running time of Algorithm~\ref{alg:xor} (including the $\wt{O}(r)$ additional overhead of each iteration)
		is
		\[
			L \cdot (\mathcal{T} + \wt{O}(r)) \leq 2^{n-m} \cdot (\mathcal{T}+\wt{O}(r)).
		\]
		This completes the proof of the lemma.
\end{proof}


\subsection{Proof of the obfuscation lemma}

In this section we prove Lemma~\ref{Cor:Obfuscation}. We start by introducing a distribution that models two independent outputs of the obfuscation process, and restate the obfuscation lemma.

	\begin{definition}\label{def:noise}
	We say that a pair of random variables $(x^{(1)},x^{(2)})$, each taking values in $\field_2^{r \times m}$, has an $(m,r,t)$-distribution, if there exist random variables
	$\z,\, T^{(j)}$ for $j \in \{1,2\}$
	where:
	\begin{enumerate}[noitemsep]
		\item $\z,\, T^{(1)},\, T^{(2)}$,
		are independent random variables.

		\item $\z \sim \field_2^{r \times (m+t)}$ is uniformly distributed.

		\item $\func{T^{(j)}}{\field_2^{m+t}}{\field_2^m}$ is a uniformly random, full-rank (i.e., rank $=m$), linear transformation.

		\item $x^{(j)}_i = T^{(j)}(\z_i)$.
	\end{enumerate}
	\end{definition}

	\begin{lemma}\label{lem:xs-bound}
		Let $B$ be an algorithm that receives as input a list of $r$ vectors, each of length $m$ bits, and outputs the indices of $k>0$ vectors among them whose XOR is $0_m$ (or a failure string).
		If $(x,y)$ has an $(m,r,t)$-distribution, and $P,Q\sim S_r$ are two uniformly random and independent permutations, then
		\begin{equation}\label{eq:xs-bound}
			\Pr[P^{-1}(B(P(x))) = Q^{-1}(B(Q(y)))] \leq 2^{-2t} +  \frac{2^{m-t}}{\binom{r}{k}} + 2^{-t+1-m},
		\end{equation}
		where the probability is taken over $B$'s randomness, $x,y$ and $P,Q$ (The event on the left hand
		side is contained in the event that both executions $B(P(x))$, $B(Q(y))$ succeed).
	\end{lemma}
	
	Notice that Lemma~\ref{Cor:Obfuscation} immediately follows from Lemma~\ref{lem:xs-bound} (compare~\eqref{eq:indep} with~\eqref{eq:xs-bound}).

\subsubsection{Proof outline} The proof of Lemma~\ref{lem:xs-bound} uses techniques from \emph{discrete Fourier analysis} and consists of several steps.
\begin{enumerate}
	\item \textbf{Transformation to real-valued functions.}
	We show that instead of analyzing the obfuscation on a tuple-valued function, it is sufficient to analyze its action on the simpler class of real-valued functions.
	We utilize the fact that our obfuscation randomly permutes the input vectors, so that any oracle $\func{B}{\zo^{r\times m}}{\binom{[r]}{k}}$ must, informally, treat all candidate output $k$-tuples in the same way.
	Hence it suffices to analyze the modified, real-valued, oracle $\func{B'}{\zo^{r\times m}}{[0,1]}$ which indicates the probability that $B$ outputs the specific $k$-tuple $K \defeq \{1,\ldots,k\}$ when applied on its input. Specifically, our task is reduced to showing that
	\begin{equation}\label{eq:intro-to-bounded}
	 	\be[B'(y) B'(y')] \leq O_k(2^{-2t} / r^k),
	 \end{equation}
	where $y, y'$ are two independent obfuscations of a common, random, $\z \in \zo^{r\times n}$.
	
	\item \textbf{Bounding the correlation using discrete Fourier analysis.} In order to prove~\eqref{eq:intro-to-bounded}, we consider the Fourier expansion of $B'$, namely,
	\[B'=\sum \wh{B}'(S) \chi_S, \qquad \mbox{where} \qquad \chi_S(v) = (-1)^{\sum_{i \in S}v_i}.
	\]
	We divide the Fourier expansion into two parts -- the \emph{Cartesian} part
	\[(B')^C = \sum_{\{S = U \times V: U \subseteq [r], V \subseteq [m]\}} \wh{B'}(S)\chi_S,
	\]
	and the non-Cartesian part $(B')^{\perp}=B'-(B')^C$ (which is orthogonal to $(B')^C$). Informally, the contribution of the Cartesian part to the correlation corresponds to the information on \emph{aligned XORs} of variables (such as $(\z_{1,2} \oplus \z_{1,3}) \oplus (\z_{4,2} \oplus \z_{4,3})$) preserved between the two obfuscations, while the contribution of the non-Cartesian part carries the rest of the information. Then we handle each part of the correlation separately:
		\begin{enumerate}
			\item \emph{Obfuscation hides everything but aligned XORs.} We show that for any function $B'$, the obfuscation reduces the contribution to the left hand side of~\eqref{eq:intro-to-bounded}, associated with the non-Cartesian part, to at most $2^{-2t}$. This argument depends only on the obfuscation, and does not rely on the specific problem we try to solve.
			
			\item \emph{Aligned XORs do not reveal much.} We show that in the case of the $k$-XOR problem, the contribution of the Cartesian part is also small. Here we use the specific structure of the problem -- specifically, the set $\{x:B'(x) >0\}$ being small and admitting a nice algebraic structure (namely, $B'(x)=0$ whenever $\bigoplus_{i=1}^k x_i \neq 0_m$).
		\end{enumerate}
	Combination of the two bounds completes the proof.
\end{enumerate}

\subsubsection{Transformation to real-valued functions}

	\begin{lemma}\label{lem:to-bounded}
		Let $\func{B}{\zo^{r\times m}}{\binom{[r]}{k}}$ be an algorithm that outputs either a $k$-tuple $R$ with $\bigoplus_{i \in R} x_i = 0_m$, or a failure string.
		Let $K \defeq \{1,\ldots , k\}$ and define $\func{B'}{\zo^{r\times m}}{[0,1]}$ by
		\begin{equation}\label{eq:2}
			B'(x) = \be_{\substack{P \sim S_r \\\bar{T} \sim GL_m(\field_2)}}\li[ \one\{B(\bar{T}(P(x))) = P(K)\}\ri],
		\end{equation}
		where $P \sim S_r$ is a uniformly random permutation, and $\bar{T} \sim GL_m(\field_2)$ is a uniformly random invertible linear map. The expectation in~\eqref{eq:2} is taken also over $B$'s randomness.
		Then,
		\begin{equation}\label{eq:annihilate}
			\bigoplus_{i \in K} x_i \neq 0 \  \implies \  B'(x) = 0,
		\end{equation}
		\begin{equation}\label{eq:measure-bound}
			\be_x[B'(x)] \leq 1/\binom{r}{k},
		\end{equation}
		and if $(x,y)$ has an $(m,r,t)$-distribution and $P,Q \sim S_r$ are independent, then
		\begin{equation}\label{eq:to-bounded-need}
			 \Pr[P^{-1}(B(P(x))) = Q^{-1}(B(Q(y)))] = \binom{r}{k} \be_{x,y}[B'(x) B'(y)].
		\end{equation}
	\end{lemma}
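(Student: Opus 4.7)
The plan is to handle parts~\eqref{eq:annihilate},~\eqref{eq:measure-bound}, and~\eqref{eq:to-bounded-need} by elementary symmetry and substitution arguments, saving the main idea for the last step of~\eqref{eq:to-bounded-need}. For~\eqref{eq:annihilate}, I would unwind the definitions: the event $B(\bar T(P(x))) = P(K)$ forces the XOR of the rows of $\bar T(P(x))$ indexed by $P(K)$ to vanish, and using that $\bar T$ acts rowwise together with $(P(x))_{P(i)} = x_i$, this XOR equals $\bar T\!\left(\bigoplus_{i \in K} x_i\right)$; since $\bar T \in GL_m(\field_2)$ is invertible, the hypothesis $\bigoplus_{i \in K} x_i \neq 0_m$ makes the indicator vanish for every $P$ and $\bar T$. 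For~\eqref{eq:measure-bound}, I would observe that for uniform $x$ and random $P, \bar T$, the pair $(\bar T(P(x)), P(K))$ is uniform on $\{0,1\}^{r\times m}\times\binom{[r]}{k}$; writing $f_{K'}(u) \defeq \Pr_\omega[B(u) = K']$ and using $\sum_{K'} f_{K'}(u) \leq 1$, the desired bound follows by averaging.

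For~\eqref{eq:to-bounded-need}, I would introduce the shorthand $\alpha(x, K') \defeq \Pr_{P,\omega}[P^{-1}B(P(x)) = K' \mid x]$. Conditional on $(x, y)$, the two arms on the left-hand side are independent (since $B$ uses independent internal randomness in its two calls and the permutations $P, Q$ are independent of each other and of $(x, y)$), so the left-hand side decomposes as $\sum_{K'} \mathbb{E}_{x,y}[\alpha(x, K')\alpha(y, K')]$. Next I would show that every summand equals $\mathbb{E}_{x,y}[\alpha(x, K)\alpha(y, K)]$: the change of variable $P \mapsto P\sigma^{-1}$ with any $\sigma$ satisfying $\sigma(K) = K'$ gives $\alpha(x, K') = \alpha(\sigma^{-1}(x), K)$, while the joint distribution of $(x, y)$ is invariant under the simultaneous coordinate permutation $(x, y) \mapsto (\sigma^{-1}(x), \sigma^{-1}(y))$ because the underlying $\z$ in the $(m, r, t)$-distribution is uniform and hence exchangeable in its rows. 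Thus the left-hand side equals $\binom{r}{k}\,\mathbb{E}_{x,y}[\alpha(x, K)\alpha(y, K)]$.

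The final and most interesting step is to introduce the extra averages over $\bar T_1, \bar T_2$ from the definition of $B'$ at no cost. The crucial input here is that the $(m, r, t)$-distribution of $(x, y)$ is also invariant under $(x, y) \mapsto (\bar T_1(x), \bar T_2(y))$ for independent uniform $\bar T_1, \bar T_2 \in GL_m(\field_2)$, since $\bar T_j T^{(j)}$ is again a uniform full-rank linear map $\field_2^{m+t} \to \field_2^m$ whenever $T^{(j)}$ is. Combined with the rowwise commutation $P(\bar T(x)) = \bar T(P(x))$, this lets me insert the $\bar T_j$'s and rewrite $\mathbb{E}_{x,y}[\alpha(x, K)\alpha(y, K)]$ as $\mathbb{E}_{x,y,\bar T_1,\bar T_2}[\alpha(\bar T_1(x), K)\alpha(\bar T_2(y), K)] = \mathbb{E}_{x,y}[B'(x)B'(y)]$, finishing the proof. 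The only real bookkeeping hazard I anticipate is keeping distinct the two different actions (rowwise $\bar T$ versus index-permutation $P$) and the independence structure of $B$'s randomness across the two calls; beyond that the proof is a routine sequence of substitutions.
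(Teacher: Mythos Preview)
Your proposal is correct and follows essentially the same approach as the paper's proof: the paper likewise proves~\eqref{eq:annihilate} by invertibility of $\bar T$, proves~\eqref{eq:measure-bound} by the substitution $x' = \bar T(P(x))$, and proves~\eqref{eq:to-bounded-need} by decomposing over $K'$, using permutation symmetry of the $(m,r,t)$-distribution to equate all summands, and invoking the invariance of that distribution under $(x,y)\mapsto(\bar T'(x),\bar T''(y))$ to insert the $GL_m$ averages. The only cosmetic difference is the order of the two invariance steps---the paper inserts the $GL_m$ maps first (defining $B'_{K'}$ with the $\bar T$ average already built in) and then applies permutation symmetry, whereas you apply permutation symmetry to the quantities $\alpha(\cdot,K')$ first and insert the $GL_m$ maps at the end.
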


\begin{remark}
We note that while the obfuscation algorithm uses a full-rank shrinking transformation $T$ from $\field_2^{m+t}$ to $\field_2^m$, this transformation does not appear explicitly in Lemma~\ref{lem:to-bounded}. Instead, it appears implicitly via the assumption that $(x,y)$ has an $(m,r,t)$-distribution (made just before~\eqref{eq:to-bounded-need}), and plays a central role in the proof of~\eqref{eq:to-bounded-need}.
\end{remark}

		\begin{proof}
			To show~\eqref{eq:annihilate} note that if $\bigoplus_{i \in K} x_i \neq 0$ then $B$ cannot output $P(K)$ on the input $\bar{T}(P(x))$, by our assumption on $B$, and $\bar{T}$ being invertible. Hence $B'(x) = 0$ in such a case.

			To verify~\eqref{eq:measure-bound}, denote $x' = \bar{T}(P(x))$ and observe that when $x \sim \{0,1\}^{r \times m}$, we have  $x' \sim \{0,1\}^{r \times m}$ independently of $P$. Hence, by interchanging order of summation,
			\begin{align*}
				\be_{x}[B'(x)] &= \be_{P,\bar{T}}[\be_{x}[\one\{B(\bar{T}(P(x))) = P(K)\}]] = \be_{P,\bar{T}}[\be_{x'}[\one\{B(x') = P(K)\}]] \\
				&=  \be_{x'}[\be_{P}[\one\{B(x') = P(K)\}]] \leq 1/\binom{r}{k},
			\end{align*}
			where the latter inequality holds because for any fixed $x'$, $P(K)$ attains the value of $B(x')$ with probability at most $1/\binom{r}{k}$.

			In order to prove~\eqref{eq:to-bounded-need}, we reason about $\be_{x,y}[B'(x) B'(y)]$. Observe that for any $K' \seq [r]$ with $|K'| = k$, the function $B'_{K'}$ defined by $B'_{K'}(x) = \be_{P,\bar{T}}[\one\{B(\bar{T}(P(x))) = P(K')\}]$ satisfies
			\begin{equation}\label{eq:3}
			\be_{x,y}[B'_{K'}(x) B'_{K'}(y)] =  \be_{x,y}[B'(x) B'(y)].
			\end{equation}
			Indeed, let $R \in S_r$ be such that $R(K) = K'$. As $(R(x), R(y))$ has the same distribution as $(x, y)$, we have
			\[
			\begin{aligned}
				\mathbb{E}_{x,y}[B'_{K'}&(x) B'_{K'}(y)]
				= \mathbb{E}_{x,y}[B'_{K'}(R(x)) B'_{K'}(R(y))] \\
				&=
				\be_{x,y}\big[\be_{P', \bar{T}'}[\one\{B(\bar{T}'(P'R(x)))=P'(K')\}] \be_{P'', \bar{T}''}[\one\{B(\bar{T}''(P''R(y)))=P''(K')\}]\big] \\
				&= \be_{x,y}\big[\be_{P', \bar{T}'}[\one\{B(\bar{T}'(P'(x)))=P'R^{-1}(K')\}] \be_{P'', \bar{T}''}[\one\{B(\bar{T}''(P''(y)))=P''R^{-1}(K')\}]\big] \\
				&=
				\be_{x,y}[B'(x) B'(y)].
			\end{aligned}
			\]
			Notice that if $(x,y)$ has an $(m,r,t)$-distribution, and $\bar{T}',\bar{T}'' \sim GL_m(\field_2)$ are uniformly random invertible linear maps independent of all other variables, then $(\bar{T}'(x), \bar{T}''(y))$ has an $(m,r,t)$-distribution as well. We verify~\eqref{eq:to-bounded-need}:
			\[
			\begin{aligned}
				\Pr_{x,y,P,Q}& \big[P^{-1}(B(P(x))) = 
				Q^{-1}(B(Q(y)))\big] \\
				&= \sum_{K'}  \be_{x,y,P,Q}\big[\one\{P^{-1}(B(P(x)))=K'\} \one\{Q^{-1}(B(Q(y)))=K'\} \big] \\
				&= \sum_{K'} \be_{x,y,P,Q,\bar{T}', \bar{T}''}\big[\one\{P^{-1}(B(P(\bar{T}'(x))))=K'\} \one\{Q^{-1}(B(Q(\bar{T}''(y))))=K'\} \big] \\
				&= \sum_{K'} \be_{x,y}\big[B'_{K'}(x) B'_{K'}(y)\big] \\
				&= \binom{r}{k} \be_{x,y} \big[ B'(x) B'(y)  \big],
			\end{aligned}
			\]
			where the penultimate equality holds since $P$ (resp.~$Q$) commutes with $\bar{T}'$ (resp.~$\bar{T}''$), and the ultimate equality uses~\eqref{eq:3}.
		\end{proof}

	\subsubsection{Obfuscation hides everything but aligned XORs}\label{ssec:counts}

We begin with the standard definition of the Fourier-Walsh expansion of functions on the discrete cube, adapted to our setting.
		\begin{definition}[Fourier expansion]
			Given $S \seq [r] \times [m]$, define $\func{\chi_{S}}{\zo^{r\times m}}{\spm}$ by $\chi_S(x) = (-1)^{\sum_{(i, j) \in S} x_{i,j}}$. The set $\{\chi_S\}_{S \seq [r]\times [m]}$ is an orthonormal basis for the set of functions $\set{f}{\func{f}{\zo^{r\times m}}{\reals}}$, with respect to the standard inner product $\langle f,g \rangle = \be_{x\sim \zo^{r\times m}}[f(x)g(x)]$. Hence each $\func{f}{\zo^{r\times m}}{\reals}$ can be decomposed to
			\[
				f = \sum_{S \seq [r] \times [m]} \wh{f}(S) \chi_S,
			\]
			where $\wh{f}(S) = \langle f,\chi_S \rangle$, and in particular, $\wh{f}(\es) = \mathbb{E}[f]$.
		\end{definition}

		\begin{definition}[Cartesian decomposition]
			Given $S \seq [r] \times [m]$, we call $S$ a \emph{Cartesian product} if there exist $U \seq [r]$ and $V \seq [m]$ such that $S = U \times V$.

			The Fourier expansion allows decomposing any function $\func{f}{\zo^{r\times m}}{\reals}$ into Cartesian and non-Cartesian parts:
			\[
				f = f^{C} + f^{\perp} = \li( \sum_{S\text{ Cartesian product}} \wh{f}(S) \chi_S \ri) + \li( \sum_{S\text{ non Cartesian product}} \wh{f}(S) \chi_S\ri),
			\]
			where $\langle f^{C}, f^{\perp} \rangle = 0$.
		\end{definition}

		\begin{definition}[Cartesian functions]
			A function $\func{f}{\zo^{r\times m}}{\reals}$ is called a \emph{Cartesian function} if $f=f^C$.
		\end{definition}

		\begin{lemma}\label{lem:noise}
			Let $\func{f}{\zo^{r\times m}}{\reals}$. Suppose $(x, y)$ has an $(m,r,t)$-distribution. Then
			\[
				\cov(f(x), f(y)) \leq 2^{-t} \norm{f^{C}}_2^2 + 2^{-2t} \norm{f^{\perp}}_2^2.
			\]
		\end{lemma}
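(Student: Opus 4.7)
The plan is to expand $\cov(f(x), f(y))$ in the Fourier basis $\{\chi_S\}_{S\subseteq [r]\times[m]}$, split $f=f^C+f^\perp$, verify that the Cartesian/non-Cartesian cross covariance vanishes, and then bound $\cov(f^C(x),f^C(y))$ and $\cov(f^\perp(x),f^\perp(y))$ separately. The pivotal computation is the joint Fourier inner product $\be[\chi_S(x)\chi_{S'}(y)]$. Writing the $i$-th row of the indicator of $S$ (resp.\ $S'$) as $\alpha_i$ (resp.\ $\alpha'_i$) in $\field_2^m$, and using the row-wise independence in the $(m,r,t)$-distribution together with $x_i=T^{(1)}(z_i)$, $y_i=T^{(2)}(z_i)$, I would first derive
\[
\be[\chi_S(x)\chi_{S'}(y)]=\Pr_{T^{(1)},T^{(2)}}\!\left[\forall i\in [r]:\;(T^{(1)})^{\!\top}\alpha_i=(T^{(2)})^{\!\top}\alpha'_i\right].
\]

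Next, using that each $(T^{(j)})^{\!\top}$ is a uniformly random injective linear map $\field_2^m\to\field_2^{m+t}$, I would show that this probability is nonzero precisely when (i) the row-supports $A=\{i:\alpha_i\ne 0\}$ and $\{i:\alpha'_i\ne 0\}$ coincide and (ii) the assignment $\alpha_i\mapsto\alpha'_i$ on $A$ extends to a linear isomorphism between $\sspan\{\alpha_i\}_{i\in A}$ and $\sspan\{\alpha'_i\}_{i\in A}$; under this matching condition, writing $d$ for the common span dimension, the probability equals $\prod_{j=0}^{d-1}(2^{m+t}-2^j)^{-1}$. Because a Cartesian $S=U\times V$ has $d\le 1$ while a non-Cartesian $S$ has $d\ge 2$, matching is impossible when one of $S,S'$ is Cartesian and the other is not, so the cross covariance vanishes and $\cov(f(x),f(y))=\cov(f^C(x),f^C(y))+\cov(f^\perp(x),f^\perp(y))$.

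For the Cartesian part, matching also forces $U=U'$ and the covariance reduces to $(2^{m+t}-1)^{-1}\sum_{U\ne\emptyset}\big(\sum_{V\ne\emptyset}\wh f(U\times V)\big)^{2}$; applying Cauchy--Schwarz to the inner sum and then the elementary inequality $(2^m-1)\cdot 2^t\le 2^{m+t}-1$ yields $\cov(f^C(x),f^C(y))\le 2^{-t}\norm{f^C}_2^2$. For the non-Cartesian part I would use $|\wh f(S)\wh f(S')|\le\tfrac12(\wh f(S)^2+\wh f(S')^2)$ and symmetrize to reduce to bounding $\sum_{S'\text{ matches }S}p(S,S')$ for a fixed non-Cartesian $S$ of span dimension $d\ge 2$; since the number of $S'$ matching a given $S$ is $\prod_{j=0}^{d-1}(2^m-2^j)$, this sum evaluates to $\prod_{j=0}^{d-1}\tfrac{2^m-2^j}{2^{m+t}-2^j}$, each factor is at most $2^{-t}$ by the same elementary inequality, and so the product is at most $2^{-dt}\le 2^{-2t}$, giving $\cov(f^\perp(x),f^\perp(y))\le 2^{-2t}\norm{f^\perp}_2^2$. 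The main technical obstacle is establishing the matching characterization and the closed-form pointwise probability $\prod_{j=0}^{d-1}(2^{m+t}-2^j)^{-1}$, which requires a careful count of how many uniform full-rank $T^{(1)}$ compatibly realize a prescribed linear map on a specified $d$-dimensional subspace of $\field_2^m$; once that is in hand, the Cartesian bookkeeping (Cauchy--Schwarz costs $2^m$ but the single matching factor gains $2^{-(m+t)}$) and the non-Cartesian bookkeeping ($\sim 2^{2m}$ matches offset by two matching factors of $\sim 2^{-(m+t)}$ each) fall into place.
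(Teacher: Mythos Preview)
Your proof is correct and follows the same Fourier-analytic route as the paper --- expand, pass to the dual maps $(T^{(j)})^\top$, and bound $\sum_{S'}\be[\chi_S(x)\chi_{S'}(y)]$ for each fixed $S$ --- but you do considerably more work than necessary. The paper applies the symmetrization $\wh f(S)\wh f(S')\le\tfrac12(\wh f(S)^2+\wh f(S')^2)$ uniformly to \emph{all} pairs $(S,S')$ up front, then notes that by injectivity of $T_2^\ast$ the inner sum $\sum_{S'}|\be[\chi_S(x)\chi_{S'}(y)]|$ collapses to $\Pr[T_1^\ast(S)\in\im(T_2^\ast)]$; this probability is bounded by $2^{-t}$ via any single nonzero row, and by $2^{-2t}$ via any two distinct nonzero rows (which exist precisely when $S$ is non-Cartesian). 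No splitting of $f$, no cross-covariance argument, no exact matching characterization, no closed form $\prod_j(2^{m+t}-2^j)^{-1}$, and no Cauchy--Schwarz are needed. Your precise computations do yield the same numbers (indeed your $\sum_{S'\text{ matches }S}p(S,S')=\prod_{j}\frac{2^m-2^j}{2^{m+t}-2^j}$ is exactly $\Pr[T_1^\ast(S)\in\im(T_2^\ast)]$), and your observation that the Cartesian/non-Cartesian cross covariance vanishes exactly is a nice structural fact, but for the stated inequality the paper's two-line probability estimate is all that is required.
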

		
		Recall that the goal of the obfuscation process is to reduce the correlation between different obfuscations of the same input $z$ (which correspond to different iterations of Algorithm~\ref{alg:xor} described above) to $O_k(1) \cdot 2^{-2t}$. In this respect, the lemma asserts that the obfuscation hides the non-Cartesian part, which corresponds to everything except for \emph{aligned XORs} (that is, expressions of the form $\bigoplus_{i \in I} \bigoplus_{j \in J} x_{i,j}$).
		
		\begin{proof}
			Write $f = \sum_{S} \wh{f}(S) \chi_S$. We have
			\begin{equation}\label{eq:noise-compute1}
			\begin{aligned}
				\cov(f(x), f(y))
				&=
				\be_{(x,y)}\big[(\sum_{S} \wh{f}(S) \chi_S(x))(\sum_{S'} \wh{f}(S') \chi_{S'}(y))\big] - \be[f]^2
				\\ & =
				\be\big[\sum_{S, S'} \wh{f}(S) \wh{f}(S') \chi_S(x) \chi_{S'}(y) \big] - \be[f]^2
				\\ & \leq
				\sum_{S} \wh{f}(S)^2 \Big(\sum_{S'} \big|\be\big[ \chi_S(x) \chi_{S'}(y) \big] \big| \Big)   - \wh{f}(\es)^2,
			\end{aligned}
			\end{equation}
			where the last step uses the inequality $\wh{f}(S) \wh{f}(S') \leq (\wh{f}(S)^2+\wh{f}(S')^2)/2$, applied for all $S,S'$.
			
			In order to analyze $\be\big[ \chi_S(x) \chi_{S'}(y) \big]$, let us recall how $(x,y)$ is distributed according to Definition~\ref{def:noise}. We draw a uniformly random $\z \sim \field_2^{r\times (m+t)}$ and two uniformly random rank-$m$ linear maps $\func{T_1, T_2}{\field_2^{m+t}}{\field_2^m}$ and define $(x, y) = (T_1(\z), T_2(\z))$.
		
Observe that there exist linear maps $T^{\ast}_1, T^{\ast}_2: (\field_2^{m})^{\ast} \to (\field_2^{m+t})^{\ast}$ such that for each $S=(S_1,\ldots,S_r) \subseteq [r] \times [m]$, we have
\[
\chi_S(x)=\chi_S(T_1(\z))=\chi_{T^{\ast}_1 S}(\z) \qquad \mbox{and} \qquad \chi_S(y)=\chi_S(T_2(\z))=\chi_{T^{\ast}_2 S}(\z),
\]
where the $S_i$'s are regarded as elements of $(\field_2^{m})^{\ast} \cong \field_2^{m}$.

Formally, consider the dual linear maps $T^{\ast}_1, T^{\ast}_2: (\field_2^{m})^{\ast} \to (\field_2^{m+t})^{\ast}$, defined by
			\[
				T^{\ast}_l(f)(a) \defeq f(T_l(a)),\qquad l=1,2
			\]
			where $\func{f}{\field_2^m}{\field_2}$ is a linear functional, and $a \in \field_2^{m+t}$ is a vector. (Matrix-wise, the representing matrix of $T^{\ast}_l$ according to the (dual) standard basis, is the transpose of the representing matrix of $T_l$ in the standard basis.) Note that each $S \subseteq [r] \times [m]$ naturally corresponds to an $r$-tuple of linear functionals $(S_1, \ldots, S_r)$, where $S_i(b)=\bigoplus_{j=1}^{m} ( b_j \cdot \one\{(i,j) \in S\} )$ for any $b \in \field_2^m$. Thus, we may slightly abuse notation and write $S=(S_1,\ldots,S_r)$, and subsequently, define $T^{\ast}_l(S) = (T^{\ast}_l(S_1), \ldots, T^{\ast}_l(S_r))$ and regard its outputs as elements of $[r] \times [m+t]$.

			Since for each $S=(S_1,\ldots,S_r)$, we have $\chi_S(x)=\chi_S(T_1(x))=\chi_{T^{\ast}_1 S}(\z)$ and
			$\chi_{S'}(y)=\chi_{T^{\ast}_2 S'}(\z)$ as was written above, and since
			$\be[\chi_{A}(\z) \chi_{B}(\z)] = \one\{A=B\}$, we can write~\eqref{eq:noise-compute1} as
			\begin{align*}
				\cov(f(x), f(y))
				&\leq
				\sum_{S} \wh{f}(S)^2 \Big(\sum_{S'} \big|\be_{T_1,T_2}\be_{\z} \big[ \chi_{T^{\ast}_1 S}(\z) \chi_{T^{\ast}_2 S'}(\z) \big] \big| \Big)   - \wh{f}(\es)^2 \\
				&\leq
				\sum_{S} \wh{f}(S)^2 \Big(\sum_{S'} \Pr\big[T_1^{\ast}(S) = T_2^{\ast}(S')\big] \Big)   - \wh{f}(\es)^2.
			\end{align*}
			Noting that $T_2^{\ast}$ is injective (as $T_2$ is of full rank, and duality preserves rank), we conclude
			\[
				\cov(f(x), f(y))
				\leq
				\sum_{S \neq \es} \wh{f}(S)^2 \Pr\big[T_1^{\ast}(S) \in \im(T_2^{\ast})\big] =
				\sum_{S \neq \es} \wh{f}(S)^2 \Pr\Big[ \bigwedge_{i=1}^r \big[T_1^{\ast}(S_i) \in \im(T_2^{\ast})\big]\Big].
			\]		

			To see that $\Pr[T_1^{\ast}(S) \in \im(T_2^{\ast})] \leq 2^{-t}$ for all $S \neq \es$, choose a nonempty row $S_i$ in $S$, and observe that the probability $\Pr[T_1^{\ast}(S_i) \in \im(T_2^{\ast})]$ is the same as the probability that a specific non-zero vector in $\field_2^{m+t}$ is inside a random subspace of $\field_2^{m+t}$ of dimension $m$. This probability is $\frac{2^m-1}{2^{m+t}-1} \leq 2^{-t}$.

			We furthermore claim that $\Pr[T_1^{\ast}(S) \in \im(T_2^{\ast})] \leq 2^{-2t}$ whenever $S$ is not a Cartesian product. Indeed, if we choose two different nonempty `rows' $S_i$, $S_j$ of $S$ (which is possible as $S$ is not a Cartesian product), the probability $\Pr[T_1^{\ast}(S) \in \im(T_2^{\ast})]$ is upper bounded by the probability that both $T_1^{\ast}(S_i) \in \im(T_2^{\ast})$ and $T_1^{\ast}(S_j) \in \im(T_2^{\ast})$. Since $T_1^{\ast}(S_i)$, $T_1^{\ast}(S_j)$ are distinct, the aforementioned probability is $\frac{2^m-1}{2^{m+t}-1} \cdot \frac{2^m-2}{2^{m+t}-2} \leq 2^{-2t}$.
			Overall,
			\[
			\begin{aligned}
				\cov(f(x), f(y))
				&\leq
				2^{-t} \sum_{S\text{ Cartesian}} \wh{f}(S)^2 + 2^{-2t} \sum_{S\text{ not Cartesian}} \wh{f}(S)^2 \\
				& = 2^{-t} \norm{f^{C}}_2^2 + 2^{-2t} \norm{f^{\perp}}_2^2,
			\end{aligned}
			\]
			where the ultimate equality uses Parseval's identity. This completes the proof.
		\end{proof}

		\subsubsection{Aligned XORs do not reveal much}
		
		\begin{lemma}\label{lem:non-cartesian}
		Let $\func{B'}{\zo^{r\times m}}{\czo}$ have $\be_x[B'(x)]=\mu$. Suppose $B'(x)=0$ whenever $\bigoplus_{i=1}^{k} x_i \neq 0_m$ ($k > 0$). Then
		\[
			\norm{(B')^C}_2^2 \leq 2^m \mu^2 + 2^{1-m} \mu.
		\]
		\end{lemma}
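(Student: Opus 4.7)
The plan is to exploit the fact that $B'$ is supported on the subgroup $C = \{x \in \mathbb{F}_2^{r \times m} : \bigoplus_{i=1}^{k} x_i = 0_m\}$ of codimension $m$. A short calculation identifies its annihilator as $C^\perp = \{K \times V : V \subseteq [m]\}$, where $K = \{1, \ldots, k\}$, so $|C^\perp| = 2^m$. Since $B'$ vanishes off $C$, every Fourier coefficient of $B'$ is constant on cosets of $C^\perp$: if $S_1 \Delta S_2 \in C^\perp$, then $\chi_{S_1}$ and $\chi_{S_2}$ agree on $C$, hence $\widehat{B'}(S_1) = \widehat{B'}(S_2)$. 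Thus the $2^{rm}$ Fourier indices partition into $2^{rm-m}$ cosets of size $2^m$, each carrying a single common Fourier value.

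The combinatorial heart of the argument is to count Cartesian indices in each coset. The \emph{trivial} coset $C^\perp$ itself is entirely Cartesian: it consists of $\emptyset$ together with all $K \times V$ for $V \neq \emptyset$, a total of $2^m$ Cartesian indices, all sharing the Fourier value $\widehat{B'}(\emptyset) = \mu$. Every \emph{non-trivial} coset, on the other hand, contains at most $2$ Cartesian sets. To see this, suppose a non-trivial coset contains a Cartesian index $U \times V$ (necessarily $U, V \neq \emptyset$ and $U \neq K$), and consider $(U \times V) \Delta (K \times V')$ as $V'$ ranges over subsets of $[m]$. A short row-by-row case analysis shows that the symmetric difference is a rectangle only when every non-empty row shares a common column set, which forces $V' \in \{\emptyset, V\}$ and yields precisely the two Cartesian indices $U \times V$ and $(U \Delta K) \times V$.

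Given this structure, the estimate follows quickly from Parseval. Using $B' \in [0,1]$,
\[
    \sum_S \widehat{B'}(S)^2 \;=\; \mathbb{E}\bigl[(B')^2\bigr] \;\leq\; \mathbb{E}[B'] \;=\; \mu,
\]
and since each coset contributes exactly $2^m c^2$ to this sum (with $c$ its common Fourier value), summing over non-trivial cosets alone gives $\sum_{\text{non-triv}} c^2 \leq 2^{-m}\mu$. The trivial coset contributes $2^m \cdot \mu^2$ to $\|(B')^C\|_2^2$ (its $2^m$ Cartesian indices each having squared value $\mu^2$), while each non-trivial coset contributes at most $2 c^2$, so their collective contribution is at most $2 \cdot 2^{-m} \mu = 2^{1-m} \mu$. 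Adding the two yields $\|(B')^C\|_2^2 \leq 2^m \mu^2 + 2^{1-m} \mu$, as required.

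The only non-routine step is the two-Cartesian-per-non-trivial-coset claim, which is essentially a combinatorial statement about when the symmetric difference of two rectangular subsets of $[r] \times [m]$ is again a rectangle with a prescribed row set $K$; once this is unpacked, everything else is bookkeeping via Parseval.
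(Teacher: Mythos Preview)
Your proof is correct and is essentially the same argument as the paper's. Both proofs exploit that $\widehat{B'}$ is constant on cosets of $C^\perp=\{K\times V:V\subseteq[m]\}$, separate the trivial coset (contributing $2^m\mu^2$) from the rest, and establish the same key combinatorial fact that each non-trivial coset contains at most two Cartesian indices; the only difference is packaging---the paper introduces the factorization $B'=f\cdot g$ with $g$ independent of $x_1$ (so that the coset of $S$ is encoded by the unique $V$ with empty first row satisfying $S=U\triangle V$), whereas you work with the cosets directly.
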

		
		\begin{proof}
		Write $f(x) = \one\{\bigoplus_{i=1}^k x_i = 0\}$ and $B'(x) = f(x) \cdot g(x)$, where $g(x)$ does not depend on $x_1$. This is possible since we may restrict our attention to those $x$ with $\bigoplus_{i=1}^k x_i = 0$ ($B'$ and $f$ are zero elsewhere), and for such values of $x$, the value of $x_1$ can be inferred from $x_2,\ldots, x_k$.
		
		Consider the expansion $B' = \sum_{S \seq [r]\times [m]} \alpha_S \chi_S$. One can easily verify that the Fourier expansion of $f$ is
		\[
		f = \sum_{U} \beta_U \chi_U, \qquad \mbox{where} \qquad \beta_U =
		\begin{cases}
		2^{-m}, & \qquad U=[k] \times U', U' \subseteq [m] \\
		0, & \qquad otherwise.
		\end{cases}
		\]
		Denote the Fourier expansion of $g$ by $g = \sum_{V \seq ([r]\sm \{1\}) \times [m]} \gamma_V \chi_V$.

		\medskip		
		
		Since $B'=f \cdot g$, for any $S$ we have $\alpha_S = \sum_{U \triangle V = S} \beta_U \gamma_V$. We claim that for any $S$, this sum consists of a single term, that is, $\alpha_S = \beta_{U_0} \gamma_{V_0}$ for some unique pair $(U_0,V_0)$ with $S = U_0 \triangle V_0$.
		To see this, assume $U \triangle V=S$ and $\beta_U, \gamma_V \neq 0$. Since $g$ does not depend on $x_1$, $V$ does not contain elements of the form $(1, i)$. Hence, given $S$ we may decode $U$ as $U = [k] \times U'$, where $U' = \{i: (1,i) \in S\}$ (recall $U$ is Cartesian, as otherwise $\beta_U=0$). Since we unambiguously determine $U$, we uniquely determine $V = S \triangle U$.
		
		\medskip
		
		Let us now compute $\norm{(B')^C}_2^2 = \sum_{S\text{ Cartesian}} \alpha_S ^ 2$.
		Write each $\alpha_S$ as $\beta_U \gamma_V$. We split the total contribution of the terms $\alpha_S^2 = (\beta_U \gamma_V)^2$ to $\sum \alpha_S^2$ into two cases, depending on $V$.
		
		\medskip \noindent \emph{Case 1: $V = \es$}. Observe that
		\[
		\gamma_V = \be[g] = 2^m \mu.
		\]
		To see this, fix a value $x_2^0,x_3^0,\ldots,x_r^0$ and let $x_1^0$ be such that $\bigoplus_{i=1}^k x_i^0 = 0_m$. Since $g$ does not depend on $x_1$, then for any $x_1$, we have
		\[
		g(x_1,x_2^0,\ldots,x_r^0)=g(x_1^0,x_2^0,\ldots,x_r^0)=B'(x_1^0,x_2^0,\ldots,x_r^0).
		\]
		On the other hand, $B'(x_1,x_2^0,\ldots,x_r^0)=0$ for any $x_1 \neq x_1^0$ (as $B'(x)=0$ for every $x$ s.t.~$\bigoplus_{i=1}^k x_i \neq 0_m$). Hence, each value $x$ with $B'(x) \neq 0$ corresponds to exactly $2^m$ values $x'$ s.t.~$g(x')=B(x)$, obtained from $x$ by changing only the first coordinate. Consequently, $\be[g]=2^m \mu$.
		
		Thus, the total contribution of terms with $V=\es$ is at most
		\[
		\sum_U \beta_U^2 \cdot \gamma_V^2 = 2^m \mu^2,
		\]
		 using the fact that $\sum_U \beta_U^2 = \be[f^2] = \be[f] = 2^{-m}$ by Parseval's identity.
		
		\medskip \noindent \emph{Case 2: $V \neq \es$}. We claim that in this case, for any $V$, there are at most two $U$'s for which $V \triangle U$ is Cartesian.
		Indeed, let $i \in [r]$ be such that $V_i \defeq V \cap (\{i\} \times [m]) \neq \emptyset$. Note that $V \cap (\{1\} \times [m]) = \emptyset$. Hence, if $U = ([k] \times U') \neq \emptyset$, then $(V \triangle U) \cap (\{1\} \times [m]) = U'$. On the other hand,
		\[
		(V \triangle U) \cap (\{i\} \times [m]) =
		\begin{cases}
		      V_i \triangle U', &\qquad i \leq k, \\
		      V_i, &\qquad i>k.
		\end{cases}
		\]
		As $V \triangle U$ is Cartesian, we have $(V \triangle U) \cap (\{i\} \times [m]) = (V \triangle U) \cap (\{1\} \times [m]) = U'$. As $V_i \neq \emptyset$, this is possible only if $i>k$ and $V_i=U'$, that is, $U=[k] \times V_i$. In addition, $U=\emptyset$ is possible if $V$ is Cartesian.
		
		Thus, the total contribution of terms $\alpha_S^2 = (\beta_U \gamma_V)^2$ with $V \neq \es$, is at most (recall $|\beta_U| \leq 2^{-m}$)
		\[
		2 \cdot (2^{-m})^2 \sum_V \gamma_V^2 = 2^{1-2m} \be[g^2] \leq 2^{1-2m} \be[g] =2^{1-m} \mu.
		\]
		We conclude
		\[
			\norm{(B')^C}_2^2 = \sum_{S\text{ Cartesian}} \alpha_S^2 \leq 2^m \mu^2 + 2^{1-m}\mu.
		\]
		This completes the proof.
		\end{proof}

		\subsubsection{Wrapping up the proof of the obfuscation lemma}
		
		\begin{proof}[Proof of Lemma~\ref{lem:xs-bound}]
			
			Combining lemmas~\ref{lem:noise} and~\ref{lem:non-cartesian}, and using $\be[B']\leq 1/\binom{r}{k}$, we get
			\[
				\cov(B'(x), B'(y)) \leq 2^{-2t} / \binom{r}{k} + 2^{-t} \cdot 2^{m} / \binom{r}{k}^2 + 2^{-t} \cdot 2^{1-m}/ \binom{r}{k} .
			\]
			Using Lemma~\ref{lem:to-bounded}, we deduce
			\[
				\Pr[P^{-1}(B(P(x))) = Q^{-1}(B(Q(y)))] =
				\binom{r}{k} \be_{x,y}[B'(x) B'(y)] \leq
				2^{-2t} +  \frac{2^{m-t}}{\binom{r}{k}} + 2^{-t+1-m}.
			\]
			This completes the proof.
		\end{proof}

\section{Hardness of Dense \tops{$k$}-SUM}
\label{sec:k-SUM}

In this section we prove Theorem~\ref{thm:intro-main-informal-SUM}. The precise formulation of the theorem is as follows.
\begin{theorem}[Sparse to dense $k$-SUM reduction]
\label{thm:red-sum}
        Let $M,N$ be integers such that $\sqrt{N} \leq M \leq N$.
        Assume there is an algorithm for $(k,M, N^{1/k})$-SUM with
        success probability $\beta$ and expected running time $\mathcal{T}$.
		Then, there is an algorithm for $(k, N, N^{1/k})$-SUM
		with success probability at least $\frac{\beta^{4}}{(\log M)^2 \cdot k^{O(k)}}$ and expected running time at most $N/M \cdot ( \mathcal{T}+ \tilde{O}_k(N^{1/k}))$.
\end{theorem}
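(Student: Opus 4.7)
The plan is to follow the same two-step architecture as in Section~\ref{sec:kxor}: first a main reduction lemma (analogous to Lemma~\ref{lem:mainxor}) in which the input list size is a free parameter $r$, then a trivial padding/sparsification lemma (analogous to Lemma~\ref{lem:sparse}) that carries over essentially verbatim to the $k$-SUM setting by inserting extra uniformly random $\mathbb{Z}_N$ elements into the list and permuting. Concretely, given a dense $k$-SUM oracle $B$, my algorithm repeats $L \approx \beta \cdot N/M$ iterations of: draw $\alpha \in \mathbb{Z}_N^{\ast}$, $\gamma \in \mathbb{Z}_M^{\ast}$, and $P \in S_r$ uniformly and independently; compute $x_i = \phi(z_{P(i)})$ with $\phi(x) = \gamma\lfloor \alpha\cdot x\cdot M/N\rceil \bmod M$; run $B$ on $(x_1,\dots,x_r)$; and, on success, pull $K$ back to a candidate $\mathcal{K} = P(K)$ for the original instance. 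Each iteration produces a true solution with probability $\approx M/N$ (when $\sum_{j\in\mathcal{K}} z_j = 0$ over $\mathbb{Z}$ the rounding errors of $\phi$ essentially telescope, whereas a spurious $k$-tuple in $\mathbb{Z}_N$ is mapped to a dense solution only modulo $M$), so a Paley--Zygmund second-moment argument identical to the one used in the proof of Lemma~\ref{lem:mainxor} reduces everything to a $k$-SUM analogue of the obfuscation lemma (Lemma~\ref{Cor:Obfuscation}), namely that in two different iterations the returned $\mathcal{K}$'s coincide with probability at most $(M/N)^2 \cdot \mathrm{polylog}(M)$.

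To prove this obfuscation lemma I plan to mirror the Fourier-analytic program of Section~\ref{sec:kxor}. First, reduce the $\binom{[r]}{k}$-valued $B$ to a $[0,1]$-valued oracle $B'$ by averaging over $P$ and over the multiplicative scaling $\gamma \in \mathbb{Z}_M^{\ast}$, which here plays the role of the random $\bar{T} \in GL_m(\mathbb{F}_2)$ in the proof of Lemma~\ref{lem:to-bounded}. Next, expand $B'$ in the Fourier basis $\chi_S(v) = e^{2\pi i \langle S, v\rangle/M}$ of $\mathbb{Z}_M^{r}$, split into Cartesian $S = U \times \{S_0\}$ and non-Cartesian parts, and bound the two contributions separately. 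The Cartesian bound should go through with only cosmetic changes: $B'$ is supported on $\{x : \sum_{i \leq k} x_i \equiv 0 \pmod M\}$, so $B'$ factors as $f\cdot g$ with $g$ independent of the first coordinate, and the proof of Lemma~\ref{lem:non-cartesian} adapts to give $\|(B')^{C}\|_2^2 = O(M\mu^2 + \mu/M)$ where $\mu = \mathbb{E}[B']$.

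The main obstacle is the non-Cartesian (``noise'') bound, the analogue of Lemma~\ref{lem:noise}. Because $\phi$ is not a group homomorphism from $\mathbb{Z}_N$ to $\mathbb{Z}_M$, the clean identity $\chi_S(T(z)) = \chi_{T^{\ast}S}(z)$ that drove the $k$-XOR proof is no longer available: expanding $\chi_S(\phi(z))$ leaves a rounding remainder per coordinate, so the relevant cross-covariance coefficients become exponential sums rather than indicators of the event $T^{\ast}_1 S = T^{\ast}_2 S'$. To handle this, I would average over $\gamma$ first: through its multiplicative action on the character index, this reduces the task to bounding the probability, over $\alpha$ and the inputs $z$, that a non-trivial integer linear combination $\sum_i S_i \cdot (\alpha z_i \bmod N)$ lies in a narrow window modulo $N$. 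This is precisely a Littlewood--Offord-type anti-concentration statement in $\mathbb{Z}_N$; classical small-ball bounds~\cite{Erdos45} and their modular refinements deliver a $\tilde{O}(M/N)$ bound per non-trivial coordinate, and the logarithmic loss here is what eventually appears as the $(\log M)^2$ factor in the theorem. Selecting one non-trivial ``row'' of $S$ yields the $\tilde{O}(M/N)$ bound needed for every $S$, while selecting two distinct non-trivial rows (available precisely when $S$ is non-Cartesian) yields the stronger $\tilde{O}((M/N)^2)$ bound. Combining the Cartesian and non-Cartesian estimates exactly as in the proof of Theorem~\ref{thm:xs-bound} and then feeding the result into the main reduction and sparsification lemmas completes the proof.
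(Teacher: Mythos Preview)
Your overall architecture matches the paper's, but the key step --- the non-Cartesian noise bound --- does not go through as you claim. In Lemma~\ref{lem:noise} the $2^{-2t}$ bound for non-Cartesian $S$ works because the random linear map $T$ has an injective dual $T^*$ acting on the character index: two distinct rows $S_i\neq S_j$ are sent to two distinct vectors in $\mathbb{F}_2^{m+t}$, and the probability that both land in a random $m$-dimensional subspace is $\approx 2^{-2t}$. In the $k$-SUM obfuscation there is no such dual structure. The only ``column'' randomness is a single scalar $\alpha$ shared by all coordinates; the relevant anti-concentration quantity is the \emph{single} linear constraint $\langle S,u\rangle=0$ with $u_i=\lfloor \alpha v_i/q\rceil$, and having two coordinates with distinct nonzero values of $S_i$ does not improve the bound beyond $O(M/N)$. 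The paper says so explicitly (``the $k$-SUM analog of Lemma~\ref{lem:noise} does not hold'') and takes a different route: rather than bounding individual coefficients $M_{p,q,r}(S)$ more sharply, it exploits the additional symmetry $\widehat{B'}(S+S')=\widehat{B'}(S)$ coming from the support condition $\sum_{i\le k}x_i=0$ to average over entire two-dimensional subspaces $\{\eta S+\nu S':\eta\in\mathbb{Z}_p^*,\nu\in\mathbb{Z}_p\}$ (Lemma~\ref{lem:non-cartesian2}), and then bounds $\sum_{\eta,\nu}M_{p,q,r}(\eta S+\nu S')$ by a delicate second-moment computation on the random set $Z_\alpha=\{\sigma:\alpha\sigma\bmod pq\text{ is small}\}$ (Lemma~\ref{lem:mag-exp} and Lemma~\ref{lem:number-theory}). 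It is this second-moment step, not a per-coordinate Littlewood--Offord bound, that earns the $(M/N)^2$ factor and is the source of the logarithmic loss.

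A second structural point you omit: the paper does not run the Fourier argument directly over $\mathbb{Z}_N\to\mathbb{Z}_M$. It first passes to modular $k$-SUM over $\mathbb{Z}_{pq}\to\mathbb{Z}_p$ with primes $p\approx M$ and $q\approx N/M$, so that the analysis lives over a field (every nonzero $\gamma$ is invertible, and the orthogonality in Lemma~\ref{lem:mag-support} is clean), and then separately bridges between integer $k$-SUM and modular $k$-SUM via the nontrivial Lemma~\ref{lem:to-mod}. Running your plan directly over composite $M$ would add further complications on top of the gap above.
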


The proof is based on the modular $k$-SUM problem, which we call $k$-MSUM.
\begin{definition}[Average-case $k$-MSUM problem]
In the $(k,N,r)$-MSUM problem, the input consists of $r$ elements $\z_1,\ldots,\z_r$, each chosen independently and uniformly at random from $\integers_{N}$. The goal is to find a $k$-tuple $K = \{ i_1,\ldots,i_k\}$, such that $\sum_{j \in K} \z_j \bmod N = 0$.
\end{definition}

\noindent Informally, the proof consists of three stages.
\begin{enumerate}
	\item \textbf{Reduction to $k$-MSUM.} In Section~\ref{sec:sub:red-to-modular}, we show that for obtaining the reduction from the $(k,N,N^{1/k})$-SUM problem to the $(k,M,N^{1/k})$-SUM problem (i.e. proving Theorem~\ref{thm:red-sum}), it is sufficient to devise a reduction from $(k,pq,r=(pq)^{1/k})$-MSUM to $(k,p,r)$-MSUM for a pair of primes $p,q$ that satisfy $pq \approx N$ and $p \approx M$.
	
	\item \textbf{Obfuscation process.} In Section~\ref{sec:sub:sum-obfuscation-process}, we introduce an obfuscation process that transforms $r$-tuples of vectors in $\Z_{pq}$ to $r$-tuples of vectors in $\Z_p$, similarly to the reduction for the $k$-XOR problem presented in Section~\ref{sec:kxor}. Specifically, we show that it is sufficient to prove an \emph{obfuscation lemma} which asserts that
the outputs of the $(k,p,r)$-MSUM oracle are sufficiently independent when it is applied to the `obfuscated' inputs.
	
	\item \textbf{Proof of the obfuscation lemma.} In Section~\ref{sec:sub:sum-obfuscation-proof} we prove the obfuscation lemma by employing techniques from discrete Fourier analysis and combinatorics. It is the most involved part of the proof of the main theorem.
\end{enumerate}

\subsection{Reduction to modular \tops{$k$}-SUM and proof of Theorem~\ref{thm:red-sum}}
\label{sec:sub:red-to-modular}

The main ingredient in the proof of Theorem~\ref{thm:red-sum} is the following lemma, which provides reduction from $k$-MSUM modulo $pq$ to $k$-MSUM modulo $p$, for prime numbers $p,q$ with $p > q$.
\begin{lemma}[Main $k$-MSUM reduction]\label{lem:red-sum}
    Let $p,q,r$ be positive integers such that $p > q$ are prime numbers and $pq \leq \binom{r}{k}$.
    Assume there is an algorithm for $(k,p,r)$-MSUM with success
    probability $\beta$ and expected running time $\mathcal{T}$.

	Then, there is an algorithm for $(k, pq, r)$-MSUM with success probability
	\[
		\Omega \left( \Big( \frac{\beta^{2} \cdot pq }{k \log(q) \cdot \binom{r}{k}} \Big) ^ 2\right)
	\]
	and expected running time at most $q \cdot ( \mathcal{T}+ \tilde{O}(r) )$.
\end{lemma}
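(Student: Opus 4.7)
The reduction will mirror the proof of Lemma~\ref{lem:mainxor}, with the obfuscation adapted to $\Z_{pq}$. The plan is to construct an algorithm $A$ for $(k,pq,r)$-MSUM that, on input $\z_1,\ldots,\z_r$, runs $L$ iterations; in each iteration I draw fresh independent $\alpha\sim\Z_{pq}^*$, $\gamma\sim\Z_p^*$, and $P\sim S_r$, set $x_i=\gamma\lfloor\alpha\z_{P(i)}/q\rceil\bmod p$, and feed $(x_1,\ldots,x_r)$ to $B$. If $B$ returns a $k$-tuple $K$ with $\sum_{i\in K}x_i\equiv 0\pmod p$, the algorithm tests whether $\mathcal{K}=P(K)$ satisfies $\sum_{j\in\mathcal{K}}\z_j\equiv 0\pmod{pq}$ and outputs $\mathcal{K}$ if so. I will take $L=\Theta(\beta q/(k\log q))\le q$, so the running time is $L(\mathcal{T}+\tilde{O}(r))\le q(\mathcal{T}+\tilde{O}(r))$, matching the claimed bound.

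The first step is a pullback calculation. Writing $c_i=\alpha\z_{P(i)}\bmod pq$ (as an integer in $\{0,\ldots,pq-1\}$) and $e_i=q\lfloor c_i/q\rceil-c_i\in(-q/2,q/2]$, the identity $c_i=q\lfloor c_i/q\rceil-e_i$ and the equation $\sum_{i\in K}\lfloor c_i/q\rceil\equiv 0\pmod p$ (equivalent to $B$'s success condition after dividing by $\gamma$) together give $\alpha\sum_{i\in K}\z_{P(i)}\equiv-\sum_{i\in K}e_i\pmod{pq}$. Since $|\sum e_i|\le k(q-1)/2<pq/2$ (using the mild condition $k\le p$, which is easily met in the parameter regime of interest), and since $\alpha$ is a unit, the pullback $\mathcal{K}$ is a valid $\Z_{pq}$-solution iff $\sum_{i\in K}e_i=0$. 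For uniformly random $\z$ and $\alpha$, the $e_i$ are close to uniform over the centered interval of integers of length $q$, and a short Fourier computation yields $\Pr[\sum e_i=0]=\Theta(1/q)$. Hence the event $S_l$ that iteration $l$ produces a valid $\Z_{pq}$-solution satisfies $\Pr[S_l]=\Theta(\beta/q)$.

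The central technical ingredient will be an obfuscation lemma, the $k$-MSUM analog of Lemma~\ref{Cor:Obfuscation}, asserting that for two independent iterations producing $k$-tuples $\mathcal{K}_l,\mathcal{K}_{l'}$,
\[
\Pr[\mathcal{K}_l=\mathcal{K}_{l'}]\le O(k\log q/q^2).
\]
I defer its proof to Section~\ref{sec:sub:sum-obfuscation-proof}. Given this bound, the argument concludes exactly as in Section~\ref{sec:mainred}: set $Z'=\sum_l\one\{S_l\}-\sum_{l<l'}\one\{\mathcal{K}_l=\mathcal{K}_{l'}\}$ and $Z=\max(Z',0)$, so $Z$ lower-bounds the number of distinct $\Z_{pq}$-solutions that $A$ finds. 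The choice $L=\Theta(\beta q/(k\log q))$ balances the linear gain $L\beta/q$ against the quadratic collision loss $\binom{L}{2}\cdot O(k\log q/q^2)$, yielding $\mathbb{E}[Z]\ge\Omega(\beta^2/(k\log q))$. For the second moment, $Z\le D$ where $D$ counts actual $k$-SUM solutions in $\z$; the solution-indicator events for distinct $k$-subsets are pairwise independent (the $2\times r$ 0/1 indicator matrix has rank $2$ over both $\field_p$ and $\field_q$, hence the image mod $pq$ is uniform), so $\mathbb{E}[Z^2]\le\mathbb{E}[D^2]\le O(\binom{r}{k}^2/(pq)^2)$ under the hypothesis $\binom{r}{k}\ge pq$. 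Paley--Zygmund then delivers the claimed success probability $\Omega\bigl((\beta^2 pq/(k\log q\binom{r}{k}))^2\bigr)$.

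The main obstacle will be the obfuscation lemma itself. Unlike the $k$-XOR case, the map $\phi_{\alpha,\gamma}$ is only \emph{somewhat} linear because of the rounding, and the class of obfuscators (of order $\approx p^2 q$) is much smaller than the class of linear rank-$m$ maps $\field_2^n\to\field_2^m$ used in Section~\ref{sec:kxor}. Consequently one cannot directly pull characters of $\Z_p$ back to characters of $\Z_{pq}$ via linear duality, as was done in the proof of Lemma~\ref{lem:noise}. I anticipate combining a Fourier-analytic Cartesian/non-Cartesian decomposition in the spirit of Lemma~\ref{lem:noise} with Littlewood--Offord-type anti-concentration estimates to control the contribution of the rounding noise; the extra $k\log q$ factor in the collision bound (versus the clean $4/q^2$ for $k$-XOR) should emerge as the quantitative price of this non-linearity.
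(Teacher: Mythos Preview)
Your proposal follows the paper's strategy essentially verbatim: the same obfuscation $x_i=\gamma\lfloor\alpha\z_{P(i)}/q\rceil\bmod p$, the same Paley--Zygmund computation with $Z'=\sum_l\one\{S_l\}-\sum_{l<l'}\one\{\mathcal{K}_l=\mathcal{K}_{l'}\}$, and the obfuscation lemma deferred.

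Two minor quantitative slips that happen to cancel. First, the pullback probability $\Pr[\sum_{i\in K}e_i=0]$ is $\Theta(1/(\sqrt{k}\,q))$, not $\Theta(1/q)$: the $e_i$ are i.i.d.\ uniform on an interval of length $q$, so the local limit theorem (or just Chebyshev plus log-concavity, as the paper does) picks up a $1/\sqrt{k}$. Second, the paper's obfuscation lemma actually delivers $\Pr[\mathcal{K}_l=\mathcal{K}_{l'}]=O(\log q/q^2)$ with no $k$ in it. With the corrected $\Pr[S_l]=\Omega(\beta/(\sqrt{k}\,q))$ and collision bound $O(\log q/q^2)$, the optimal choice is $L=c\beta q/(\sqrt{k}\log q)$, again yielding $\mathbb{E}[Z]\ge\Omega(\beta^2/(k\log q))$, and the rest of your argument is unchanged.

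On your anticipated proof of the deferred lemma: the paper does \emph{not} reuse a Cartesian/non-Cartesian split. Instead it writes $\cov(B'(x),B'(y))=(p-1)\sum_{S\ne 0}M_{p,q,r}(S)\,\wh{B'}(S)^2$ and uses the additional invariance $\wh{B'}(S)=\wh{B'}(S+S')$ (with $S'_j=\one\{j\le k\}$, coming from the constraint $\sum_{i\le k}x_i=0$) to group the Fourier mass into $2$-dimensional subspaces $\{\eta S+\nu S':\eta\in\Z_p^*,\nu\in\Z_p\}$, then bounds $\sum_{\eta,\nu}M_{p,q,r}(\eta S+\nu S')$ over each plane. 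The Littlewood--Offord step you correctly anticipate enters in bounding these planar sums; the $\log q$ loss comes from a divisor-counting estimate ($\#\{(a,b,c,d)\in[0,q)^4:ab=cd\}=O(q^2\log q)$), not from anything $k$-dependent.
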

In this subsection we prove that Lemma~\ref{lem:red-sum} implies Theorem~\ref{thm:red-sum}. The (more complex) proof of Lemma~\ref{lem:red-sum} spans the following subsections.

The derivation of Theorem~\ref{thm:red-sum} from Lemma~\ref{lem:red-sum} relies on two additional reductions.
\begin{lemma}[Simple reduction]
\label{lem:sparsesum}
    Let $r,r'$ be positive integers such that $r \geq r'$.
    Assume there is an algorithm for $(k,N, r)$-SUM (resp., MSUM), with success
    probability $\beta$ and expected running time $\mathcal{T}$.
	Then, there is an algorithm for $(k, N, r')$-SUM (resp., MSUM) with success probability at least
   $ \frac{\beta}{(2r/r')^k}$ and expected running time at most $\mathcal{T} + \tilde{O}_k(r)$.
\end{lemma}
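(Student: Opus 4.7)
The plan is to carry out essentially the same padding-and-permutation argument used in the proof of Lemma~\ref{lem:sparse}, with only the trivial change that we append uniform elements drawn from $\{-N,\ldots,N\}$ (in the SUM case) or from $\Z_N$ (in the MSUM case) instead of from $\zo^n$. Concretely: given the input $\z_1,\ldots,\z_{r'}$, sample $r-r'$ additional elements independently and uniformly from the relevant domain, append them to the list, apply a uniformly random permutation $P\in S_r$, and invoke the hypothesized $(k,N,r)$-SUM (resp. MSUM) algorithm $B$. If $B$ returns a $k$-tuple $K\subseteq[r]$, pull it back through $P^{-1}$, check whether all of its indices lie in $\{1,\ldots,r'\}$, and output it if so; otherwise, declare failure.

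Next I would verify the two probabilistic facts underlying the argument. First, because the $r'$ input elements are already uniform and the appended elements are independently uniform, the resulting padded list has exactly the $(k,N,r)$-SUM (resp. MSUM) distribution, so $B$ succeeds with probability $\beta$. Second, since $P$ is a uniformly random permutation independent of everything else, by symmetry the events ``$B$ succeeds'' and ``the output $k$-tuple lies inside $P(\{1,\ldots,r'\})$'' are independent, and the latter event has probability $\binom{r'}{k}/\binom{r}{k}$. As in the proof of Lemma~\ref{lem:sparse}, we may assume $r' \geq 2k$; otherwise a brute-force enumeration over the $\binom{r'}{k}=O_k(1)$ candidate $k$-tuples already runs in time $O_k(1) \leq \mathcal{T}+\tilde{O}_k(r)$. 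Under $r'\geq 2k$, the standard estimate $\binom{r'}{k}/\binom{r}{k} \geq \bigl((r'-k)/r\bigr)^k \geq (r'/(2r))^k = 1/(2r/r')^k$ yields overall success probability at least $\beta/(2r/r')^k$.

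Finally, I would bound the running time: sampling $r-r'$ fresh elements, applying $P$, invoking $B$, and checking whether the returned indices lie in $\{1,\ldots,r'\}$ cost $\mathcal{T}+\tilde{O}_k(r)$ in total. The same proof works verbatim for both the SUM and MSUM variants because it uses only uniformity of the domain and a permutation-symmetry argument, never any arithmetic structure of the sum. There is no real obstacle here—this is a soft decoupling argument analogous to Lemma~\ref{lem:sparse}—so the proof should essentially be a one-paragraph adaptation of the earlier one.
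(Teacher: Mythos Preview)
Your proposal is correct and follows exactly the approach the paper intends: the paper omits the proof of Lemma~\ref{lem:sparsesum} entirely, stating that it is essentially the same as the proof of Lemma~\ref{lem:sparse}, and your padding-and-permutation argument with the $r'\geq 2k$ case split and the bound $\binom{r'}{k}/\binom{r}{k}\geq (r'/(2r))^k$ is precisely that adaptation.
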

The proof of this lemma is essentially the same as the proof of the analogous Lemma~\ref{lem:sparse} for $k$-XOR,
and we omit it.

\begin{lemma}[$k$-SUM to $k$-MSUM reduction]
\label{lem:to-mod}
Let $M,N$ be positive integers such that $M \leq 2N+1$.
Assume there is an algorithm for $(k,M,r)$-MSUM with
success probability $\beta$ and expected running time $\mathcal{T}$.
Then, there is an algorithm for $(k,N,r)$-SUM
with success probability $\Omega \left(\frac{\beta}{k(8N \cdot k/M)^k}\right)$ and expected running time at most $\mathcal{T} + \tilde{O}_k(r)$.
\end{lemma}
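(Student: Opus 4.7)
The plan is to construct the $(k,N,r)$-SUM algorithm via a rejection-sampling-and-padding reduction to $B$. Set $L:=M\lfloor(2N+1)/M\rfloor$ (the largest multiple of $M$ at most $2N+1$), and fix a centered interval $G\subseteq\{-N,\ldots,N\}$ of length $L$; since $L$ is a multiple of $M$, $G$ is a complete union of residue classes modulo $M$. For each input $z_i$, let $y_i:=z_i\bmod M$ if $z_i\in G$ (call $i$ \emph{real}), and otherwise draw $y_i$ uniformly and independently from $\mathbb{Z}_M$ (call $i$ \emph{fake}). Apply a uniform random permutation $P\in S_r$, feed $(y_{P(1)},\ldots,y_{P(r)})$ to $B$ to obtain a tuple $K$, set $\mathcal{K}:=P(K)$, and output $\mathcal{K}$ iff $\sum_{j\in\mathcal{K}}z_j=0$ over $\mathbb{Z}$.

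The easy part of the analysis proceeds in two steps. First, both the real and fake branches produce uniform $\mathbb{Z}_M$ samples (the real branch because $G$ covers every residue class equally), so the $y_i$'s are IID uniform on $\mathbb{Z}_M$ and \emph{independent} of the real/fake indicator vector; hence $B$ receives its assumed input distribution and succeeds with probability $\beta$. Second, setting $p:=L/(2N+1)\geq 1/2$, the above independence combined with the random permutation $P$ (just as in the $k$-XOR reduction of Section~\ref{sec:mainred}) implies that, conditional on $B$ succeeding, all $k$ indices of $\mathcal{K}$ are real with probability $p^k\geq 2^{-k}$. On this event, $\sum_{j\in\mathcal{K}}z_j\equiv 0\pmod M$ comes for free.

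The main calculation is then to ensure the sum is exactly zero, not merely zero modulo $M$. Conditional on the $y_j$'s and on all $j\in\mathcal{K}$ being real, each $z_j$ is uniform on the fiber $G\cap(y_j+M\mathbb{Z})$ of cardinality $T:=L/M$; decomposing $z_j=u_{y_j}+M t_j$ with the $t_j$'s IID uniform over a range of $T$ consecutive integers reduces the event $\sum z_j=0$ to $\sum t_j=-A(y)$ for an integer $A(y)$ with $|A|\leq k/2$. In the regime $T\geq 2$ (i.e., $M\leq N$), a proper choice of centered representatives places $-A$ inside the support of $\sum t_j$, so $\Pr[\sum t_j=-A\mid y]\geq \Omega(T^{-k})$ \emph{regardless of $B$'s internal strategy}, as at least one $t$-tuple achieves any value in the support. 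Multiplying through gives success probability $\Omega(\beta\cdot(M/(2N+1))^k)$, which comfortably exceeds the target $\Omega(\beta/(k(8Nk/M)^k))$.

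The main obstacle will be the degenerate regime $T=1$ (i.e., $N<M\leq 2N+1$), where each fiber is a singleton so $\sum_{j\in\mathcal{K}}z_j$ is fully determined by the $y_j$'s, and an adversarial $B$ could systematically return $k$-tuples whose $y$-sum is one of the $O(k)$ admissible nonzero multiples of $M$ rather than zero. I would handle this case separately, for instance by first reducing to a proxy problem with smaller modulus for which $T\geq 2$ (and applying the main argument there), absorbing the resulting $k^{O(k)}$ overhead into the generous slack of the target bound.
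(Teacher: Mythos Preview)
Your approach differs substantially from the paper's. The paper first invokes the simple reduction (Lemma~\ref{lem:sparsesum}) to shrink to $r'\approx rM/(4N)$ inputs, then \emph{filters} the $z_i$'s to those in $\{-(M-1)/2,\ldots,(M-1)/2\}$ and keeps the first $r'$ survivors $u_1,\ldots,u_{r'}$. To defeat an adversarial $B$ that might always return tuples whose integer sum is a nonzero multiple of $M$, the paper adds to each $y_i=u_i\bmod M$ a random mask $v_i$ drawn uniformly from a fixed $k$-tuple $(v'_1,\ldots,v'_k)$ satisfying $\sum v'_j\equiv 0$; with probability $\ge 1/k!$ the returned tuple uses each $v'_j$ exactly once, whence $\sum u_i\equiv 0\pmod M$, and a log-concavity argument gives $\Pr[\sum u_i=0\mid \sum u_i\equiv 0]\ge 1/k$.

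Your padding-and-fiber argument is genuinely different and, in the regime $M\le N$ (equivalently $T\ge 2$), correct and cleaner. The independence of the real/fake indicator vector from $(y_1,\ldots,y_r)$ holds as stated (the permutation is not even needed), and for centered $G$ of length $L=TM\ge 2M$ one checks that the minimal and maximal fiber elements $a_j,b_j$ satisfy $a_j\le \min G + M-1\le 0$ and $b_j\ge \max G-M+1\ge 0$, hence $\sum a_j\le 0\le \sum b_j$ and the target value is always achievable, giving $\Pr[\sum z_j=0\mid y]\ge T^{-k}$ uniformly in $B$'s output. This yields success probability $\Omega(\beta(M/N)^k)$, which beats the stated bound by a $k^{O(k)}$ factor and covers every invocation of the lemma in the paper (where $M=pq\le N$).

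The gap is your treatment of $T=1$, i.e.\ $N<M\le 2N+1$. ``Reducing to a proxy problem with smaller modulus'' is not available as a black box: there is no evident way to convert a $(k,M,r)$-MSUM oracle into a $(k,M',r)$-MSUM oracle for $M'<M$ short of the paper's main obfuscation machinery, which would be circular. In this regime each fiber is a singleton, so $\sum_{j\in\mathcal{K}}z_j$ is a deterministic function of the $y_j$'s and an adversarial $B$ can consistently select tuples with integer sum $M$ rather than $0$. Something like the paper's $v'$-masking trick is the missing ingredient; without it the lemma in its full stated range is not established.
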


\begin{proof}
Denote $r' = \lceil r \cdot M/4N \rceil$.
Given an algorithm for $(k,M,r)$-MSUM with success probability $\beta$ and running time $\mathcal{T}$,
by Lemma~\ref{lem:sparsesum}, there is an algorithm $B$ for $(k,M,r')$-MSUM
with success probability at least $\frac{\beta}{(8N/M)^k}$ and expected running time at most
$\mathcal{T} + \tilde{O}_k(r)$. We use $B$ to devise an algorithm $A$ for $(k,N,r)$-SUM.

We first assume that $M$ is odd.
On input that consists of $r$ integers $\z_1,\ldots,\z_r$ uniform in $\{-N,\ldots,N\}$, $A$ performs
the following steps.
\begin{enumerate}
  \item Discard all $\z_i$ such that $\z_i \notin \{-(M-1)/2 ,\ldots,(M-1)/2\}$.
    Denote the number of remaining elements by $r_1$. If $r_1 < r'$, then return failure.
    Otherwise, take the first $r'$ remaining elements and denote them by $u_1,\ldots,u_{r'}$.
  \item Define the mapping $u_i \mapsto y_i$ (onto $\integers_{M}$) by $y_i = u_i \bmod M$.
        Note that $y_1,\ldots,y_{r'}$ is a $(k,M,r')$-MSUM instance.
  \item Pick $k$ elements $v'_1,\ldots,v'_k$ in $\integers_M$ uniformly at random, conditioned
        on $\sum_{j \in [k]} v'_j \bmod M = 0$.
        Then, for each $i \in [r']$, pick $j \in [k]$ uniformly at random and define $v_i = v'_j$
        and $x_i = y_i + v_i \bmod M$.
  \item Run $B$ on the input $x_1,\ldots,x_{r'}$ and assume it returns
        a $k$-tuple $K'$ such that $\sum_{i \in K'} x_{i} \bmod M = 0$.
        Trace $K'$ back to the corresponding $k$-tuple $K$ for $\z_1,\ldots,\z_r$ and if
        $\sum_{i \in K} \z_{i} = 0$, return $K$. Otherwise, return failure.
\end{enumerate}
Note that we do not run $B$ directly on $y_1,\ldots,y_{r'}$
since it may be malicious and return
$k$-tuples that sum to 0 modulo $M$, but never give a $k$-SUM over the integers for $\z_1,\ldots,\z_r$.

\paragraph{Analysis.}
Clearly, the algorithm returns a correct output if it succeeds and its complexity is as claimed.
To analyze the success probability, we consider the following events:
	\begin{enumerate}
        \item $r_1 \geq r'$.
		\item $B$ returns a $k$-tuple $K'$ such that $\sum_{i \in K'} x_{i} \bmod M = 0$.
		\item $\{v_i\}_{i \in K'} = \{v'_j\}_{j \in [k]}$ (as possible multi-sets). Note that this implies  $\sum_{i \in K'} v_{i} \bmod M =
        \sum_{j \in [k]} v'_j \bmod M = 0$ and therefore,
        $$\sum_{i \in K'} u_{i} \bmod M = \sum_{i \in K'} y_{i} \bmod M = \sum_{i \in K'} (x_{i} - v_i) \bmod M = 0.$$
        \item $\sum_{i \in K'} u_{i} = 0$.
	\end{enumerate}
Observe that if the fourth event occurs, then $\sum_{i \in K} \z_{i} = \sum_{i \in K'} u_{i} = 0$ and
$A$ succeeds. In the following, we lower bound the probability of these events.

First, $\be[r_1] = r \cdot M/(2N + 1)$ and a simple tail bound gives $\Pr[r_1 \geq r'] \geq 1/4$ for $r$ sufficiently large (i.e., larger than some constant value).
Conditioned on the first event, the second event occurs with probability at least $\frac{\beta}{(8N/M)^k}$.
The third event occurs with probability at least $1/(k!) > k^{-k}$.
Note that since $v_1,\ldots,v_{r'}$ are picked independently of $y_1,\ldots,y_{r'}$,
then the third event is independent of the second.
Thus, the first three events occur with probability
$\Omega \left(\frac{\beta}{(8N \cdot k/M)^k}\right)$.

Finally, recall that $v'_1,\ldots,v'_k$ are picked uniformly at random from $\integers_M$, conditioned
on $\sum_{j \in [k]} v'_j \bmod M = 0$.
Thus, conditioning on $\sum_{i \in K'} x_{i} \bmod M = 0$,
and on the event that $\{v_i\}_{i \in K'} = \{v'_j\}_{j \in [k]}$
(but not on the individual values of each $v'_{j}$),
the $k$-tuple $\{u_{i}\}_{i \in K'}$
is uniformly distributed in $\{-(M-1)/2 ,\ldots,(M-1)/2\}^k$,
conditioned on $\sum_{i \in K'} u_{i} \bmod M = 0$.
Given this distribution of $\{u_{i}\}_{i \in K'}$, it remains to lower bound
the probability that $\sum_{i \in K'} u_{i} = 0$ by $1/k$.

Write $U = \sum_{i \in K'} u_{i}$, as a sum of $k$ uniform integers in $\{-(M-1)/2 ,\ldots,(M-1)/2\}$.
Note that for any $t$,
\begin{align}\label{eq:convexity}
\begin{split}
	\Pr[U = t] \leq & \Pr[U = t+1] \quad \text{if} \quad t+1 \leq \be[U], \\
	\Pr[U = t] \geq & \Pr[U = t+1] \quad \text{if} \quad t \geq \be[U].
\end{split}
\end{align}
To see this, observe that the function $t \mapsto \Pr[U = t]$ is log-concave, as a convolution of (discrete) log-concave functions (see for example~\cite[Proposition 10\textit{(vii)}]{MY96}), and is symmetric around $t = \be[U]$.

Since in our case $\be[U] = 0$, then for any $t$, we have $\Pr[U = 0] \geq \Pr[U = t]$.
Hence,
\begin{align*}
\Pr \li[ \given{U=0}{U \bmod M = 0} \ri] = & \frac{\Pr[U = 0]}{\Pr[U \bmod M = 0]} \geq
\frac{\Pr[U = t]}{\Pr[U \bmod M = 0]} \\ \geq & \Pr \li[ \given{U=t}{U \bmod M = 0} \ri].
\end{align*}
As $U \in \{-k(M-1)/2 ,\ldots,k(M-1)/2\}$, given that $U \bmod M = 0$, $U$ can only attain $k$ possible values,
implying that $\Pr \li[ \given{U=0}{U \bmod M = 0} \ri] \geq 1/k$.

Finally, if $M$ is even, we change the algorithm to
remove $\z_i \notin \{-M/2 ,\ldots,M/2 - 1\}$.
The analysis is similar, but we have $\be[U] = -k/2$.
Nevertheless, the final result is unchanged since
$\Pr[U = 0] \geq \Pr[U = t]$ for every $t$ such that $t \bmod M = 0$ (assuming $M$ is larger than $k$; we indeed may assume $M \geq k$ as otherwise $(k,N,r)$-SUM can be solved with the poor probability $1/N^k$ by searching for $k$ zeros in the input).
\end{proof}

We now derive Theorem~\ref{thm:red-sum} from Lemmas~\ref{lem:red-sum},~\ref{lem:sparsesum} and~\ref{lem:to-mod}.
\begin{proof}[Proof of Theorem~\ref{thm:red-sum}]
Let $M,N$ be such that $\sqrt{N} \leq M \leq N$.
Our goal is to devise an algorithm for $(k, N, N^{1/k})$-SUM,
given an algorithm $B$ for $(k, M, N^{1/k})$-SUM with
success probability $\beta$ and expected running time $\mathcal{T}$.

Clearly, $B$ can be applied to solve
$(k,M,N^{1/k} \cdot 2k)$-SUM with the same success probability and complexity.

Let $p$ be a prime number that satisfies $M \leq p < 2M$.\footnote{Such a prime can clearly be found efficiently since by a quantitative version of the prime number theorem, the number of primes between $M$ and $2M$ is $\Omega(M/\log M)$. Therefore, one may pick such a prime by random sampling and using a standard primarily test algorithm.}
By Lemma~\ref{lem:sparsesum}, based on $B$, there is an algorithm $B_1$
for $(k,(p-1)/2,N^{1/k} \cdot 2k)$-SUM
with success probability at least $\beta_1 = \frac{\beta}{k^{O(k)}}$ and
expected running time at most $\mathcal{T} + \tilde{O}_k(N^{1/k})$.
$B_1$ immediately gives an algorithm for $(k,p,N^{1/k} \cdot 2k)$-MSUM with the same parameters
(as a $K$-tuple that sums to 0 over the integers sums to zero $\bmod$ $p$).

Let $q$ be a prime number such that $N/2p \leq q < N/p$.
Note that we have
\[
pq < N < \binom{N^{1/k} \cdot 2k}{k} < N \cdot k^k \qquad \mbox{and} \qquad
q < N/p \leq N/M \leq M \leq p.
\]
Hence, we can apply Lemma~\ref{lem:red-sum} based on $B_1$
to deduce that there is an algorithm $B_2$ for
$(k,pq,N^{1/k} \cdot 2k)$-MSUM
with success probability at least
\[
\beta_2 = \Omega \left(\frac{\beta_1^{4} (pq)^2}{k^2 \log(q)^2 \binom{N^{1/k}\cdot 2k}{k}^2}\right) \geq \frac{\beta^{4}}{(\log M)^2 \cdot k^{O(k)}}
\]
and expected running time at most
$q \cdot ( \mathcal{T}+ \tilde{O}_k(N^{1/k})) \leq N/M \cdot ( \mathcal{T}+ \tilde{O}_k(N^{1/k}))$.

Noting that $N/2 \leq pq \leq N$,
we invoke Lemma~\ref{lem:to-mod} based on $B_2$, and conclude that there is an algorithm $B_3$
for $(k,N,N^{1/k} \cdot 2k)$-SUM
with success probability at least $\beta_3 = \frac{\beta_2}{k^{O(k)}} = \frac{\beta^{4}}{(\log M)^2 \cdot k^{O(k)}}$
and expected running time at most
$N/M \cdot ( \mathcal{T}+ \tilde{O}_k(N^{1/k}))$.

Finally, we apply Lemma~\ref{lem:sparsesum} based on $B_3$
and deduce that there is an algorithm for $(k,N,N^{1/k})$-SUM
with success probability at least $\frac{\beta_3}{k^{O(k)}} = \frac{\beta^{4}}{(\log M)^2 \cdot k^{O(k)}}$
and expected running time at most
$N/M \cdot ( \mathcal{T}+ \tilde{O}_k(N^{1/k}))$.
\end{proof}

\subsection{The obfuscation process}
\label{sec:sub:sum-obfuscation-process}

Our goal in the rest of this section is to prove Lemma~\ref{lem:red-sum}. The proof strategy is similar to the proof of Lemma~\ref{lem:mainxor} in the $k$-XOR case presented in Section~\ref{sec:kxor}. Namely, we devise an algorithm that receives $r$ vectors in $\Z_{pq}$, denoted by $\z_1,\ldots,\z_r$, and randomly obfuscates them, returning $r$ vectors in $\Z_p$, denoted by $y_1,\ldots,y_r$. The main properties of the obfuscation are that a solution to the $k$-MSUM $y$-problem gives rise to a solution of the $k$-MSUM $\z$-problem with a good probability (i.e., $\approx 1/q$), and that the applications of the oracle are sufficiently independent to yield a solution of the $\z$-problem with the desired probability.

In this subsection we present the obfuscation algorithm, state the main obfuscation lemma which asserts that our algorithm achieves its goals, and derive Lemma~\ref{lem:red-sum} from the obfuscation lemma. The proof of the obfuscation lemma is presented in the next subsection.

\subsubsection{The obfuscation algorithm and the obfuscation lemma}

\medskip Let $p,q,r$ be positive integers such that $p,q$ are prime numbers, $p \geq q$,  and $pq \leq \binom{r}{k}$.
Let $B$ be an algorithm for $(k,p,r)$-MSUM. Let $L$ be a parameter to be specified below.
We define the algorithm $A$ for $(k,pq,r)$-MSUM, which receives as an input an $r$-tuple $(\z_1,\ldots,\z_r) \in \integers_{pq}^r$ of elements in $\integers_{pq}$, as follows.

\begin{alg}\label{alg:sum}\skipline
\begin{enumerate}[noitemsep]
	\item
	\textbf{Repeat} $L$ times:
	\item
	\begin{adjustwidth}{5mm}{}
	\textbf{Draw} uniformly random invertible $\alpha \sim \integers_{p\cdot q}^\ast$ and $\gamma \sim \integers_p^\ast$ and a uniformly random
    permutation $P \in S_r$.
	\end{adjustwidth}
	\item
	\begin{adjustwidth}{5mm}{}
	\textbf{Let} $y_{i} = \gamma \cdot \lfloor (\alpha \cdot \z_{P(i)} \modd pq)  / q \rceil \modd p$ for all $i \in [r]$.
	\end{adjustwidth}
	\item
	\begin{adjustwidth}{5mm}{}
	\textbf{Feed} $B$ with $(y_1, \ldots, y_r)$. In case it outputs a $k$-tuple $K$ with $\sum_{i \in K} y_i \% p = 0$,
    \textbf{test} whether $\mathcal{K} = P(K)$ satisfies $\sum_{i \in \mathcal{K}} \z_{i} \% pq = 0$, and \textbf{if} it does -- \textbf{output}
    the $k$-tuple $\mathcal{K}$. \textbf{Otherwise, continue}.
	\end{adjustwidth}
\end{enumerate}
\end{alg}

The obfuscation lemma in the arithmetic case is as follows.
\begin{lemma}\label{lem:ss-bound}
	Let $p,q,r$ be positive integers such that $p \geq q$ are prime numbers and $pq \leq \binom{r}{k}$.

	Let $(\z_1,\ldots,\z_r) \in \integers_{pq}^r$ be chosen uniformly at random. Let the $r$-tuples
	$(y_1^{(1)},\ldots, y_r^{(1)}) \in \integers_{p}^r$ and $(y_1^{(2)},\ldots,y_r^{(2)}) \in \integers_{p}^r$
	be obtained from it by the procedure described above (in two out of the $L$ iterations).
	Let $\mathcal{K}_1, \mathcal{K}_2$ be the two corresponding $\mathcal{K}$'s obtained in the process.
	Then,
	\begin{equation}\label{eq:indep2}
		\Pr[\mathcal{K}_1 = \mathcal{K}_2] \leq O(\log(q) / q^2),
	\end{equation}
	where the probability is taken over $z,y^{(1)}, y^{(2)}$, and $B$'s randomness.
\end{lemma}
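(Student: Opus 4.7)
The plan is to mirror the structure of the $k$-XOR proof of Theorem~\ref{thm:xs-bound}, but with substantial modifications that address the non-linearity of the obfuscation map $\phi(z) = \gamma \lfloor \alpha z / q \rceil \bmod p$. The first step is to replace the oracle $B$ by a bounded real-valued function $B' \colon \Z_p^r \to [0,1]$, where $B'(y)$ records the probability (over $B$'s randomness, a uniform permutation of coordinates, and a uniform $\gamma' \in \Z_p^\ast$) that $B$ outputs the canonical $k$-tuple $K = \{1,\dots,k\}$ when fed $\gamma'$ times a random permutation of $y$. This symmetrization is essentially the analogue of Lemma~\ref{lem:to-bounded}, and it reduces the desired bound to
\[
\Pr[\mathcal{K}_1 = \mathcal{K}_2] \leq \binom{r}{k}\,\ee{z}\!\bigl[B'(y^{(1)})\,B'(y^{(2)})\bigr],
\]
together with the properties $\ee{}[B'] \leq 1/\binom{r}{k}$ and $B'(y) = 0$ unless $\sum_{i \in K} y_i \equiv 0 \pmod p$.

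Next I would Fourier-expand $B'$ on $\Z_p^r$ using additive characters $\chi_s(y) = \omega_p^{\langle s, y \rangle}$, and split $B' = (B')^C + (B')^\perp$ into a Cartesian part (indexed by $s$ whose support factors as $U \times V$ in a suitable sense adapted to $\Z_p$) and its orthogonal complement. The argument behind Lemma~\ref{lem:non-cartesian} carries over with $2^m$ replaced by $p$: using that $B'$ is supported on the hyperplane $\sum_{i \in K} y_i = 0$, one obtains $\|(B')^C\|_2^2 \lesssim p\mu^2 + \mu/p$ where $\mu = \ee{}[B']$.

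The main obstacle, and the bulk of the work, is bounding $\cov(B'(y^{(1)}), B'(y^{(2)}))$ in analogy with Lemma~\ref{lem:noise}. In the $k$-XOR case, linearity of $T$ gives $\chi_S(T(z)) = \chi_{T^\ast S}(z)$, so the expectation over $z$ collapses to an indicator and everything reduces to collision probabilities for random linear maps. Here $\phi$ is only \emph{somewhat} linear: writing $\phi(z) = \gamma(\alpha z - e(z))/q \bmod p$ with a rounding error $e(z) \in \{0, \pm 1, \dots\}$ of size $O(1)$, the pull-back $\chi_s(\phi(z))$ is only approximately equal to the genuine character $\omega_{pq}^{\gamma s \alpha z}$ of $z$ on $\Z_{pq}$. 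I would therefore fix $\alpha_1, \alpha_2, \gamma_1, \gamma_2$ and analyze the inner expectation as an exponential sum on $\Z_{pq}$ whose phase is linear in $z$ up to a bounded-amplitude, coordinate-separable perturbation; the separable perturbation can be expanded and absorbed as a multiplicative pre-factor, leaving a genuine character sum in $z$. This reduces the problem to controlling, for each nonzero Fourier index $s$, the probability over random $\alpha \in \Z_{pq}^\ast$ and $\gamma \in \Z_p^\ast$ that the induced character $\chi_s \circ \phi$ places non-negligible mass on some particular Fourier mode $\chi_{s'}$.

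The final ingredient is the Littlewood--Offord input. The probability that the random dilation by $\alpha$ forces the Fourier mass of $\chi_s \circ \phi$ to concentrate on a single mode can be estimated by a small-ball / anti-concentration bound on $\Z_{pq}$ of the form $\Pr_{\alpha}[\alpha \cdot t \in I] = O(\log q / q)$ for intervals $I \subseteq \Z_{pq}$ of length $O(pq/q)$, uniformly over $t \neq 0$ (this is the point where $\log q$ enters, versus the clean $1/q$ one would get from a strictly linear obfuscation). Combined with an $L^2$/Parseval estimate on $\Z_{pq}$ for the Cartesian versus non-Cartesian rows of $s$, this yields
\[
\cov(B'(y^{(1)}), B'(y^{(2)})) \;\lesssim\; \tfrac{\log q}{q}\,\|(B')^C\|_2^2 \;+\; \tfrac{\log q}{q^2}\,\|(B')^\perp\|_2^2,
\]
in parallel with Lemma~\ref{lem:noise}. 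Combining with the Cartesian bound from the previous paragraph, with $\ee{}[B'] \leq 1/\binom{r}{k}$, and multiplying by $\binom{r}{k}$, uses the hypothesis $pq \leq \binom{r}{k}$ to bound the Cartesian contribution by $O(\log q / q^2)$ as well, producing the claimed inequality~\eqref{eq:indep2}. The main difficulty throughout is Step 3: making precise the sense in which $\phi$ is ``almost linear'' so that character sums reduce to character sums on $\Z_{pq}$, and invoking Littlewood--Offord to handle the residual non-linearity quantitatively.
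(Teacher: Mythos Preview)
Your first reduction step---symmetrizing $B$ by a random permutation and a random dilation $\gamma\in\Z_p^\ast$ to obtain a real-valued $B'$ with $\be[B']\le 1/\binom{r}{k}$ and support on $\{\sum_{i\le k}y_i=0\}$---is exactly Lemma~\ref{lem:to-bounded-sum}, and the identity $\Pr[\mathcal{K}_1=\mathcal{K}_2]=\binom{r}{k}\be[B'(y^{(1)})B'(y^{(2)})]$ is correct.

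The gap is in your Steps~2--3. You propose to carry over the Cartesian/non-Cartesian split of Lemma~\ref{lem:noise} and prove
\[
\cov(B'(y^{(1)}),B'(y^{(2)}))\;\lesssim\;\tfrac{\log q}{q}\,\|(B')^{C}\|_2^2+\tfrac{\log q}{q^2}\,\|(B')^{\perp}\|_2^2.
\]
The paper explicitly notes that the $k$-SUM analog of Lemma~\ref{lem:noise} \emph{does not hold}. The reason is structural: in $\Z_p^r$ the Fourier index is a vector $S\in\Z_p^r$, not a subset of a grid, and the obfuscation is parameterized by a single pair $(\alpha,\gamma)$ rather than an $m\times n$ matrix with $\Theta(mn)$ bits of randomness. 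Writing $M_{p,q,r}(S)=\be[\chi_S(y^{(1)})\overline{\chi_S(y^{(2)})}]$, one has $\cov=(p-1)\sum_{S\ne 0}M_{p,q,r}(S)\wh{B'}(S)^2$, and the pointwise bound one can prove is only $M_{p,q,r}(S)=O(1/(pq))$ (Corollary~\ref{cor:mag-bound}), uniformly in $S$. Your scheme would require, for ``non-Cartesian'' $S$, the stronger estimate $(p-1)M_{p,q,r}(S)=O(\log q/q^2)$, i.e.\ an extra factor of $1/q$; but having two distinct nonzero coordinates $S_i\ne S_j$ does not buy this, because both coordinates see the \emph{same} random $\alpha$, so the ``two independent rows each hit a random subspace'' mechanism from the $k$-XOR proof is absent. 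Your plan to view $\phi$ as almost-linear and pull characters back to $\Z_{pq}$ does not produce such a pointwise gain either.

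What the paper does instead is exploit a second invariance you did not use: besides $B'(\gamma x)=B'(x)$, one has $\wh{B'}(S+S')=\wh{B'}(S)$ for the fixed vector $S'=\one_{[k]}$ (Claim~\ref{claim:useful-sum}). This lets one group the Fourier mass into \emph{two-dimensional} affine planes $\{\eta S+\nu S':\eta\in\Z_p^\ast,\nu\in\Z_p\}$, and the crucial estimate (Lemma~\ref{lem:mag-exp}) is a bound on the \emph{sum} $\sum_{\eta,\nu}M_{p,q,r}(\eta S+\nu S')=O(p\log q/q^2)$, obtained via cancellation across the plane rather than termwise. The Littlewood--Offord input enters here, but not as the small-ball bound $\Pr_\alpha[\alpha t\in I]=O(\log q/q)$ you describe; it is an antichain argument bounding $\Pr_u[\langle S,u\rangle=\langle S',u\rangle=0]$ for $u_i=\lfloor\alpha v_i/q\rceil$, together with a count of integer solutions to $ab=cd$ and $ab+cd=N$ in $[0,q)^4$ (Lemma~\ref{lem:number-theory}), which is where the $\log q$ actually comes from. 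Plugging this into Lemma~\ref{lem:non-cartesian2} with $C=O(p\log q/q^2)$ gives the result.
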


Note that cases where at least one of $\mathcal{K}_1,\mathcal{K}_2$ is not obtained (that is, when in at least one of the two iterations, Algorithm~$B$ fails to find a solution to the $(k,p,r)$-MSUM problem) are not counted as equality between $\mathcal{K}_1$ and $\mathcal{K}_2$.

\subsubsection{Proof of the main reduction lemma}

We now deduce Lemma~\ref{lem:red-sum} from the obfuscation lemma (Lemma~\ref{lem:ss-bound}).

\begin{proof}[Proof of Lemma~\ref{lem:red-sum}]
As in the corresponding proof of Lemma~\ref{lem:mainxor} for $k$-XOR, we analyze a tweaked version of the algorithm (with the same success probability) in which all $L$ iterations are performed. For any $1 \leq l \leq L$, let $\mathcal{K}_l$ be the $\mathcal{K}$ obtained in the $l$'s iteration ($\mathcal{K}_l$ exists only when B succeeds, i.e., with probability $\beta$).
		Denote by $S_l$ the event that $\mathcal{K}_l$ admits a solution to the $(k, pq, r)$-MSUM problem.
		We claim that for all $l=1,\ldots,L$,
		\[
			\Pr[ S_{l} ] \geq \beta \cdot \Omega(1/(\sqrt{k} q)).
		\]
		Indeed, $S_l$ occurs if the $B$ oracle succeeds, and in addition,  $\sum_{i \in \mathcal{K}} \z_i \modd pq = 0$ holds, given that $\sum_{i \in K} y_i \modd p = 0$.
		
		The probability of the first event is $\beta$, as $y_1, \ldots, y_r$ are uniformly and independently distributed in $\integers_{p}$. (Indeed, conditioning on the variables $\alpha, P$ (but not on $\z_{P(i)}$), the variables $\wt{y}_i = \lfloor (\alpha \cdot \z_{P(i)} \modd pq) / q \rceil \modd p$ are uniformly and independently distributed in $\integers_{p}$, and so are $y_i=\gamma \cdot \wt{y}_i$.)

		To see that the probability of the second event is $\Omega(1/(\sqrt{k} q))$, notice that $\alpha \z_{P(i)} \modd pq = (q\gamma^{-1} y_i + \sigma_i) \modd pq$, with $\sigma_i \in \{(1-q)/2, \ldots, (q-1)/2\}$. Conditioning on $\alpha, \gamma, P, y_i$ (and not on $\z_{P(i)}$), $\sigma_i$ is uniformly distributed in this set. Observe that given $\sum_{i \in K} y_{i} \modd p = 0$, the event $\sum_{i \in K} \alpha \z_{P(i)} \modd pq = 0$ is equivalent to $\sum_{i \in K} \sigma_i \modd pq = 0$. The probability of this latter event is $\Omega(1/(\sqrt{k} q))$, for example by~\eqref{eq:convexity} and Chebyshev's inequality with the standard deviation of $\sum_{i \in K} \sigma_i$ equal to $\Theta(\sqrt{k} q)$.

		Define the random variables
		\[
		Z'
		\defeq
		\sum_{l=1}^{L} \one\{S_l\} -\sum_{1 \leq l < l' \leq L} \one\{\mathcal{K}_l = \mathcal{K}_{l'}\} , \qquad Z \defeq \max(Z', 0).
		\]
		Similarly to the proof of Lemma~\ref{lem:mainxor} in Section~\ref{sec:mainred}, it is easy to verify that (tweaked) Algorithm~\ref{alg:sum} succeeds to solve the $(k, pq, r)$-MSUM problem with probability at least
		\[
			\pr[Z > 0] \geq \frac{\be[Z]^2}{\be[Z^2]}.
		\]
		To bound $\be[Z]$ from below, note that
		\[
		\be[Z] \geq \be[Z'] = \sum_{l=1}^{L} \pr[S_l] - \sum_{1 \leq l < l' \leq L} \pr[\mathcal{K}_l = \mathcal{K}_{l'}].
		\]
		Using $\Pr[S_l] \geq \beta \cdot \Omega(1/(\sqrt{k} q))$ and Lemma~\ref{lem:ss-bound}, we get
		\[
		\be[Z] \geq L\cdot \Omega(\beta / (\sqrt{k} q)) - \binom{L}{2} O(\log(q) / q^2).
		\]
		We choose $L= c \cdot \beta q / (\sqrt{k} \log(q))$ for a sufficiently small constant $c$, and obtain
		\[
			\be[Z]\geq \Omega\Big(\frac{\beta^2}{k\log(q)}\Big).
		\]
		To bound $\be[Z^2]$ from above, note that similarly to the proof of Lemma~\ref{lem:mainxor}, we have
		\[
			\be[Z^2] \leq \frac{2}{(pq)^2} \binom{r}{k}^2.
		\]
		Therefore, (tweaked) Algorithm~\ref{alg:sum} succeeds with probability at least
		\[
			\pr[Z > 0] \geq \Omega(\beta^4) \cdot \Big(\frac{pq}{\binom{r}{k} k \log(q) }\Big)^2.
		\]
		The running time of the algorithm is
		\[
			L \cdot ( \mathcal{T}+ \tilde{O}(r) ) \leq q \cdot ( \mathcal{T}+ \tilde{O}(r) ).
		\]
		This completes the proof of the lemma.
\end{proof}

\subsection{Proof of the obfuscation lemma}
\label{sec:sub:sum-obfuscation-proof}

In this section we prove Lemma~\ref{lem:ss-bound}. We start by introducing a distribution that models two independent outputs of the obfuscation process, and restate the obfuscation lemma.

	\begin{definition}\label{def:noise2}
	Let $p, q$ be prime numbers.
	We say that a pair of random variables $(x^{(1)},x^{(2)})$, each taking values in $\integers_{p}^r$, has a $(p,q,r)$-arithmetic-distribution, if there exist random variables
	$\z,\, \alpha^{(j)},\, \gamma^{(j)}$, $j=1,2$,
	with:
	\begin{enumerate}[noitemsep]
		\item $\z,\, \alpha^{(1)},\, \alpha^{(2)},\, \gamma^{(1)},\, \gamma^{(2)}$,
		are independent random variables.

		\item $\z \sim \integers_{p\cdot q}^r$ is uniformly distributed.

		\item $\alpha^{(j)} \sim \integers_{p\cdot q}^{\ast}$ is a uniformly random invertible element of $\integers_{p\cdot q}$.

		\item $\gamma^{(j)} \sim \integers_{p}^{\ast}$ is a uniformly random nonzero residue modulo $p$.

		\item For all $i=1\ldots r$, $j=1,2$, we have
		\[
			x^{(j)}_i =	\gamma^{(j)} \cdot \lfloor \alpha^{(j)} \cdot \z_i / q \rceil \modd p.
		\]
	\end{enumerate}
	\end{definition}

	\begin{lemma}
		\label{lem:aux-ss-bound}
		Let $B$ be an algorithm which receives as input a list of $r$ integers in $\integers_{p}$, and outputs the indices of $k>0$ numbers among them whose SUM is $0$ (modulo $p$).
		If $(x,y)$ has a $(p,q,r)$-arithmetic-distribution with $p \geq q$, and $P,Q\sim S_r$ are two uniformly random and independent permutations, then
		\begin{equation}\label{eq:ss-bound}
			\Pr[P^{-1}(B(P(x))) = Q^{-1}(B(Q(y)))] \leq O\Big( \frac{\log(q)}{q^2} + \frac{p}{q\binom{r}{k}} \Big),
		\end{equation}
		where the probability is taken over $B$'s randomness, $x,y$ and $P,Q$.
		(The event on the left hand
		side is contained in the event that both executions $B(P(x))$, $B(Q(y))$ succeed).
	\end{lemma}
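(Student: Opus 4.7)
My plan is to mirror the four-stage blueprint of Theorem~\ref{thm:xs-bound}, adapting each step to the arithmetic setting. The first stage is an analogue of Lemma~\ref{lem:to-bounded}: with $K=\{1,\ldots,k\}$, define
\[
B'(x) \defeq \ee{P \sim S_r,\, \gamma' \sim \Z_p^\ast}\li[ \one\{A(\gamma' \cdot P(x) \modd p) = P(K)\} \ri],
\]
and establish (i) $B'(x)=0$ unless $\sum_{i=1}^{k} x_i \equiv 0 \pmod p$, (ii) $\ee{x}[B'(x)] \leq 1/\binom{r}{k}$, and (iii) the left-hand side of~\eqref{eq:ss-bound} equals $\binom{r}{k}\cdot\ee{x,y}[B'(x) B'(y)]$. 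Here the averaging over $\gamma'\in\Z_p^\ast$ plays the role of the random invertible linear map $\bar T$ in the XOR proof: for any obfuscated sample $x$ the rescaling $\gamma' x$ is again distributed as a fresh obfuscation, and the symmetrization via a re-indexing permutation $R$ with $R(K)=K'$ transfers from~\eqref{eq:3} almost verbatim.

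Next I would Fourier-decompose $B' = \sum_{S \in \Z_p^r} \wh{B'}(S)\,\chi_S$ using the characters $\chi_S(x) = \omega_p^{\langle S,x\rangle}$ (with $\omega_p$ a primitive $p$-th root of unity), and split into a \emph{Cartesian} part, indexed by $S$ of the form $\lambda\cdot\one_U$ for $\lambda\in\Z_p^\ast$ and $U\subseteq[r]$, and its orthogonal complement. To analyze $\ee{\z}[\chi_S(x)\overline{\chi_{S'}(y)}]$ I would invoke the Chinese Remainder Theorem to identify $\z_i \leftrightarrow (a_i,b_i) \in \Z_p\times\Z_q$ and derive the pointwise description
\[
x_i \;=\; c^{(1)} a_i + d_i^{(1)} \modd p, \qquad y_i \;=\; c^{(2)} a_i + d_i^{(2)} \modd p,
\]
where $c^{(j)} = \gamma^{(j)} \alpha^{(j)} q^{-1} \bmod p$ is uniform in $\Z_p^\ast$ and the rounding-noise term $d_i^{(j)}$ depends only on $b_i,\alpha^{(j)},\gamma^{(j)}$ and is independent of the $a_i$'s. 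The expectation over $a$ then collapses the double sum to indices with $c^{(1)} S \equiv c^{(2)} S' \pmod p$ coordinate-wise, while the expectation over $b$ yields a product of truncated geometric character sums of the shape $q^{-1}\sum_{t=-(q-1)/2}^{(q-1)/2} \omega_p^{\lambda t}$.

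For the non-Cartesian part, the constraint $c^{(1)} S \equiv c^{(2)} S'$ over random $c^{(1)}, c^{(2)} \in \Z_p^\ast$ already forces $S'$ to be a scalar multiple of $S$ and contributes a factor $O(1/(p-1))$; crucially, whenever $S$ is not of the form $\lambda\one_U$, the character sums over $b$ provide an additional anti-concentration factor via a Littlewood--Offord-type estimate, yielding a total non-Cartesian contribution of $O(\log q/q^2)$. For the Cartesian part I would adapt Lemma~\ref{lem:non-cartesian}: using $B'(x)=0$ off the hyperplane $\sum_{i=1}^{k} x_i \equiv 0 \pmod p$, factor $B' = f \cdot g$ with $f = \one\{\sum_{i=1}^{k} x_i \equiv 0\}$ and $g$ independent of $x_1$, and by the same pairing-of-Fourier-indices argument conclude $\|(B')^C\|_2^2 = O(p\mu^2 + \mu/p)$ with $\mu = \ee{}[B'] \leq 1/\binom{r}{k}$. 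Combining the two bounds through the identity from the first stage produces~\eqref{eq:ss-bound}. The main obstacle is precisely the Littlewood--Offord-type control of the truncated character sums when $S$ has several distinct nonzero entries, since the rounding makes the obfuscation only approximately linear and the clean rank-counting of images that drove Lemma~\ref{lem:noise} must be replaced by anti-concentration estimates for linear combinations of centered residues modulo $q$ inside $\Z_p$.
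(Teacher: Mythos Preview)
Your first stage (the definition of $B'$ and the reduction to $\binom{r}{k}\be_{x,y}[B'(x)B'(y)]$) and your CRT reparametrization $x_i=c^{(1)}a_i+d_i^{(1)}$, $y_i=c^{(2)}a_i+d_i^{(2)}$ are correct and match the paper (Lemma~\ref{lem:to-bounded-sum} and Claim~\ref{claim:dist}). The gap is the Cartesian/non-Cartesian split: the claim that non-Cartesian $S$ pick up an extra anti-concentration factor from the $b$-average is false. Take $S=(1,-1,0,\ldots,0)$, non-Cartesian in your sense. With $u_i=\lfloor\alpha v_i/q\rceil\bmod p$ as in Claim~\ref{claim:dist}, one has $\Pr_u[\langle S,u\rangle=0]=\Pr[u_1=u_2]$; conditioned on $\alpha$ the variables $u_1,u_2$ are i.i.d.\ with support of size at most $q$, so $\Pr[u_1=u_2\mid\alpha]\geq 1/q$ and hence, via Lemma~\ref{lem:Mpqr-simp}, $(p-1)M_{p,q,r}(S)=\tfrac{p\,\Pr_u[\langle S,u\rangle=0]-1}{p-1}\geq\tfrac{p-q}{q(p-1)}$, which is of order $1/q$ when $p\gg q$ --- a factor roughly $q/\log q$ too large. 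The structural reason is that the only linear part of the arithmetic obfuscation is multiplication by a \emph{single} scalar $c^{(j)}\in\Z_p^\ast$, which acts identically on all coordinates; there is no analogue of the row-wise independent random linear map from the XOR case that made two distinct rows of $S$ decorrelate separately. (Your side claim that the $b$-average factors as a product of Dirichlet-type sums $q^{-1}\sum_t\omega_p^{\lambda t}$ is also inaccurate: both $d_i^{(1)}$ and $d_i^{(2)}$ depend on the \emph{same} $b_i$ through different residues $\alpha^{(j)}\bmod q$, so each factor is a genuinely two-variable character sum.)

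The paper replaces the Cartesian split by a different grouping. After establishing $\cov(B'(x),B'(y))=(p-1)\sum_{S\neq 0}M_{p,q,r}(S)\wh{B'}(S)^2$ (Lemma~\ref{lem:noise2}), it uses the shift symmetry $\wh{B'}(S+S')=\wh{B'}(S)$ with $S'_j=\one\{j\in[k]\}$ (Claim~\ref{claim:useful-sum}) to average $M_{p,q,r}(S)$ over the two-dimensional slice $\{\eta S+\nu S':\eta\in\Z_p^\ast,\nu\in\Z_p\}$ (Lemma~\ref{lem:non-cartesian2}), and then bounds the \emph{sum} $\sum_{\eta,\nu}M_{p,q,r}(\eta S+\nu S')$ by $O(p\log q/q^2)$ (Lemma~\ref{lem:mag-exp}). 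The gain over the trivial $O(p/q)$ comes from cancellation across the slice, proved by a Littlewood--Offord argument applied simultaneously to the pair of linear forms $\langle S,u\rangle,\langle S',u\rangle$ and a divisor-counting estimate (Lemma~\ref{lem:number-theory}). This two-dimensional averaging, not a pointwise improvement of $M_{p,q,r}(S)$, is the missing idea in your outline.
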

	Notice that Lemma~\ref{lem:ss-bound} immediately follows from Lemma~\ref{lem:aux-ss-bound} (compare~\eqref{eq:indep2} with~\eqref{eq:ss-bound}).

\subsubsection{Proof outline} The proof of Lemma~\ref{lem:aux-ss-bound} uses techniques from \emph{discrete Fourier analysis} and combinatorial techniques.
Technically, the proof is more complicated than for $k$-XOR since the $k$-SUM analog of Lemma~\ref{lem:noise} does not hold. Instead, it is replaced by the sequence of lemmas~\ref{lem:noise2}--\ref{lem:mag-exp}. The proof consists of three steps presented in the order of appearance in the paper.
\begin{enumerate}
	\item \textbf{Transformation to real-valued functions.}
	Similarly to the $k$-XOR obfuscation, we show that instead of analyzing the obfuscation on a tuple-valued function, it is sufficient to analyze its action on the simpler class of real-valued functions.
	We utilize the fact that our obfuscation randomly permutes the input vectors, so that any oracle $B:\integers_{p}^r \to {\binom{[r]}{k}}$
must, informally, treat all candidate output $k$-tuples in the same way.
	Hence it suffices to analyze the modified, real-valued, oracle $B':\integers_{p}^r \to [0,1]$ which essentially indicates the probability that $B$ outputs the specific $k$-tuple $K \defeq \{1,\ldots,k\}$ when applied on its input. Specifically, our task is reduced to showing that
	\begin{equation}\label{eq:intro-to-bounded-sum}
	 	\be[B'(x) B'(y)] \leq O\Big( \frac{\log(q)}{q^2} + \frac{p}{q\binom{r}{k}} \Big) / r^k,
	 \end{equation}
	where $x,y$ are two independent obfuscations of a common, random, $z \in \integers_{pq}^r$.

	\item \textbf{Representing the correlation in terms of the Fourier expansion.} In order to prove~\eqref{eq:intro-to-bounded-sum}, we consider the Fourier expansion of $B'$, namely,
	\[
		B'=\sum_{S \in \integers_{p}^r} \wh{B}'(S) \chi_S, \qquad \mrm{where} \qquad \chi_S(v) = \exp \left(\frac{2\pi i}{p}\langle S,v \rangle \right), \qquad \wh{B'}(S) \in \mathbb{C}.
	\]

	It turns out that the correlation between two obfuscations (which appear in different iterations of Algorithm~A described above) is a weighted sum of the squared Fourier coefficients $\wh{B}'(S)^2$:
	\begin{equation}\label{eq:summary-bound1}
		\mathrm{Cov}(B'(x),B'(y))=\sum_{S \neq 0} (p-1)M_{p,q,r}(S)\wh{B}'(S)^2,
	\end{equation}
	where $M_{p,q,r}(S)$ roughly serve as the `weight' for $\wh{B}'(S)^2$, and are defined as $M_{p,q,r}(S) = \be[\chi_S(x) \overline{\chi_S(y)}]$.
		
	\item \textbf{Bounding the correlation using discrete Fourier analysis.}
	We bound the correlation, the right hand side of~\eqref{eq:summary-bound1}, in two steps:
		\begin{enumerate}
			\item \emph{Partitioning into 2-dimensional subspaces.} We use the structure of solutions to the $k$-SUM problem to show that it is sufficient to bound $\sum_{S \in U} M_{p,q,r}(S)$ over two-dimensional subspaces $U = \{a S_1 + b S_2:a,b \in \integers_{p}\} \seq \integers_p^r$ of a certain kind.
			
			\item \emph{Bounding $\sum_S M_{p,q,r}(S)$ over 2-dimensional subspaces.} We bound $\sum_{S \in U} M_{p,q,r}(S)$ over two-dimensional subspaces $U$, using a combinatorial approach. Specifically, we represent this sum as the bias introduced in an event, caused by the dependence between $x,y$. We show that this bias is related to a Littlewood-Offord-type problem~\cite{LO43}.
			Specifically, in Lemma~\ref{lem:mag5} we relate $M_{p,q,r}(S)$ to the probability $\pr[\langle S, u\rangle = 0]$ for a random vector $u \in \integers_p^r$. This probability concerns the event where the sum of random variables $\langle S, u\rangle = \sum_i S_i u_i$ equals $0$. Bounding this type of probability is common in Littlewood-Offord theory. To this end, we use a classical antichain argument (see~\cite{Erdos45}) along with simple number theoretic estimates.
		\end{enumerate}
\end{enumerate}

\subsubsection{Transformation to real-valued functions}

		\begin{lemma}\label{lem:to-bounded-sum}
			Let $\func{B}{\integers_p^{r}}{\binom{[r]}{k}}$ be an algorithm that outputs either a $k$-tuple $R$ with $\sum_{i \in R} x_i \% p = 0$, or a failure message.
			Let $K \defeq \{1,\ldots , k\}$ and define $\func{B'}{\integers_p^{r}}{[0,1]}$ by
			\begin{equation}\label{eq:2_2}
				B'(x) = \Pr_{P,\gamma}[B(P(\gamma \cdot x))=P(\{1,2,\ldots,k\})],
			\end{equation}
			where $P \sim S_r$ is a uniformly random permutation, and $\gamma \sim \integers_{p}^*$ is a uniformly random invertible element of $\integers_p$.
			Then,
			\begin{equation}\label{eq:sum-annihilate}
				\sum_{i \in K} x_i \neq 0 \  \implies \  B'(x) = 0,
			\end{equation}
			\begin{equation}\label{eq:sum-measure-bound}
				\mu \defeq \be_x[B'(x)] \leq 1/\binom{r}{k},
			\end{equation}
			and if $(x,y)$ has a $(p,q,r)$-arithmetic distribution and $P,Q \sim S_r$ are independent, then
			\begin{equation}\label{eq:sum-to-bounded-need}
				 \Pr[P^{-1}(B(P(x))) = Q^{-1}(B(Q(y)))] = \binom{r}{k} \be_{x,y}[B'(x) B'(y)].
			\end{equation}
		\end{lemma}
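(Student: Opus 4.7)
The plan is to closely mirror the proof of the $k$-XOR analog, Lemma~\ref{lem:to-bounded}, with the random invertible linear map $\bar{T} \sim GL_m(\field_2)$ used there replaced here by multiplication by a uniform random invertible scalar $\gamma \sim \integers_p^*$. The three claims~\eqref{eq:sum-annihilate},~\eqref{eq:sum-measure-bound},~\eqref{eq:sum-to-bounded-need} will be addressed in turn; the first two are essentially immediate, and the third is the analog of~\eqref{eq:to-bounded-need} in the XOR case.

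For~\eqref{eq:sum-annihilate}, I would use that since $p$ is prime and $\gamma \in \integers_p^*$, both scalar multiplication by $\gamma$ and applying a permutation $P$ preserve non-vanishing of the modular sum of any fixed subset of coordinates. Thus if $\sum_{i \in K} x_i \not\equiv 0 \pmod p$, the $B$-oracle cannot return $P(K)$ on input $P(\gamma x)$, and averaging over $P, \gamma$ yields $B'(x) = 0$. For~\eqref{eq:sum-measure-bound}, I would swap the order of expectations and use that $x' \defeq P(\gamma x)$ is uniformly distributed in $\integers_p^r$ independently of $P$ (when $x$ is uniform and independent of $P, \gamma$), which reduces the bound to the observation that, for fixed $x'$, the random set $P(K)$ hits the fixed target $B(x')$ with probability at most $1/\binom{r}{k}$.

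For~\eqref{eq:sum-to-bounded-need}, I follow the two-stage strategy of the XOR proof. First, for each $k$-subset $K' \seq [r]$, define $B'_{K'}(x) = \Pr_{P,\gamma}[B(P(\gamma x)) = P(K')]$ and, by taking a deterministic permutation $R$ with $R(K) = K'$ together with the permutation-invariance of the $(p,q,r)$-arithmetic distribution of $(x,y)$ (inherited from the exchangeability of $\z$ in Definition~\ref{def:noise2}), establish the identity $\be_{x,y}[B'_{K'}(x)B'_{K'}(y)] = \be_{x,y}[B'(x)B'(y)]$. Second, I introduce independent auxiliary random variables $\gamma', \gamma'' \sim \integers_p^*$, and use the key distributional invariance: if $(x,y)$ has a $(p,q,r)$-arithmetic distribution, then so does $(\gamma' x, \gamma'' y)$, since $\gamma' \gamma^{(1)}$ and $\gamma'' \gamma^{(2)}$ remain uniform in $\integers_p^*$. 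Inserting these auxiliaries into the left-hand side of~\eqref{eq:sum-to-bounded-need}, commuting them past $P, Q$ (scalar multiplication commutes with coordinate permutation), and summing over $K'$ expresses the probability as $\sum_{K'} \be_{x,y}[B'_{K'}(x)B'_{K'}(y)] = \binom{r}{k}\be_{x,y}[B'(x)B'(y)]$.

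The only arithmetic-specific ingredient is the distributional invariance under the auxiliary scalars, which is immediate because $\integers_p^*$ is a multiplicative group; everything else is a routine transposition of the XOR bookkeeping, so I expect no genuine obstacle.
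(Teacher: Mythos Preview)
Your proposal is correct and follows exactly the approach the paper intends: the paper omits the proof of Lemma~\ref{lem:to-bounded-sum}, stating only that it is ``similar to the proof of Lemma~\ref{lem:to-bounded},'' and your adaptation---replacing $\bar{T} \sim GL_m(\field_2)$ by $\gamma \sim \integers_p^*$ and invoking the permutation- and scalar-invariance of the $(p,q,r)$-arithmetic distribution---is precisely that analog. One minor remark: since $B'$ is defined with the order $B(P(\gamma x))$, you don't actually need to commute the auxiliary scalars past $P,Q$ in the final step (though the commutation is of course valid).
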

We note that the probability in~\eqref{eq:2_2} is also taken over $B$'s randomness.
The proof of the lemma is similar to the proof of Lemma~\ref{lem:to-bounded}.

		\begin{proof}
			To show~\eqref{eq:sum-annihilate} note that if $\sum_{i \in K} x_i \neq 0$ then $B$ cannot output $P(K)$ on the input $P(\gamma \cdot x)$, by our assumption on $B$, and $\gamma$ being invertible. Hence $B'(x) = 0$ in such a case.

			To verify~\eqref{eq:sum-measure-bound}, denote $x' = P(\gamma \cdot x)$ and observe that when $x \sim \integers_{p}^r$, we have  $x' \sim \integers_{p}^r$ independently of $P$. Hence, by interchanging order of summation,
			\begin{align*}
				\be_{x}[B'(x)] &= \be_{P,\gamma}[\be_{x}[\one\{B(P(\gamma \cdot x)) = P(K)\}]] = \be_{P,\gamma}[\be_{x'}[\one\{B(x') = P(K)\}]] \\
				&=  \be_{x'}[\be_{P}[\one\{B(x') = P(K)\}]] \leq 1/\binom{r}{k},
			\end{align*}
			where the latter inequality holds because for any fixed $x'$, $P(K)$ attains the value of $B(x')$ with probability at most $1/\binom{r}{k}$.

			In order to prove~\eqref{eq:sum-to-bounded-need}, we reason about $\be_{x,y}[B'(x) B'(y)]$. Observe that for any $K' \seq [r]$ with $|K'| = k$, the function $B'_{K'}$ defined by $B'_{K'}(x) = \be_{P,\gamma}[\one\{B(P(\gamma \cdot x)) = P(K')\}]$ satisfies
			\begin{equation}\label{eq:sum-3}
			\be_{x,y}[B'_{K'}(x) B'_{K'}(y)] =  \be_{x,y}[B'(x) B'(y)].
			\end{equation}
			Indeed, let $R \in S_r$ be such that $R(K) = K'$. As $(R(x), R(y))$ has the same distribution as $(x, y)$, we have
			\[
			\begin{aligned}
				\mathbb{E}_{x,y}[B'_{K'}&(x) B'_{K'}(y)]
				= \mathbb{E}_{x,y}[B'_{K'}(R(x)) B'_{K'}(R(y))] \\
				&=
				\be_{x,y}\big[\be_{P', \gamma'}[\one\{B(P'R(\gamma' \cdot x)) = P'(K')\}] \be_{P'', \gamma''}[\one\{B(P''R(\gamma'' \cdot y))=P''(K')\}]\big] \\
				&= \be_{x,y}\big[\be_{P',\gamma'}[\one\{B(P'(\gamma' \cdot x))= P'R^{-1}(K')\}] \be_{P'', \gamma''}[\one\{B(P''(\gamma \cdot y))=P''R^{-1}(K')\}]\big] \\
				&=
				\be_{x,y}[B'(x) B'(y)].
			\end{aligned}
			\]
			Notice that if $(x,y)$ has a $(p,q,r)$-arithmetic distribution, and $\gamma',\gamma'' \sim \integers_{p}^*$ are uniformly random invertible elements of $\integers_p$ independent of all other variables, then $(\gamma' \cdot x, \gamma'' \cdot y)$ has a $(p,q,r)$-arithmetic distribution as well. We verify~\eqref{eq:sum-to-bounded-need}:
			\[
			\begin{aligned}
				\Pr_{x,y,P,Q}& \big[P^{-1}(B(P(x))) =
				Q^{-1}(B(Q(y)))\big] \\
				&= \sum_{K'}  \be_{x,y,P,Q}\big[\one\{P^{-1}(B(P(x)))=K'\} \one\{Q^{-1}(B(Q(y)))=K'\} \big] \\
				&= \sum_{K'} \be_{x,y,P,Q,\gamma',\gamma''}\big[\one\{P^{-1}(B(P(\gamma' \cdot x)))=K'\} \one\{Q^{-1}(B(Q(\gamma'' \cdot y)))=K'\} \big] \\
				&= \sum_{K'} \be_{x,y}\big[B'_{K'}(x) B'_{K'}(y)\big] \\
				&= \binom{r}{k} \be_{x,y} \big[ B'(x) B'(y)  \big],
			\end{aligned}
			\]
			where 
			the ultimate equality uses~\eqref{eq:sum-3}.
		\end{proof}

\medskip Due to the structure of the $k$-SUM problem, the function $B'$ has several properties that will be crucially used in the sequel. Before stating these properties, let us introduce Fourier expansion over $\integers_p$.

	\begin{definition}[Fourier expansion]
		Given $S \in \integers_{p}^r$, define $\func{\chi_{S}}{\integers_{p}^r}{\complex}$ by $\chi_S(x) = e_p(\langle S, x\rangle)$, where
		\[
			e_p(a) \defeq \exp\Big(\frac{2\pi i a}{p}\Big).
		\]
\end{definition}
		The set $\{\chi_S\}_{S \in \integers_{p}^r}$ is an orthonormal basis for the set of functions $\set{f}{\func{f}{\integers_{p}^r}{\complex}}$, with respect to the standard inner product $\langle f,g \rangle = \be_{x\sim \integers_{p}^r}[f(x) \overline{g(x)}]$. Hence, each $\func{f}{\integers_{p}^r}{\complex}$ can uniquely be decomposed as
		\[
			f = \sum_{S \in \integers_{p}^r} \wh{f}(S) \chi_S, \qquad \mrm{with} \qquad \wh{f}(S) \in \mathbb{C}.
		\]

\begin{claim}\label{claim:useful-sum}

Let $B'$ be defined as in Lemma~\ref{lem:to-bounded-sum}. Then:
\begin{enumerate}
\item For any $x \in \integers_p^r$ and any $\gamma \in \integers_p^*$, we have $B'(x)=B'(\gamma \cdot x)$.

\item $B'$ can be written in the form $B'(x)=I_k(x) \cdot g(x)$, where $I_k(x)=\one\{\sum_{i=1}^k x_i = 0\}$.

\item Let $S' \in \integers_p^r$ be defined by $S'_j =\one\{j \in [k]\}$. For any $S \in \integers_p^r$ we have $\wh{B'}(S) = \wh{B'}(S+S')$.
\end{enumerate}
\end{claim}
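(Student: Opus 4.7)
All three parts are short consequences of the definition of $B'$ in~\eqref{eq:2_2} combined with the earlier property~\eqref{eq:sum-annihilate}. I would prove them in the order stated.

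For part (1), fix $\gamma \in \integers_p^*$ and observe that
\[
	B'(\gamma \cdot x) = \Pr_{P,\gamma'}\bigl[B(P(\gamma' \gamma \cdot x)) = P(\{1,\ldots,k\})\bigr].
\]
Since $\gamma \in \integers_p^*$, the map $\gamma' \mapsto \gamma' \gamma$ is a bijection on $\integers_p^*$, so the joint distribution of $(P,\gamma'\gamma)$ equals that of $(P,\gamma')$. Substituting this change of variables yields $B'(\gamma\cdot x)=B'(x)$.

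For part (2), property~\eqref{eq:sum-annihilate} states that $B'(x) = 0$ whenever $\sum_{i=1}^k x_i \neq 0$, i.e. whenever $I_k(x)=0$. Therefore $B'(x) = I_k(x) \cdot B'(x)$ for every $x \in \integers_p^r$, and the claim holds with $g \defeq B'$ (any function agreeing with $B'$ on the support of $I_k$ would also work).

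For part (3), the key observation is that on the support of $I_k$ the character $\chi_{S'}$ is identically $1$: indeed, if $I_k(x)=1$ then $\sum_{i=1}^k x_i = 0$, and hence
\[
	\chi_{S'}(x) = e_p\!\Bigl(\sum_{i=1}^k x_i\Bigr) = e_p(0) = 1.
\]
Combined with part (2) this gives $B'(x) = B'(x)\cdot \chi_{S'}(x)$ for every $x$. Taking Fourier coefficients and using $\chi_{S'}(x)\overline{\chi_S(x)} = \overline{\chi_{S-S'}(x)}$, I would conclude
\[
	\wh{B'}(S) = \be_x\bigl[B'(x)\,\overline{\chi_S(x)}\bigr] = \be_x\bigl[B'(x)\,\chi_{S'}(x)\,\overline{\chi_S(x)}\bigr] = \wh{B'}(S-S'),
\]
and replacing $S$ by $S+S'$ gives the stated identity $\wh{B'}(S)=\wh{B'}(S+S')$. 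There is no substantial obstacle in this claim; it is a bookkeeping lemma whose content is entirely contained in the two facts that $\gamma$ is averaged uniformly in $\integers_p^*$ and that $B'$ is supported on the hyperplane $\{I_k=1\}$.
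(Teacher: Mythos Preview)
Your proof is correct and follows essentially the same approach as the paper. The only cosmetic difference is in part~(3): the paper deduces the invariance $\wh{B'}(S)=\wh{B'}(S+S')$ from the Fourier expansion $I_k = \tfrac{1}{p}\sum_{\nu \in \integers_p}\chi_{\nu S'}$, whereas you observe directly that $\chi_{S'}\equiv 1$ on the support of $B'$ and hence $B' = B'\cdot \chi_{S'}$; these are the same idea in slightly different packaging.
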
  		
\begin{proof}

The first assertion holds trivially, by the definition of $B'$. The second holds since $B(x) = 0$ whenever $I_k(x) = 0$, with $I_k(x) = \one\{ \sum_{i=1}^{k} x_i = 0 \}$. Finally, the third holds as the Fourier expansion of $I_k$ is given by $I_k = \frac{1}{p}\sum_{\nu \in \integers_p} \chi_{\nu \cdot S'}$ with $S'_j = \one\{j \in [k]\}$.
\end{proof}

Following Claim~\ref{claim:useful-sum}, we shall study and exploit properties of functions $f:\integers_p^r \to [0,1]$ that satisfy $f(\gamma x)=f(x)$ and $\wh{f}(S+S') = \wh{f}(S)$ for $S'$ as defined in Claim~\ref{claim:useful-sum} and all $x,\gamma,S$.

\subsubsection{Representing the correlation in terms of the Fourier expansion}

In this subsection we present a sequence of lemmas that shall be used in our proof. These lemmas allow us to represent the correlation between different obfuscations in terms of the Fourier expansion, and will be helpful in bounding the correlation using discrete Fourier analysis in the following subsections.
Note that all inner products from now on are between elements of $\Z_p^r$, and consequently, their results lie in $\Z_p$.

\paragraph{An alternative representation of the obfuscation.}

In our proof, we shall frequently use the following alternative view of the obfuscation.

\begin{claim}\label{claim:dist}
		A pair $(x,y)$ taking a $(p,q,r)$-arithmetic-distribution may be sampled by drawing $x \sim \integers_p^r$ uniformly at random, choosing $v \sim \{ (1-q)/2, (3-q)/2, \ldots, (q-1)/2 \}^r$ uniformly,
		along with $\alpha \sim \integers_{p\cdot q}^{\ast}$
		and $\gamma, \gamma' \sim \integers_{p}^{\ast}$, and setting
		\[
			\forall i \in [r] \cc \quad y_i = (\gamma x_i + \gamma' \lfloor \alpha v_i / q \rceil) \modd p.
		\]
\end{claim}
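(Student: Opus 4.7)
The plan is to prove the claim by an explicit change of variables, starting from the description in Definition~\ref{def:noise2} and ending at the description in the statement. Assume throughout that $q$ is odd (the $q=2$ case only needs a cosmetic adjustment of the range of $v_i$). The first step is to reparametrize: replace $(\z,\alpha^{(1)},\alpha^{(2)})$ by $(w,\alpha^{(1)},\beta)$, where $w=\alpha^{(1)}\z\modd pq$ and $\beta=\alpha^{(2)}(\alpha^{(1)})^{-1}\modd pq$. Since $\alpha^{(1)}$ acts bijectively both on $\integers_{pq}$ and on $\integers_{pq}^\ast$, the tuple $(w,\alpha^{(1)},\beta,\gamma^{(1)},\gamma^{(2)})$ is independent and uniform on its natural domain. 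Crucially, $\alpha^{(1)}$ no longer appears in the formulas $x_i=\gamma^{(1)}\lfloor w_i/q\rceil\modd p$ and $y_i=\gamma^{(2)}\lfloor \beta w_i/q\rceil\modd p$, so it may be discarded.

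The central tool is the bijection $\Psi\colon \integers_p\times V\to \integers_{pq}$, where $V=\{(1-q)/2,(3-q)/2,\ldots,(q-1)/2\}$, given by $\Psi(A,B)=qA+B\modd pq$. A short case split shows both that $\Psi$ is indeed a bijection and that $\lfloor \Psi(A,B)/q\rceil\equiv A\pmod p$; the one delicate case is $A=0$ with $B<0$, where $\Psi(A,B)\in[pq-(q-1)/2,pq-1]$ and $\lfloor \Psi(A,B)/q\rceil=p$, but then $p\equiv 0\equiv A\pmod p$. This boundary check is the main bookkeeping obstacle in the proof, and it dissolves once noticed. Applying $\Psi^{-1}$ coordinate-wise to the uniform vector $w$ replaces it by independent uniform variables $A\sim \integers_p^r$ and $B\sim V^r$, and yields the clean identity $x_i=\gamma^{(1)}A_i\modd p$.

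The main computation is to express $\lfloor \beta w_i/q\rceil\modd p$ in terms of $(A_i,B_i)$ and the CRT components $\beta_p,\beta_q$ of $\beta$. Writing $\beta w_i\modd pq=qA'_i+B'_i$ with $(A'_i,B'_i)\in\integers_p\times V$, the congruence $B'_i\equiv \beta_q B_i\pmod q$ together with $B'_i\in V$ pins $B'_i$ down uniquely, and reducing the identity $qA'_i=(\beta w_i\modd pq)-B'_i$ modulo $p$ yields $A'_i\equiv \beta_p A_i+q^{-1}(\beta_p B_i-B'_i)\pmod p$. Substituting $A_i=(\gamma^{(1)})^{-1}x_i$ gives $y_i=\gamma^{(2)}\beta_p(\gamma^{(1)})^{-1}x_i+\gamma^{(2)}q^{-1}(\beta_p B_i-B'_i)\modd p$; reversing $\Psi$ (applied to the input $\beta\cdot B_i$) recognizes the second summand as $\gamma^{(2)}\lfloor\beta B_i/q\rceil\modd p$, so upon setting $\gamma=\gamma^{(2)}\beta_p(\gamma^{(1)})^{-1}$, $\gamma'=\gamma^{(2)}$, $\alpha=\beta$, and $v=B$, the expression for $y_i$ matches the one in the claim.

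Finally, I need to verify that under these substitutions $(x,\gamma,\gamma',\alpha,v)$ are jointly independent and uniform on the prescribed domains. Because $(A,B,\beta,\gamma^{(1)},\gamma^{(2)})$ are independent uniform, and because for each fixed $(\gamma^{(2)},\beta_p)$ the map $\gamma^{(1)}\mapsto\gamma^{(2)}\beta_p(\gamma^{(1)})^{-1}$ is a bijection of $\integers_p^\ast$, the tuple $(A,B,\beta,\gamma,\gamma^{(2)})$ is again independent uniform. A direct pmf computation using $x=(\gamma^{(2)}\beta_p/\gamma)\cdot A$ and summing over $A$ then shows that $x$ is uniform on $\integers_p^r$ and independent of the other four coordinates, completing the verification.
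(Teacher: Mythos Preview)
Your proof is correct and follows essentially the same route as the paper's: both decompose $\alpha^{(1)}z_i\bmod pq$ as $q\cdot(\text{element of }\integers_p)+(\text{element of }V)$, then push $\alpha^{(2)}(\alpha^{(1)})^{-1}$ through this decomposition and read off the same substitutions $\gamma=\gamma^{(2)}\beta_p(\gamma^{(1)})^{-1}$, $\gamma'=\gamma^{(2)}$, $\alpha=\beta$, $v=B$. Your version is more explicit---you name the bijection $\Psi$, verify the rounding identity $\lfloor\Psi(A,B)/q\rceil\equiv A\pmod p$ including the boundary case $A=0,\,B<0$, and spell out the independence argument via successive bijective reparametrizations---whereas the paper asserts the key algebraic step (splitting $\lfloor\beta(qA+B)/q\rceil$ into $\beta_pA+\lfloor\beta B/q\rceil$ modulo $p$) without justification; but the underlying computation is identical.
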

\begin{proof}
		Assume $(x,y)$ has a $(p,q,r)$-arithmetic distribution, that is, $x_i = \gamma_1 \cdot \lfloor \alpha_1 \cdot \z_i / q \rceil \% p$ and $y_i = \gamma_2 \cdot \lfloor \alpha_2 \cdot \z_i / q \rceil \% p$, where $\z,\alpha_1,\alpha_2,\gamma_1,\gamma_2$ are as in Definition~\ref{def:noise2}.
		Write $\alpha_1 \z_i \% (pq) = q (\gamma_1^{-1} x_i \% p) + v_i$ with $v_i \in \{ (1-q)/2, \ldots, (q-1)/2 \}$. Note that, under the fixing of any (invertible) $\gamma_1, \alpha_1$, each pair of $(x_i, v_i) \in \integers_p \times \{ (1-q)/2, \ldots, (q-1)/2 \}$ arises from exactly one $\z_i \in \integers_{pq}$. Since $\z_i$ is uniformly distributed and independent of $\gamma_i, \alpha_i$, we see that $x_i, v_i$ are uniformly distributed (as stated in the claim) and independent of each other and of $\gamma_i, \alpha_i$.

		Finally,
		\[
			y_i = \gamma_2 \cdot \lfloor \alpha_2 \cdot \alpha_1^{-1} (q (\gamma_1^{-1} x_i \% p) + v_i) / q \rceil \% p,
		\]
		meaning that
		\[
			y_i = ((\gamma_2 \alpha_2 \alpha_1^{-1} \gamma_1^{-1} \% p) x_i + \gamma_2 \lfloor \alpha_2 \alpha_1^{-1} v_i / q \rceil) \% p.
		\]
		Denoting $\gamma \defeq (\gamma_2 \alpha_2 \alpha_1^{-1} \gamma_1^{-1} \% p)$, $\gamma' \defeq \gamma_2$ and $\alpha \defeq \alpha_2 \alpha_1^{-1}$, we have $y_i = (\gamma x_i + \gamma' \lfloor \alpha v_i / q \rceil) \modd p$. Note that $\gamma, \gamma', \alpha$ have the asserted distribution and are independent of $x, v$.
	\end{proof}

\paragraph{A quantity representing the contribution of $\wh{B}'(S)$ to the correlation.}
	We now formally introduce the notion $M_{p,q,r}(S)$ that will play a central role in the proof. The relevance of the notion to the correlation we study is shown in Lemma~\ref{lem:noise2} below.
	\begin{definition}\label{def:mag}
	 Let $S \in \integers_{p}^r$, and let $(x,y)$ be a pair that has a $(p,q,r)$-arithmetic distribution.
	 Define the magnitude of $S$ as (the real number)
	 \[
	 	M_{p,q,r}(S) \defeq \be[\chi_S(x) \overline{\chi_S(y)}].
	 \]
	\end{definition}

	\begin{lemma}[Orthogonality]\label{lem:mag-support}
		Let $S, S' \in \integers_{p}^r$, and assume $(x,y)$ has a $(p,q,r)$-arithmetic distribution.
		If $S' = \gamma' S$ with $\gamma' \in \integers_p^\ast$, then we have
		\begin{equation}\label{eq:orth1}
	 		\be[\chi_S(x) \overline{\chi_{S'}(y)}] = M_{p,q,r}(S).
	 	\end{equation}
	 	Otherwise (if $S' \neq \gamma' S$ for all $\gamma' \in \integers_p^\ast$),
	 	\begin{equation}\label{eq:orth2}
	 		\be[\chi_S(x) \overline{\chi_{S'}(y)}] = 0.
	 	\end{equation}
	\end{lemma}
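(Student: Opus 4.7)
The plan is to use the alternative representation of $(x,y)$ from Claim~\ref{claim:dist} and then apply standard character orthogonality on $\integers_p^r$, integrating out $x$ first.

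Concretely, by Claim~\ref{claim:dist} we may write $y = \gamma x + \gamma' w \pmod p$ componentwise, where $w_i = \lfloor \alpha v_i / q \rceil$, and $(x, v, \alpha, \gamma, \gamma')$ are mutually independent with $x \sim \integers_p^r$ uniform, $v$ uniform in the centered $q$-box, $\alpha \sim \integers_{pq}^*$, and $\gamma, \gamma' \sim \integers_p^*$. Using $\chi_T(a+b) = \chi_T(a)\chi_T(b)$ and bilinearity of the inner product, expand
\[
\chi_S(x) \overline{\chi_{S'}(y)} = e_p\bigl(\langle S - \gamma S', x\rangle - \gamma' \langle S', w\rangle\bigr).
\]
Taking expectation over $x$ first and using that $\be_x[e_p(\langle T, x\rangle)] = \one\{T = 0\}$ for $x$ uniform in $\integers_p^r$, we get
\[
\be\bigl[\chi_S(x) \overline{\chi_{S'}(y)}\bigr] = \be_{\gamma, \gamma', \alpha, v}\Bigl[\one\{\gamma S' = S\}\cdot e_p\bigl(-\gamma' \langle S', w\rangle\bigr)\Bigr].
\]

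Now I split on whether $S'$ is a $\integers_p^*$-multiple of $S$. If no such $\gamma_0 \in \integers_p^*$ exists with $S' = \gamma_0 S$ (which includes the cases $S = 0, S' \neq 0$ and $S \neq 0, S' = 0$, as well as $S, S' \neq 0$ not proportional), then the equation $\gamma S' = S$ has no solution in $\integers_p^*$, the indicator is identically zero, and \eqref{eq:orth2} follows. Conversely, if $S' = \gamma_0 S$ with $\gamma_0 \in \integers_p^*$, then for $S \neq 0$ the condition $\gamma S' = S$ holds iff $\gamma = \gamma_0^{-1}$ (an event of probability $1/(p-1)$), while for $S = 0$ it holds automatically; in either case, after a change of variables $\gamma' \mapsto \gamma_0^{-1} \gamma'$ (which is a bijection on $\integers_p^*$ since $\gamma_0 \in \integers_p^*$), the remaining expectation matches exactly the analogous computation for $M_{p,q,r}(S) = \be[\chi_S(x)\overline{\chi_S(y)}]$, yielding \eqref{eq:orth1}.

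There is no genuine obstacle here: this is a routine Fourier orthogonality argument once the representation of Claim~\ref{claim:dist} is in hand. The only subtlety is bookkeeping the two cases $S = 0$ and $S \neq 0$ separately when collapsing the indicator $\one\{\gamma S' = S\}$, and verifying that the change of variable $\gamma' \mapsto \gamma_0^{-1}\gamma'$ on $\integers_p^*$ correctly reduces the expression to $M_{p,q,r}(S)$ rather than to $M_{p,q,r}(S')$ (they coincide up to the factor $1/(p-1)$ absorbed by the indicator, consistent with the symmetry $M_{p,q,r}(\gamma_0 S) = M_{p,q,r}(S)$ that also falls out of the same computation).
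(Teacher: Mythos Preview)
Your proof is correct. For~\eqref{eq:orth2} your argument is essentially identical to the paper's: both invoke Claim~\ref{claim:dist} and then average over $x$, using that $\langle S-\gamma S',x\rangle$ is a nontrivial linear form when $S,S'$ are non-proportional. For~\eqref{eq:orth1} you take a slightly different route: you continue with the explicit expansion from Claim~\ref{claim:dist}, collapse the indicator $\one\{\gamma S'=S\}$, and then use the change of variables $\gamma'\mapsto\gamma_0^{-1}\gamma'$ on $\integers_p^\ast$ to match the expression for $M_{p,q,r}(S)$. The paper instead handles~\eqref{eq:orth1} by a one-line distributional symmetry: since $(x,\gamma'^{-1}y)$ again has a $(p,q,r)$-arithmetic distribution, one gets $\be[\chi_S(x)\overline{\chi_{S'}(y)}]=\be[\chi_S(x)\overline{\chi_{S'}(\gamma'^{-1}y)}]=\be[\chi_S(x)\overline{\chi_S(y)}]=M_{p,q,r}(S)$ directly, without ever unpacking Claim~\ref{claim:dist}. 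Your approach has the advantage of yielding the identity $M_{p,q,r}(\gamma_0 S)=M_{p,q,r}(S)$ as a by-product of the same computation; the paper's approach is shorter and avoids the case split on $S=0$ versus $S\neq 0$.
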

	\begin{proof}
	To verify~\eqref{eq:orth1}, assume $S' = \gamma' S$ with $\gamma' \in \integers_p^\ast$.
	Note that by Definition~\ref{def:noise2}, if $(x,y)$ has a $(p,q,r)$-arithmetic-distribution, then $(x, \gamma'^{-1} y)$ admits a $(p,q,r)$-arithmetic-distribution as well. Hence,
	\[
		\be[\chi_S(x) \overline{\chi_{S'}(y)}] = \be[\chi_S(x) \overline{\chi_{S'}(\gamma'^{-1} y)}] = \be[\chi_S(x) \overline{\chi_{S}(y)}] = M_{p,q,r}(S).
	\]

	To verify~\eqref{eq:orth2}, recall that Claim~\ref{claim:dist} shows that $x \sim \integers_p^r$ and $y_i = \gamma x_i + \gamma' u_i$, where $\gamma, \gamma' \sim \integers_p^\ast$, and $u$ is independent of $x$ (its distribution is irrelevant for the current proof). Condition on the values of $\gamma, \gamma', u$, so that
	\[
		\be[\chi_S(x) \overline{\chi_{S'}(y)}] = \be_{\gamma, \gamma', u} [ \be_{x} \li[ \given{e_p ( \langle S, x \rangle - \langle S', y \rangle)}{\gamma, \gamma', u} \ri] ].
	\]
	Notice that given $\gamma, \gamma',$ and $u$, the expression $\langle S, x \rangle - \langle S', y \rangle$ is linear in $x$, and is \textbf{non-constant} (for all $\gamma, \gamma', u$), since $S, S'$ are non-proportional. Hence, $\be[e_p ( \langle S, x \rangle - \langle S', y \rangle)] = 0$, since $x \sim \integers_p^r$, and the expected value of $e_p(x')$ when $x' \sim \integers_p$ is uniformly distributed is $0$.
	\end{proof}

\newtheorem*{notation*}{Notation}
	\begin{notation*}
		Given two vectors $S, S' \in \integers_p^r$ and two scalars $\alpha, \beta \in \integers_p$, we denote by $\alpha S + \beta S' \in \Z_p^r$ the vector $S''$ having for all $i \in [r]$,
		\[
			S''_i = (\alpha S_i + \beta S'_i) \modd p.
		\]
	\end{notation*}
	
\paragraph{Representing the correlation in terms of $M_{p,q,r}$ and the Fourier expansion.}
	The following lemma shows how $M_{p,q,r}$ can be used to estimate the correlation $\cov(B'(x), B'(y))$ we aim at bounding, thus establishing~\eqref{eq:summary-bound1}.
	\begin{lemma}\label{lem:noise2}
		Let $\func{f}{\integers_{p}^r}{\complex}$ have $f(\gamma x) = f(x)$ for all $x \in \integers_{p}^r$, and $\gamma \in \integers_p^\ast$. Suppose $(x, y)$ has a $(p,q,r)$-arithmetic-distribution. Then
		\[
			\cov(f(x), f(y)) =
			\sum_{\substack{S \in \integers_{p}^r \\ S \neq 0_r}}
			(p-1) M_{p,q,r}(S) \wh{f}(S)^2.
		\]
	\end{lemma}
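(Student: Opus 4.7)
My plan is to expand $f$ in the Fourier basis $\{\chi_S\}_{S \in \integers_p^r}$, plug into $\be[f(x) f(y)]$, apply the orthogonality result of Lemma~\ref{lem:mag-support} to collapse the double sum to a single sum over Fourier modes, and use the multiplicative invariance hypothesis on $f$ to identify the Fourier coefficients in each orbit $\integers_p^* \cdot S$.

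Concretely, writing $f = \sum_S \wh{f}(S)\chi_S$ gives
\[
\be[f(x) f(y)] \;=\; \sum_{S, S' \in \integers_p^r} \wh{f}(S)\wh{f}(S')\,\be[\chi_S(x)\chi_{S'}(y)].
\]
Since $\chi_{S'}(y) = \overline{\chi_{-S'}(y)}$, Lemma~\ref{lem:mag-support} shows that the inner expectation equals $M_{p,q,r}(S)$ when $S'$ lies in the orbit $\integers_p^*\cdot S$ (using that $-1 \in \integers_p^*$, so $-S' \in \integers_p^* \cdot S$ iff $S' \in \integers_p^* \cdot S$) and vanishes otherwise.

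Next, I would show that the hypothesis $f(\gamma x) = f(x)$ implies $\wh{f}(\gamma S) = \wh{f}(S)$ for every $\gamma \in \integers_p^*$. This follows from a change of variables:
\[
\wh{f}(\gamma S) = \be_x[f(x)\overline{\chi_S(\gamma x)}] = \be_{x'}[f(\gamma^{-1} x')\overline{\chi_S(x')}] = \be_{x'}[f(x')\overline{\chi_S(x')}] = \wh{f}(S),
\]
where the substitution $x' = \gamma x$ preserves the uniform distribution on $\integers_p^r$. Consequently, for each non-zero $S$ the orbit $\integers_p^*\cdot S$ has exactly $p-1$ elements (since $p$ is prime), and $\wh{f}(S')=\wh{f}(S)$ for every $S'$ in it, so the total contribution from this orbit (i.e. summing over $S'$ with $S$ fixed) equals $(p-1)\wh{f}(S)^2 M_{p,q,r}(S)$. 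Summing over all $S\neq 0$ yields $\sum_{S\neq 0}(p-1)\wh{f}(S)^2 M_{p,q,r}(S)$.

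Finally, the $S=0$ term contributes $\wh{f}(0)^2 M_{p,q,r}(0) = \wh{f}(0)^2$ (using $M_{p,q,r}(0)=1$ and, again by Lemma~\ref{lem:mag-support} with $S=0$, that $\be[\chi_{S'}(y)]=\one\{S'=0\}$). Combined with the marginal uniformity of both $x$ and $y$ on $\integers_p^r$ (immediate from Claim~\ref{claim:dist} once we condition on the invertible $\gamma$), this gives $\be[f(x)]\be[f(y)] = \wh{f}(0)^2$, so the $S=0$ term cancels upon passing to the covariance, yielding the claimed formula. The only non-routine step is the orbit-invariance identity for $\wh{f}$ and the bookkeeping that each orbit appears $p-1$ times in the sum (once for each representative $S$), yielding a single factor of $p-1$ in the summand rather than the naive $(p-1)^2$.
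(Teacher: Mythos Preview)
Your proposal is correct and follows essentially the same approach as the paper: Fourier-expand $f$, apply the orthogonality Lemma~\ref{lem:mag-support} to reduce the double sum over $(S,S')$ to a sum over $S'\in\integers_p^\ast\cdot S$, and use the orbit-invariance $\wh f(\gamma S)=\wh f(S)$ (derived exactly as you do, by the change of variables $x'=\gamma x$) to collapse each inner sum to $(p-1)\wh f(S)^2 M_{p,q,r}(S)$. The only cosmetic difference is that the paper starts directly from the conjugated covariance $\be[(f(x)-\wh f(0))\overline{(f(y)-\wh f(0))}]$, so $\overline{\chi_{S'}(y)}$ appears immediately rather than via your substitution $\chi_{S'}=\overline{\chi_{-S'}}$, and the $S=0$ term is excluded from the outset rather than cancelled at the end.
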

	\begin{proof}
		Using the expansion $f(x) = \sum_{S \in \integers_p^r} \wh{f}(S) \chi_S(x)$, we find
		\[
			\cov(f(x), f(y)) = \be_{x,y} [ (f(x) - \wh{f}(0)) \overline{(f(y) - \wh{f}(0))}] = \sum_{S,S' \in \integers_p^r \sm \{0_r\}} \wh{f}(S) \wh{f}(S') \be_{x,y}[\chi_S(x) \overline{\chi_{S'}(y)}].
		\]
		By comparing coefficients, the assumption that $f(\gamma x) = f(x)$ (for all $x$) implies $\wh{f}(S) = \wh{f}(\gamma S)$. Combining with Lemma~\ref{lem:mag-support} we get
		\[
			\cov(f(x), f(y)) = \sum_{S \neq 0_r} \sum_{\gamma \in \integers_p^\ast} \wh{f}(S) \wh{f}(\gamma S) M_{p,q,r}(S) = (p-1) \sum_{S \neq 0_r} \wh{f}(S)^2 M_{p,q,r}(S).
		\]
	\end{proof}

\paragraph{Bounding $M_{p,q,r}(S)$.} The following two lemmas allow us to bound $M_{p,q,r}(S)$.
	\begin{lemma}\label{lem:Mpqr-simp}
		Suppose $(x, y)$ has a $(p,q,r)$-arithmetic-distribution. For any $r \in \mathbb{N}$, any primes $p,q > 0$, and any non-zero $S \in \integers_p^r$, we have
		\begin{equation}\label{eq:mag3}
		M_{p,q,r}(S) =  \frac{p^2 \pr[\langle S, x\rangle = \langle S, y\rangle = 0 ] - 1}{(p-1)^2}.
		\end{equation}
	\end{lemma}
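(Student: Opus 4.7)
The plan is to use the alternative representation from Claim~\ref{claim:dist} to decompose the joint law of $(\langle T,x\rangle, \langle T,y\rangle)$ into independent pieces, and then compute both sides of~\eqref{eq:mag3} directly. Concretely, I will write $x \sim \integers_p^r$ uniform and $y_i = \gamma x_i + \gamma' w_i \bmod p$, where $w_i = \lfloor \alpha v_i / q\rceil \bmod p$ and $\gamma, \gamma' \sim \integers_p^\ast$, $\alpha \sim \integers_{pq}^\ast$, $v \sim \{(1-q)/2,\ldots,(q-1)/2\}^r$ are mutually independent (and independent of $x$). Setting $A \defeq \langle T, x\rangle \in \integers_p$ and $W \defeq \langle T, w\rangle \in \integers_p$, this immediately yields $\langle T,y\rangle = \gamma A + \gamma' W \bmod p$, and the random variables $A, W, \gamma, \gamma'$ are mutually independent. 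Moreover, since $T \neq 0$ and $x \sim \integers_p^r$, the marginal $A$ is uniform over $\integers_p$.

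Next I would compute $M_{p,q,r}(T)$. By definition,
\[
M_{p,q,r}(T) = \be\!\left[ e_p\!\big( A - \gamma A - \gamma' W\big)\right] = \be\!\left[ e_p\!\big((1-\gamma) A - \gamma' W\big)\right].
\]
Conditioning on $(\gamma, \gamma', W)$ and averaging over the uniform $A$ shows $\be_A[e_p((1-\gamma)A)] = \one\{\gamma = 1\}$, which happens with probability $1/(p-1)$. Hence
\[
M_{p,q,r}(T) = \tfrac{1}{p-1} \cdot \be_{\gamma', W}\!\left[ e_p(-\gamma' W)\right].
\]
Averaging $\gamma' \sim \integers_p^\ast$ against a fixed $W$ gives $1$ if $W = 0$ and $\frac{1}{p-1}\sum_{c \neq 0} e_p(c) = \tfrac{-1}{p-1}$ otherwise. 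Writing $w_0 \defeq \Pr[W = 0]$, we obtain
\[
\be_{\gamma', W}[e_p(-\gamma' W)] = w_0 - \tfrac{1 - w_0}{p-1} = \tfrac{p w_0 - 1}{p-1},
\]
and therefore $M_{p,q,r}(T) = \tfrac{p w_0 - 1}{(p-1)^2}$.

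Finally, I would relate $w_0$ to the probability on the right-hand side of~\eqref{eq:mag3}. Since $A$ is uniform over $\integers_p$ and independent of $W$, and since $\langle T,y\rangle = \gamma A + \gamma' W$ with $\gamma, \gamma' \in \integers_p^\ast$, we have
\[
\Pr[\langle T, x\rangle = \langle T, y\rangle = 0] = \Pr[A = 0 \text{ and } W = 0] = \tfrac{1}{p} \cdot w_0,
\]
so $p w_0 = p^2 \Pr[\langle T, x\rangle = \langle T, y\rangle = 0]$, and substituting yields the claimed formula. The computation is essentially routine given Claim~\ref{claim:dist}; the only point that requires care is verifying the asserted independence of $A$ from $(W, \gamma, \gamma')$ and the uniform distribution of $A$, both of which follow directly from the construction in Claim~\ref{claim:dist}.
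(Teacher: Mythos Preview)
Your proof is correct. It takes a somewhat different route from the paper's argument, so let me briefly contrast the two.

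The paper does \emph{not} invoke Claim~\ref{claim:dist} here. Instead, it uses only the distributional symmetry that $(\gamma_1 x, \gamma_2 y)$ has the same $(p,q,r)$-arithmetic-distribution as $(x,y)$ for any $\gamma_1,\gamma_2 \in \integers_p^\ast$, which is immediate from Definition~\ref{def:noise2}. Averaging $e_p(\gamma_1\langle T,x\rangle - \gamma_2\langle T,y\rangle)$ over $\gamma_1,\gamma_2 \sim \integers_p^\ast$ yields an expression for $M_{p,q,r}(T)$ as a signed combination of the three probabilities $A = \Pr[\langle T,x\rangle = \langle T,y\rangle = 0]$, $B = \Pr[\text{exactly one of them is }0]$, $C = \Pr[\text{neither is }0]$, and the identities $A+B+C = 1$ and $2A+B = 2/p$ (the latter because each marginal is uniform) reduce this to~\eqref{eq:mag3}.

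Your approach instead uses the full structural decomposition of Claim~\ref{claim:dist} to make $A = \langle T,x\rangle$, $W = \langle T,w\rangle$, $\gamma$, $\gamma'$ mutually independent, and then computes both sides of~\eqref{eq:mag3} in terms of the single quantity $w_0 = \Pr[W=0]$. This is arguably more transparent --- everything factors cleanly --- but it leans on the auxiliary Claim~\ref{claim:dist}, whereas the paper's argument is self-contained given only the scaling symmetry. Both approaches carry out what is morally the same averaging over the $\integers_p^\ast$-action; they just organize the bookkeeping differently.
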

	\begin{proof}
		Assume $(x,y)$ has a $(p,q,r)$-arithmetic-distribution, and let $S \in \integers_p^r$ be any vector. We show
		\begin{equation}\label{eq:mag1}
		\begin{aligned}
			M_{p,q,r}(S) &=
			1 \cdot \pr[\langle S, x\rangle = \langle S, y\rangle = 0 ] \\
			&\phantom{=} - \frac{1}{p-1} \pr[\langle S, x\rangle = 0 \text{ XOR } \langle S, y\rangle = 0] \\
			&\phantom{=} + \frac{1}{(p-1)^2} \pr[\langle S, x\rangle \neq 0 \andd \langle S, y\rangle \neq 0], \\
		\end{aligned}
		\end{equation}
		where $(\langle S, x\rangle = 0 \text{ XOR } \langle S, y\rangle = 0)$ denotes the event that exactly one of $\langle S,x \rangle =0$ , $\langle S,y \rangle = 0$ holds.
		To verify~\eqref{eq:mag1}, note that if $(x,y)$ has a $(p,q,r)$-arithmetic distribution, then so does $(\gamma_1 x, \gamma_2 y)$, for any $\gamma_1,\gamma_2 \in \integers_p^*$. Hence,
		\[
		\begin{aligned}
			\be_{x,y}[ \chi_S(x) \overline{\chi_S(y)} ] &= \be_{x,y}[e_p ( \langle S , x\rangle - \langle S , y\rangle )] \\
			& = \be_{x,y}[e_p ( \gamma_1 \cdot \langle S , x\rangle - \gamma_2 \cdot \langle S , y\rangle )].
		\end{aligned}
		\]
		Letting $\gamma_1, \gamma_2$ be uniformly distributed in $\integers_p^{\ast}$ (independently of $(x,y)$), one can verify that for any fixed $x,y$ we have
		\begin{equation}\label{eq:mag2}
		\begin{gathered}
			\be_{\gamma_1, \gamma_2}[e_p ( \gamma_1 \cdot \langle S , x\rangle - \gamma_2 \cdot \langle S , y\rangle )] = \be_{\gamma_1}[e_p ( \gamma_1 \cdot \langle S , x\rangle )] \be_{\gamma_2} [e_p( \gamma_2 \cdot \langle S , y\rangle )]
			\\
			= \Big(\one\{\langle S , x\rangle = 0\} - \frac{\one\{\langle S , x\rangle \neq 0 \}}{p-1}\Big) \cdot
			\Big(\one\{\langle S , y\rangle = 0\} - \frac{\one\{\langle S , y\rangle \neq 0 \}}{p-1}\Big).
		\end{gathered}
		\end{equation}
		Equation~\eqref{eq:mag1} then follows by averaging~\eqref{eq:mag2} over $(x,y)$.

		\medskip Denote the three probabilities in~\eqref{eq:mag1} by $A,B,C$, that is:
		\[
		\begin{aligned}
			A &\defeq \pr[\langle S, x\rangle = \langle S, y\rangle = 0 ], \\
			B &\defeq \pr[\langle S, x\rangle = 0 \text{ XOR } \langle S, y\rangle = 0], \\
			C &\defeq \pr[\langle S, x\rangle \neq 0 \andd \langle S, y\rangle \neq 0].
		\end{aligned}
		\]
		Note that $A+B+C=1$, and that for any $S\neq 0$, we have $2A+B = 2\cdot \pr[\langle S,x\rangle = 0] = 2/p$, since each of $x,y$ is uniformly distributed in $\integers_p^r$. Substituting into~\eqref{eq:mag1} and simplifying, we obtain
		\[
			M_{p,q,r}(S) =  \frac{p^2 \pr[\langle S, x\rangle = \langle S, y\rangle = 0 ] - 1}{(p-1)^2},
		\]
as asserted.
	\end{proof}

	\begin{lemma}\label{lem:mag5}
		Let $r \in \mathbb{N}$, let $p \geq q > 0$ be prime numbers, and let $S \in \integers_p^r$ be a nonzero vector. If $(x,y)$ has a $(p,q,r)$-arithmetic-distribution, then
		\begin{equation}\label{eq:mag5}
			\pr_{x,y}[\langle S, x \rangle = \langle S, y \rangle = 0] \leq O(1 / (pq) + 1/p^2) \leq O(1/(pq)).
		\end{equation}
	\end{lemma}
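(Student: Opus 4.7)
The plan is to use the alternative sampling of $(x,y)$ from Claim~\ref{claim:dist}: let $x \sim \integers_p^r$, $v \sim \{(1-q)/2,\ldots,(q-1)/2\}^r$, $\alpha \sim \integers_{pq}^\ast$, $\gamma,\gamma' \sim \integers_p^\ast$ be independent, and set $y_i = (\gamma x_i + \gamma' w_i) \bmod p$ with $w_i = \lfloor \alpha v_i/q\rceil$. I will first peel off the easy randomness in $x$: since $S \neq 0$ and $x$ is uniform, $\pr[\langle S,x\rangle = 0] = 1/p$, and conditioned on this event $\langle S, y\rangle \equiv \gamma'\langle S, w\rangle \pmod{p}$, which vanishes iff $\langle S, w\rangle \equiv 0 \pmod{p}$ (since $\gamma' \in \integers_p^\ast$). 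So the task reduces to showing
\[
\pr_{\alpha, v}\bigl[\langle S, w\rangle \equiv 0 \pmod{p}\bigr] \leq O(1/p + 1/q),
\]
which, multiplied by $1/p$, yields the claim.

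The heart of the argument is to decouple $\alpha \bmod p$ from $\alpha \bmod q$ via the Chinese Remainder Theorem. Write $a = \alpha \bmod p$ and $b = \alpha \bmod q$, so $(a,b)$ is uniform on $\integers_p^\ast \times \integers_q^\ast$ and independent of $v$. By definition $\alpha v_i = q w_i + \delta_i$ with $\delta_i \in (-q/2, q/2]$, so $w_i = (\alpha v_i - \delta_i)/q$ as integers; crucially, $\delta_i$ is the signed representative of $\alpha v_i \equiv b v_i \pmod{q}$ and therefore depends only on $(b, v_i)$. Setting $A = \langle S, v\rangle$ and $B = \langle S, \delta\rangle$ (integers) and multiplying through by $q^{-1} \pmod{p}$ (which exists since $p \neq q$) yields the clean identity
\[
\langle S, w\rangle \equiv q^{-1}(aA - B) \pmod{p},
\]
in which $A$ depends only on $v$, $B$ depends only on $(v, b)$, and $a$ is independent of both.

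Conditioning on $(v, b)$, I then argue: if $A \not\equiv 0 \pmod{p}$ the event $aA \equiv B \pmod{p}$ pins $a$ to at most one value in $\integers_p^\ast$, giving conditional probability at most $1/(p-1)$; if $A \equiv 0 \pmod{p}$ the event reduces to the deterministic $B \equiv 0 \pmod{p}$. Averaging,
\[
\pr\bigl[\langle S, w\rangle \equiv 0 \pmod{p}\bigr] \leq \tfrac{1}{p-1} + \pr_v\bigl[\langle S, v\rangle \equiv 0 \pmod{p}\bigr].
\]
The second term is handled by a one-coordinate Littlewood-Offord-style anticoncentration: choose $i_0$ with $S_{i_0} \not\equiv 0 \pmod{p}$ (which exists since $S \neq 0$) and condition on $\{v_i\}_{i \neq i_0}$; the event becomes $v_{i_0} \equiv c \pmod{p}$ for a specific $c$, and since $v_{i_0}$ is uniform on $q$ consecutive integers with $q \leq p$, at most one such value lies in any fixed residue class modulo $p$, giving probability at most $1/q$.

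The main obstacle is the CRT decoupling: without separating $\alpha \bmod p$ from $\alpha \bmod q$, the quantities $a$ and $\delta$ remain correlated through $\alpha$, and one cannot cleanly average over $a$ to kill the linear form. Once the decoupling is in place, the remaining two ingredients --- uniformity of $a$ on $\integers_p^\ast$ and the one-coordinate anticoncentration of $v_{i_0}$ modulo $p$ --- are completely standard. Combining everything, the final bound is $\tfrac{1}{p}\bigl(\tfrac{1}{p-1} + \tfrac{1}{q}\bigr) = O(1/p^2 + 1/(pq)) \leq O(1/(pq))$, where the last inequality uses $p \geq q$.
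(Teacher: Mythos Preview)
Your proof is correct, and arrives at the same factorization $\Pr[\langle S,x\rangle=\langle S,y\rangle=0]=\tfrac{1}{p}\Pr[\langle S,u\rangle\equiv 0\pmod p]$ as the paper (your $w$ is the paper's $u$). The difference lies entirely in how the bound~\eqref{eq:offord1} on $\Pr[\langle S,u\rangle\equiv 0]$ is obtained.

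The paper conditions on $\alpha$ and on all coordinates $u_j$ with $j\neq i$ (for some $i$ with $S_i\neq 0$), so that at most one value of $u_i\in\integers_p$ works; it then controls the non-injectivity of $v_i\mapsto \lfloor\alpha v_i/q\rceil\bmod p$ by bounding the expected size of the collision set $Z$ over the randomness of $\alpha$, via the observation that differences $\sigma$ of colliding $v_i$'s must satisfy $\alpha\sigma\in pq\integers+(-q,q)$.

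Your route is more algebraic: you use CRT to split $\alpha$ into independent components $a=\alpha\bmod p$ and $b=\alpha\bmod q$, and observe that $w_i=(\alpha v_i-\delta_i)/q$ with $\delta_i$ depending only on $(b,v_i)$, giving the clean identity $\langle S,w\rangle\equiv q^{-1}(aA-B)\pmod p$. Averaging over the free variable $a\in\integers_p^\ast$ handles the generic case, and the residual event $A\equiv 0\pmod p$ is dispatched by a one-coordinate anticoncentration bound on $v_{i_0}$.

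Both arguments yield $O(1/p+1/q)$. Your CRT decoupling is slicker here; the paper's collision-counting viewpoint, on the other hand, is what extends directly to the two-dimensional bound needed later in Lemma~\ref{lem:mag-exp}, where the same set $Z_\alpha$ reappears and its \emph{second} moment must be controlled.
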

	\begin{proof}
		Recall the distribution of $x,y$ given by Claim~\ref{claim:dist}, namely $y_i = (\gamma x_i + \gamma' u_i) \modd p$ with $u_i = \lfloor \alpha v_i / q \rceil \modd p$, where $x,v,\gamma, \gamma', \alpha$ are independent random variables. By the independence of $x, u$, we get
		\begin{equation}\label{eq:mag6.5}
			\pr_{x,y}[\langle S, x \rangle = \langle S, y \rangle = 0] =
			\pr_{x}[\langle S, x \rangle = 0] \pr_{u}[\langle S, u \rangle = 0].
		\end{equation}
		Since $x$ is uniformly random and $S$ is nonzero, $\pr_{x}[\langle S, x \rangle = 0] = 1/p$. It thus remains to show
		\begin{equation}\label{eq:offord1}
			\pr_{u}[\langle S, u \rangle = 0] \leq O(1 / q + 1 / p).
		\end{equation}
		To prove~\eqref{eq:offord1}, we note that it is a Littlewood-Offord-type statement: the entries $u_i$ are independent random variables, and $\langle S, u \rangle$ is their weighted sum, and we are concerned with the probability it attains a specific value.
	
		We tackle the problem by using a standard antichain argument. Roughly, choose an $i$ with $S_i \neq 0$, and condition on the values of $u_j$ for all $j \neq i$. Then, there is at most one value of $u_i$ that would make $\langle S, u \rangle = S_i u_i + \sum_{j: j \neq i} S_j u_j \modd p = 0$. Recall $u_i = \lfloor \alpha v_i / q \rceil \modd p$, where $v_i$ is uniformly distributed in $\{ (1-q)/2, \ldots, (q-1)/2 \}$. Hence, assuming that the map $v_i \mapsto u_i$ is injective, we have that $\langle S, u \rangle = 0$ with probability $\leq 1/q$, as required. This last assumption is not strictly correct, however it can be corrected as follows.

		Depending on $\alpha$ and on $\{u_j\cc j \neq i\}$, we let $Z = \{ \tau \in \{ (1-q)/2, \ldots, (q-1)/2 \} \cc (S_i \lfloor \alpha \tau / q \rceil  + \sum_{j: j \neq i} S_j u_j) \modd p = 0 \}$, and note that according to the above discussion, $\pr[\langle S,u \rangle = 0] = \be[|Z|]/q$. Hence,~\eqref{eq:offord1} is reduced to showing that $\be[|Z|] \leq O(1 + q/p)$. Write $Z = \{\tau_1, \ldots, \tau_k\}$ with $\tau_1 < \ldots < \tau_k$, and $Z' = \{\tau_2 - \tau_1, \tau_3 - \tau_1, \ldots, \tau_k - \tau_1\}$. Note that $|Z| = |Z'| + 1$, and that every $\sigma \in Z'$ satisfies $\alpha \sigma \in pq\integers + (-q,q)$ (i.e., the residue $(\alpha \sigma) \modd (pq)$ is either smaller than $q$, or larger than $pq-q$). Hence,
		\[
			\be[|Z|] \leq 1 + \be[|Z'|] \leq 1 + \be_{\alpha}[ \sum_{\sigma = 1}^{q-1} \one\{ \alpha \sigma \in pq\integers + (-q,q) \} ] = O(1 + q/p),
		\]
		where the last bound follows since for any fixed $\sigma$, $\alpha\sigma \modd (pq)$ is uniformly distributed in $\integers_{pq}^{\ast}$ (recall $\sigma < q\leq p$), and the probability it is in $pq\integers + (-q,q)$ is $O(1/p)$.

	\end{proof}

	Lemmas~\ref{lem:Mpqr-simp} and~\ref{lem:mag5} yield the following corollary, that upper bounds $M_{p,q,r}(S)$.

	\begin{corollary}\label{cor:mag-bound}
		Let $r \in \mathbb{N}$, let $p \geq q > 0$ be prime numbers, and let $S \in \integers_p^r$ be a nonzero vector. Then
		\[
			|M_{p,q,r}(S)| \leq O(1/(pq)).
		\]
	\end{corollary}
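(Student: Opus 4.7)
The plan is to combine the two preceding lemmas in a direct algebraic manner, with a small sign argument to handle the absolute value.

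First, I would invoke Lemma~\ref{lem:Mpqr-simp} to rewrite
\[
M_{p,q,r}(S) = \frac{p^2 \Pr[\langle S, x\rangle = \langle S, y\rangle = 0] - 1}{(p-1)^2}.
\]
For the upper bound on $M_{p,q,r}(S)$, I would apply Lemma~\ref{lem:mag5} to bound the probability in the numerator by $O(1/(pq))$, obtaining numerator at most $p^2 \cdot O(1/(pq)) - 1 = O(p/q) - 1 = O(p/q)$. Dividing by $(p-1)^2 = \Theta(p^2)$ gives $M_{p,q,r}(S) \leq O(1/(pq))$, which is the desired bound.

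For the lower bound, I would simply use that the probability in the numerator is non-negative, so $M_{p,q,r}(S) \geq -1/(p-1)^2$. Since $p \geq q$, we have $1/(p-1)^2 \leq O(1/(pq))$, so this matches the required bound.

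There is essentially no obstacle here; the corollary is a mechanical combination of Lemmas~\ref{lem:Mpqr-simp} and~\ref{lem:mag5}. The only minor point to be careful about is ensuring the hypothesis $p \geq q$ (which is part of the setting and also used in Lemma~\ref{lem:mag5}) is explicitly invoked so that the negative side of the inequality is also absorbed into the $O(1/(pq))$ bound.
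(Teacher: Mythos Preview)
Your proposal is correct and follows essentially the same approach as the paper: combine Lemma~\ref{lem:Mpqr-simp} with Lemma~\ref{lem:mag5} for the upper bound, and handle the lower bound via the trivial nonnegativity of the probability together with $p \geq q$. The paper compresses both directions into the single estimate $|M_{p,q,r}(S)| = O(1/(pq) + 1/p^2) = O(1/(pq))$, but this is exactly what you spelled out.
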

	\begin{proof}
		By~\eqref{eq:mag3} we have
		\[
			M_{p,q,r}(S) = \frac{p^2 \pr[\langle S, x\rangle = \langle S, y\rangle = 0 ] - 1}{(p-1)^2}.
		\]
		By~\eqref{eq:mag5} we have
		\[
			\pr[\langle S, x\rangle = \langle S, y\rangle = 0 ] \leq O(1/(pq)).
		\]
		These two estimates yield the desirable
		$
			|M_{p,q,r}(S)| = O(1/(pq)+1/p^2) = O(1/(pq)).
		$
	\end{proof}

\subsubsection{Partitioning into 2-dimensional subspaces}

While Corollary~\ref{cor:mag-bound}, in conjunction with Lemma~\ref{lem:noise2}, allows us bounding $\cov(B'(x), B'(y))$ from above (which is the main task we are tackling), the obtained upper bound is not sufficiently tight for our purposes. To achieve a stronger bound, we use the special structure of $B'$ observed in Claim~\ref{claim:useful-sum} -- namely, that it satisfies $B'(x)=B'(\gamma x)$ for all $\gamma \in \integers_p^\ast$, and that its Fourier expansion satisfies $\wh{B'}(S) = \wh{B'}(S+S')$ for all $S$ and the specific $S'$ defined in Claim~\ref{claim:useful-sum} -- to show that it is sufficient to bound the sums $\sum_{S \in U} M_{p,q,r}(S)$ over certain 2-dimensional subspaces $U$.

	\begin{lemma}\label{lem:non-cartesian2}
	Let $r \in \mathbb{N}$, and let $p \geq q$ be prime numbers.
	Let $\func{f}{\integers_p^r}{\czo}$ satisfy $f(x) = f(\gamma x)$ for all $x \in \integers_p^r$ and $\gamma \in \integers_p^\ast$. Furthermore, assume there exists a particular nonzero vector $S' \in \integers_p^r$ such that $\wh{f}(S+S') = \wh{f}(S)$ for all $S \in \integers_p^r$.
	
	Let $C$ be a constant such that for all $S$ satisfying $\forall \nu \in \integers_p \cc S \neq \nu S'$,
	\[
		\sum_{\substack{\eta \in \integers_{p}^{\ast} \\ \nu \in \integers_p}} M_{p,q,r}(\eta S + \nu S') \leq C.
	\]
	If $\mu = \be[f]$, and $(x,y)$ has a $(p,q,r)$-arithmetic-distribution, then
	\[
		\cov(f(x), f(y)) \leq O\Big(\frac{p}{q} \mu^2 +  \frac{C}{p} \mu \Big).
	\]
	\end{lemma}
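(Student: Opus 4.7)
The plan is to start from Lemma~\ref{lem:noise2}, which gives
\[
\cov(f(x),f(y)) = (p-1) \sum_{S \neq 0_r} M_{p,q,r}(S)\, \wh{f}(S)^2.
\]
The twin invariances of $f$ are the crux: the scaling invariance $f(\gamma x) = f(x)$ yields $\wh{f}(\eta S) = \wh{f}(S)$ for every $\eta \in \integers_p^\ast$, and the hypothesis $\wh{f}(S+S') = \wh{f}(S)$ extends by iteration to $\wh{f}(S + \nu S') = \wh{f}(S)$ for all $\nu \in \integers_p$. Thus $\wh{f}$ is constant on each ``orbit'' $\mathcal{O}_S \defeq \{\eta S + \nu S' : \eta \in \integers_p^\ast,\, \nu \in \integers_p\}$.

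I would then split $\integers_p^r \setminus \{0_r\}$ into (i) the $p-1$ nonzero multiples of $S'$, and (ii) everything else. The set in (ii) decomposes into disjoint orbits $\mathcal{O}_S$ of size exactly $p(p-1)$, because for $S \notin \integers_p S'$ the vectors $S, S'$ are $\integers_p$-linearly independent, so $(\eta,\nu) \mapsto \eta S + \nu S'$ is injective. For part~(i), each $\nu S'$ with $\nu \neq 0$ satisfies $\wh{f}(\nu S') = \wh{f}(0_r) = \mu$, and Corollary~\ref{cor:mag-bound} gives $|M_{p,q,r}(\nu S')| \leq O(1/(pq))$; summing over the $p-1$ such $\nu$ contributes $(p-1)\cdot \mu^2 \cdot (p-1)\cdot O(1/(pq)) = O(p\mu^2/q)$ to $\cov(f(x),f(y))$.

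For part~(ii), pick one representative $S$ per orbit. The contribution of the orbit to $\cov(f(x),f(y))$ is
\[
(p-1)\, \wh{f}(S)^2 \sum_{\eta \in \integers_p^\ast,\, \nu \in \integers_p} M_{p,q,r}(\eta S + \nu S') \;\leq\; C(p-1)\, \wh{f}(S)^2
\]
by the hypothesis on $C$. Summing over orbits and using $|\mathcal{O}_S| = p(p-1)$,
\[
\sum_{\text{orbits in (ii)}} \wh{f}(S)^2 \;\leq\; \frac{1}{p(p-1)} \sum_{S \in \integers_p^r} \wh{f}(S)^2 \;=\; \frac{\|f\|_2^2}{p(p-1)} \;\leq\; \frac{\mu}{p(p-1)},
\]
where the last inequality combines Parseval with the fact that $0 \leq f \leq 1$ implies $f^2 \leq f$, so $\|f\|_2^2 \leq \be[f] = \mu$. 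Multiplying by $C(p-1)$ yields an $O(C\mu/p)$ contribution. Adding the two contributions gives the claimed $O(p\mu^2/q + C\mu/p)$.

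At this level the argument is really bookkeeping: the twin invariances force the orbit decomposition, and Parseval together with the $\ell_\infty$-bound on $f$ finishes the job. The genuinely hard work is pushed into the hypothesis on $C$; bounding $\sum_{\eta,\nu} M_{p,q,r}(\eta S + \nu S')$ over a two-dimensional subspace is the step that will require the Littlewood--Offord-type antichain argument flagged in the proof overview, and that will be the main obstacle in the lemma that supplies $C$.
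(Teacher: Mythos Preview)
Your proposal is correct and follows essentially the same approach as the paper's proof: start from Lemma~\ref{lem:noise2}, split the nonzero $S$ into multiples of $S'$ versus the rest, handle the former via Corollary~\ref{cor:mag-bound} and the $\mu$ bound on Fourier coefficients, and handle the latter by grouping into orbits $\{\eta S + \nu S'\}$, applying the hypothesis on $C$, and finishing with Parseval and $\|f\|_2^2 \leq \mu$. The only cosmetic differences are that you pick one representative per orbit while the paper keeps the full sum and divides by the overcount $p(p-1)$, and that you observe the exact equality $\wh{f}(\nu S') = \wh{f}(0) = \mu$ (via the shift invariance) where the paper uses the cruder $|\wh{f}(S)| \leq \mu$.
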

	\begin{proof}
	By Lemma~\ref{lem:noise2}, we have
	\[
		\cov(f(x), f(y)) = (p-1) \sum_{S \neq 0_r} M_{p,q,r}(S) \wh{f}(S)^2
	\]
	There are two kinds of contributions to the right hand side, corresponding to elements $S$ with $S = \nu S'$, and to other elements.

	\noindent
	\textbf{Case 1:}
	The contribution of each $S$ with $S = \nu S'$ for $\nu \in \integers_p^\ast$, is $M_{p,q,r}(S') \cdot \wh{f}(S')^2$. (Recall that by Lemma~\ref{lem:mag-support}, for such an $S$, we have $M_{p,q,r}(S')=M_{p,q,r}(S)$, and since $f(x)=f(\gamma x)$, we have $\wh{f}(S)=\wh{f}(S')$.) We note that $|\wh{f}(S)| = |\be_x[f(x) \overline{\chi_S(x)}]| \leq \be |f(x)| = \mu$. Using the bound on $M_{p,q,r}(S)$ from Corollary~\ref{cor:mag-bound}, and the fact that there are only $p-1$ such $S$'s, we get that the total contribution in this case is
	\[
		(p-1)\sum_{S=\nu S'}M_{p,q,r}(S) \wh{f}(S)^2 \leq (p-1)^2 M_{p,q,r}(S') \mu^2 \leq p^2 O( 1/(pq)) \mu^2 \leq O\Big(\frac{p}{q} \mu^2 \Big).
	\]

	\noindent
	\textbf{Case 2:} The contribution of elements $S$ with $S \neq \nu S'$ for all $\nu \in \integers_p^\ast$ (we denote this family of elements by $\mathcal{S}$), can be analyzed using the assumption that $\wh{f}(S+ \nu S') = \wh{f}(S)$ for all $\nu \in \integers_p$. It follows that
	\[
		(p-1)\sum_{S \in \mathcal{S}} M_{p,q,r}(S) \wh{f}(S)^2 = \frac{p-1}{p(p-1)} \sum_{S \in \mathcal{S}} \wh{f}(S)^2 \cdot \sum_{\substack{\eta \in \integers_p^\ast \\ \nu \in \integers_p}} M_{p,q,r}(\eta S + \nu S') \leq \frac{C \mu}{p}.
	\]
	Here, the first equality holds since each summand on the left hand side appears $p(p-1)$ times on the right hand side. The final inequality is obtained by the assumption regarding $C$ and the estimate $\sum_S \wh{f}(S)^2 = \be[f^2] \leq \mu$.

	\noindent
	\textbf{Overall:} Combining the above two contributions we get
	\[
		\cov(f(x), f(y)) \leq O\Big( \frac{p}{q} \Big) \mu^2 + \frac{C}{p}\mu,
	\]
	as asserted.
	\end{proof}

\subsubsection{Bounding \tops{$\sum_S M_{p,q,r}(S)$} over 2-dimensional subspaces}
\label{sec:sub:lemma-proof}

	In this subsection we present the most complex step of the proof -- bounding $\sum_{S \in \mathcal{U}} M_{p,q,r}(S)$ over 2-dimensional subspaces $\mathcal{U}$, which will allow us to complete the proof in conjunction with Lemma~\ref{lem:non-cartesian2}. The proof is quite technical.
	Its core element is the representation of a sub-problem as a Littlewood-Offord type problem and the use of an antichain technique for handling it.
	\begin{lemma}\label{lem:mag-exp}
		Let $S, S' \in \integers_{p}^r$ be nonzero vectors with $S \neq \nu S'$ for all $\nu \in \integers_p^\ast$.
		Suppose $q \leq p$, then
		\begin{equation}\label{eq:mag-exp}
			\Big| \sum_{\substack{\mu \in \integers_{p}^{\ast} \\ \nu \in \integers_p}} M_{p,q,r}( \mu S + \nu S' ) \Big|  \leq O(p/q^2 + \log(q)/q + 1/p) \leq O(p\log(q) / q^2).
		\end{equation}
	\end{lemma}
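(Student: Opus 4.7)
The plan is to first convert the sum in~\eqref{eq:mag-exp} into a quantity amenable to Littlewood-Offord methods. Applying Lemma~\ref{lem:Mpqr-simp} together with the character-sum identity $\sum_{T \in W} e_p(\langle T, u\rangle) = |W|\,\one\{u \in W^\perp\}$ on the two-dimensional subspace $W = \mrm{span}(S, S')$ and on its 1-dim subspace $\mrm{span}(S')$, I would show that
\[
  \sum_{\substack{\mu \in \integers_p^\ast \\ \nu \in \integers_p}} M_{p,q,r}(\mu S + \nu S') = \frac{p^2\,\pr[A = B = 0] - p\,\pr[B = 0]}{p-1},
\]
where $A = \langle S, u\rangle$, $B = \langle S', u\rangle$, and $u$ is distributed as in Claim~\ref{claim:dist}. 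The task thus reduces to bounding $|p\,\pr[A=B=0] - \pr[B=0]|$ by $O(p \log(q)/q^2)$.

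Next, I would case-split according to the support structure of $S, S'$. In the first case, some coordinate $i$ has exactly one of $S_i, S_i'$ nonzero; WLOG $S_i \neq 0$ and $S_i' = 0$ (the other subcase is symmetric). Conditioning on $\alpha$ and $u_{-i}$, the value $B$ depends only on $u_{-i}$ while $A = 0$ is equivalent to $u_i = c$ for $c = -S_i^{-1}\langle S_{-i}, u_{-i}\rangle \bmod p$. This allows me to write
\[
  \pr[A=B=0] - \tfrac{1}{p}\pr[B=0] = \be_{u_{-i}, \alpha}\bigl[\one\{B=0\}\bigl(\pr[u_i = c \mid \alpha] - 1/p\bigr)\bigr].
\]
The antichain/Littlewood-Offord argument underlying Lemma~\ref{lem:mag5} applied to the coordinate vector $e_i$ yields, upon averaging over $\alpha$ (which is independent of $u_{-i}$), the uniform bound $\pr[u_i = c] \leq O(1/q + 1/p)$. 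Combined with the bound $\pr[B=0] \leq O(1/q + 1/p)$ from Lemma~\ref{lem:mag5}, this gives $|\pr[A=B=0] - \pr[B=0]/p| \leq O(1/q^2 + 1/(pq))$, comfortably within the target.

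In the remaining case, where $\mrm{supp}(S) = \mrm{supp}(S')$, I would reduce to the previous case by introducing the auxiliary vector $T = S'_{i_0} S - S_{i_0} S'$ for any $i_0$ in the common support: this vector satisfies $T_{i_0} = 0$, $T \neq 0$ (since $S \not\propto S'$), and the algebraic identity $\langle T, u\rangle = S'_{i_0} A - S_{i_0} B$ produces the equivalence $\{A = B = 0\} = \{\langle T, u\rangle = 0,\ B = 0\}$, hence $\pr[A = B = 0] = \pr[\langle T, u\rangle = 0, B = 0]$. The pair $(T, S')$ now satisfies the previous case's assumption at coordinate $i_0$, so the antichain argument applies verbatim with $T$ replacing $S$. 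The main technical obstacle I anticipate is the independence bookkeeping in the conditional expectation~---~$\alpha$ is independent of $u_{-i}$ but $\pr[u_i = c \mid \alpha]$ depends on both, and $c$ is itself a function of the random $u_{-i}$~---~requiring one to take the expectation over $\alpha$ first before invoking the antichain bound uniformly in the random value of $c$.
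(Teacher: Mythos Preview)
Your reduction formula
\[
  \sum_{\substack{\mu \in \integers_p^\ast \\ \nu \in \integers_p}} M_{p,q,r}(\mu S + \nu S')
  = \frac{p}{p-1}\bigl(p\,\pr[A=B=0] - \pr[B=0]\bigr),
  \qquad A=\langle S,u\rangle,\ B=\langle S',u\rangle,
\]
is correct and is a genuinely cleaner route than the paper's. The paper instead keeps both $x$ and $y$ in play, rewrites the sum via the kernel of the $2\times 2$ matrix $\bigl(\begin{smallmatrix}\langle S,x\rangle & \langle S,y\rangle\\ \langle S',x\rangle & \langle S',y\rangle\end{smallmatrix}\bigr)$, and then decomposes into five auxiliary quantities $Q_1,\ldots,Q_5$; only at the very end (Step~5) does it arrive at $\pr[\langle S,u\rangle=\langle S',u\rangle=0]=\pr[A=B=0]$ as the crucial term. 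Your character-sum/subspace identity collapses Steps~1--4 of the paper into one line, which is a real improvement. Your case-split and the replacement $S\mapsto T=S'_{i_0}S-S_{i_0}S'$ are also fine and mirror the paper's WLOG reduction.

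However, the core probabilistic step has a genuine gap. You write that ``$\alpha$ is independent of $u_{-i}$'' and propose to average over $\alpha$ first to obtain $\pr[u_i=c]\le O(1/q+1/p)$. But $u_{-i}$ is a \emph{function} of $\alpha$ (and $v_{-i}$), so $\alpha$ and $u_{-i}$ are dependent; likewise $c=c(u_{-i})$ and the event $\{B=0\}$ both move with $\alpha$. Consequently one cannot pull the $\alpha$-average inside while holding $c$ and $\one\{B=0\}$ fixed, and the factorised bound $\pr[B=0]\cdot O(1/q+1/p)=O(1/q^2+1/(pq))$ does not follow. Your anticipated obstacle is exactly the point where the argument currently fails.

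The fix---which is what the paper does in its Step~5---is to bound pointwise in $\alpha$: for every $c$ one has $\pr[u_i=c\mid\alpha]\le(|Z_\alpha|+1)/q$ with $Z_\alpha=\{\sigma\in(-q/2,q/2):\alpha\sigma\in pq\integers+(-q,q)\}$, and similarly $\pr[B=0\mid\alpha]\le(|Z_\alpha|+1)/q$. This yields
\[
  \pr[A=B=0]\le q^{-2}\,\be_\alpha\bigl[(|Z_\alpha|+1)^2\bigr],
\]
and the second moment is controlled by a divisor-sum estimate (counting solutions to $ab=cd$ and $ab+cd=pq$ in $[0,q)^4$), giving $\be_\alpha[(|Z_\alpha|+1)^2]=O(1+q\log(q)/p)$. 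This is where the $\log q$ factor enters; your claimed bound $O(1/q^2+1/(pq))$ is too optimistic. Plugging back into your (correct) reduction formula then yields exactly the stated $O(p/q^2+\log(q)/q)$.
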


\begin{proof}\skipline

		\paragraph{Step 1.}
		We express the left hand side of~\eqref{eq:mag-exp}, using Lemma~\ref{lem:Mpqr-simp} (with $T = \mu S + \nu S'$), as
		\begin{equation}\label{eq:tt1}
			D\defeq \sum_{\substack{\mu \in \integers_{p}^{\ast} \\ \nu \in \integers_p}} M_{p,q,r}( \mu S + \nu S' ) =
			\frac{p^2}{(p-1)^2} \sum_{\substack{\mu \in \integers_{p}^{\ast} \\ \nu \in \integers_p}} \pr[\langle \mu S + \nu S', x\rangle = \langle \mu S + \nu S', y\rangle = 0 ] - \frac{p}{p-1},
		\end{equation}
		where the probability is taken over pairs $(x, y)$ distributed according to a $(p,q,r)$-arithmetic-distribution.
		Adding and subtracting all pairs of the form $(\mu, \nu) = (0, \nu)$ to the sum in~\eqref{eq:tt1}, we get
        \[
			\frac{(p-1)^2}{p^2}D
			=
			\sum_{\substack{\mu, \nu \in \integers_p \\ (\mu, \nu) \neq (0,0)}} \pr[\langle \mu S + \nu S', x\rangle = \langle \mu S + \nu S', y\rangle = 0 ]
			- (p-1)\pr_{x,y}[\langle S', x \rangle = \langle S', y \rangle = 0]
			- \frac{p-1}{p}.
		\]
        Note that given $x,y$, the number of solutions $(\mu,\nu)$ of the equation system $(\langle\mu S + \nu S',x \rangle = 0) \wedge (\langle\mu S + \nu S',y \rangle = 0)$, is equal to the number of solutions of the system $(\mu \langle S,x \rangle + \nu \langle S,y \rangle = 0) \wedge (\mu \langle S',x \rangle + \nu \langle S',y \rangle = 0)$. Indeed, they are equal to the sizes of the left kernel and the right kernel of the matrix
        \[
        	\begin{pmatrix}
           		\langle S, x \rangle & \langle S, y \rangle \\
           		\langle S', x \rangle & \langle S', y \rangle \\
         	\end{pmatrix},
		\]
        which are known to be equal.
        The latter linear system may succinctly be written as $\mu V(x)+\nu V(y)=0$ where
		\[
			V(x) \defeq
			\begin{pmatrix}
           		\langle S, x \rangle \\
           		\langle S', x \rangle \\
         	\end{pmatrix} \in \integers_p^2,
         	\qquad
         	V(y) \defeq
			\begin{pmatrix}
           		\langle S, y \rangle \\
           		\langle S', y \rangle \\
         	\end{pmatrix} \in \integers_p^2.
		\]
        Using this equality, we obtain
		\[
			\frac{(p-1)^2}{p^2}D
			=
			\sum_{\substack{\mu, \nu \in \integers_p \\ (\mu, \nu) \neq (0,0)}} \pr_{x,y}[ \mu V(x) + \nu V(y) = 0 ]
			- (p-1)\pr_{x,y}[\langle S', x \rangle = \langle S', y \rangle = 0]
			- \frac{p-1}{p}.
		\]
		Note that~\eqref{eq:mag-exp} is equivalent to $D \leq O(p/q^2 + \log(q)/q + 1/p)$, which follows from the bounds:
		\begin{equation}\label{eq:mag55}
			\pr_{x,y}\big[\langle S', x \rangle = \langle S', y \rangle = 0 \big] \leq O(1 / (pq) + 1/p^2),
		\end{equation}

		\begin{equation}\label{eq:mag6}
			\Bigg| \sum_{\substack{\mu, \nu \in \integers_p \\ (\mu, \nu) \neq (0,0)}} \pr_{x,y}[ \mu V(x) + \nu V(y) = 0 ] - \frac{p-1}{p} \Bigg| \leq O(p/q^2 + \log(q) / q + 1/p).
		\end{equation}
		As~\eqref{eq:mag55} follows from Lemma~\ref{lem:mag5}, we are left with the task of verifying~\eqref{eq:mag6}.

		\paragraph{Step 2.}
		We verify~\eqref{eq:mag6}. Note that $\frac{p^2-1}{p^2} - \frac{p-1}{p} < 1/p$, thus~\eqref{eq:mag6} follows from
		\begin{equation}\label{eq:mag7}
			\Bigg| \sum_{(\mu, \nu) \neq (0,0)} \pr_{x,y}[ \mu V(x) + \nu V(y) = 0 ] - \frac{p^2-1}{p^2} \Bigg| \leq O(p/q^2 + \log(q) / q + 1/p).
		\end{equation}
		In case of either $\mu = 0$ or $\nu = 0$, we have $\pr_{x,y}[ \mu V(x) + \nu V(y) = 0 ] = 1/p^2$, since both $x,y$ are uniformly distributed in $\integers_p^r$ and $S, S'$ are linearly independent vectors. We must hence verify
		\begin{equation}\label{eq:mag8}
			\Bigg| \sum_{\mu \in \integers_p^{\ast}} \pr_{x,y}[ \mu V(x) = V(y) ] - \frac{p-1}{p^2} \Bigg| \leq O(1/q^2 + \log(q) / (pq) + 1/p^2).
		\end{equation}
		We now reason about the left hand side of~\eqref{eq:mag8}. Specifically, we consider the three sums
		\[
			Q_1 \defeq \sum_{\mu \in \integers_p^{\ast}} \pr_{x,y} \Big[ \mu V(x) = V(y) \,\andd\, \Big(\{\langle S, x \rangle, \langle S, y \rangle\} = \{0\}\Big) \Big],
		\]
		
		\[
			Q_2 \defeq \sum_{\mu \in \integers_p^{\ast}} \pr_{x,y} \Big[ \mu V(x) = V(y) \,\andd\, \Big( 0 \in \{\langle S, x \rangle, \langle S, y \rangle\} \neq \{0\} \Big) \Big],
		\]

		\[
			Q_3 \defeq \sum_{\mu \in \integers_p^{\ast}} \pr_{x,y} \Big[ \mu V(x) = V(y) \,\andd\, \Big( 0 \notin \{\langle S, x \rangle, \langle S, y \rangle\} \Big) \Big],
		\]
		and show the following estimates, which together imply~\eqref{eq:mag8}:
		\begin{equation}\label{eq:mag9}
			|Q_1| \leq O(1/(pq) + 1/p^2),
		\end{equation}
		\begin{equation}\label{eq:magA}
			Q_2 = 0,
		\end{equation}
		\begin{equation}\label{eq:magB}
			\Big|Q_3 - \frac{p-1}{p^2} \Big| \leq O(1/q^2 + \log(q) / (pq) + 1/p^2).
		\end{equation}

		In order to obtain~\eqref{eq:magA}, notice that if one of $\langle S, x\rangle, \langle S, y\rangle$ is zero, and the other is not, then there cannot be a $\mu \in \integers_p^{\ast}$ which is the quotient of them.

		\paragraph{Step 3.} We prove~\eqref{eq:mag9}. First, we observe
		\[
			Q_1 \leq \pr_{x,y}\big[\{\langle S, x \rangle, \langle S, y \rangle\} = \{0\} \big] + (p-2)\pr_{x,y}\big[V(x)=V(y)=0\big].
		\]
		This is because upon fixing $x,y$, whenever $\langle S', x \rangle \neq 0$ or $\langle S', y \rangle \neq 0$, there is at most one value of $\mu \in \integers_{p}^\ast$ for which $\mu V(x) = V(y)$. Lemma~\ref{lem:mag5} implies
		\[
		\pr_{x,y}\big[\{\langle S, x \rangle, \langle S, y \rangle\} = \{0\} \big] \leq O(1/(pq) + 1/p^2).
		\]
		Hence, it remains to show
		\begin{equation}\label{eq:magC}
			\pr_{x,y}\big[V(x)=V(y)=0\big] \leq O(1/(p^2q) + 1/p^3).
		\end{equation}
		Since $x\sim \integers_{p}^r$ is uniformly distributed, and $S, S'$ are two independent vectors, then we have $\pr[V(x) = 0] = 1/p^2$. Moreover, similarly to the reasoning in Lemma~\ref{lem:mag5} (\eqref{eq:mag6.5} in particular),
		\[
			\pr\li[ \given{V(y) = 0}{V(x)=0} \ri] = \Pr[\langle S, u\rangle = \langle S', u\rangle = 0],
		\]
		where $u = \lfloor \alpha v_i / q \rceil$ with $\alpha \sim \integers_{pq}^\ast$ and $v_i \sim \{ (1-q)/2, \ldots, (q-1)/2 \}$ are uniformly distributed. But, according to~\eqref{eq:offord1}, we have
		\[
			\Pr[\langle S, u\rangle = \langle S', u\rangle = 0] \leq \Pr[\langle S, u\rangle = 0] \leq O(1/p+1/q),
		\]
		which, together with $\Pr[V(x)=0]=1/p^2$ implies~\eqref{eq:magC}.

		\paragraph{Step 4.} We prove~\eqref{eq:magB}. Recall, again, that $x,y$ are sampled by taking $x\sim \integers_p^r$ uniformly at random, and setting $y_i = (\gamma x_i + \gamma' u_i) \modd p$ with $\gamma', \gamma \sim \integers_p^\ast$ and $u_i = \lfloor \alpha \cdot v_i \rceil$, with $\alpha \sim \integers_{pq}^\ast$ and $v_i \sim \{ (1-q)/2, \ldots, (q-1)/2 \}$. We further decompose $Q_3$ into two parts, $Q_3 = Q_4 + Q_5$, according to whether $\gamma \langle S , x \rangle = \langle S, y \rangle$ or not:
		\[
			Q_4 = \sum_{\mu \in \integers_p^{\ast}} \pr_{x,y} \Big[ \mu V(x) = V(y) \,\andd\, \mu \neq \gamma \,\andd\, 0 \notin \{\langle S, x \rangle, \langle S, y \rangle\} \Big],
		\]
		\[
			Q_5 = \pr_{x,y} \Big[ \gamma V(x) = V(y) \,\andd\, 0 \notin \{\langle S, x \rangle, \langle S, y \rangle\} \Big],
		\]
		and define the auxiliary probabilities
		\[
		\begin{aligned}
			\beta &= \Pr\big[\gamma \langle S, x \rangle = \langle S, y \rangle \,\andd\, \big(0 \notin \{\langle S, x \rangle, \langle S, y \rangle\} \big) \big],
			\\
			\eta &= \Pr[0 \notin \{\langle S, x \rangle, \langle S, y \rangle\} ].
		\end{aligned}
		\]

		\noindent We make four claims:
		\begin{itemize}
			\item \textbf{Simplifying $Q_4$:} \qquad $Q_4 = (\eta - \beta)/p$,
			\item \textbf{Upper bounding $\beta$:} \qquad $\beta \leq O(1/p + 1/q)$,
			\item \textbf{Lower bounding $\eta$:} \qquad $\eta \geq 1-2/p$,
			\item \textbf{Upper bounding $Q_5$:} \qquad $Q_5 \leq O(1/q^2 + \log(q)/(pq))$.
		\end{itemize}
		Since $Q_3 = Q_4 + Q_5$, these claims clearly imply~\eqref{eq:magB}.

		\paragraph{Lower bounding $\eta$.} Recall that both $x$ and $y$ are uniformly distributed in $\integers_p^r$, and so $\langle S, x \rangle$ and $\langle S, y \rangle$ are uniformly distributed in $\integers_p$. Thus, by a union bound, we have $\eta \geq 1-2/p$, as asserted.

		\paragraph{Upper bounding $\beta$.}
		Recall that $y_i = (\gamma x_i + \gamma' u_i) \modd p$, and hence the event that $\gamma \langle S, x \rangle = \langle S, y \rangle$ is exactly the event that $\langle S, u \rangle = 0$. The probability of this latter event may be upper bounded by $O(1/p+1/q)$ using~\eqref{eq:offord1}. Hence,
		\[
			\beta \leq \Pr[\langle S, u \rangle = 0] \leq O(1/p+1/q),
		\]
		as asserted.

		\paragraph{Simplifying $Q_4$.} As an appetizer, note that if we would replace in $Q_4$ the requirement of $\mu V(x) = V(y)$ by $\mu \langle S, x\rangle = \langle S, y\rangle$, and call the result $Q_4'$, then we would get $Q_4' + \beta = \eta$. All that is left in order to prove~(1) is to show that $p Q_4 = Q_4'$.
			
		Observe that we may assume that
		\begin{equation}\label{Eq:Assumption0}
		\exists \ell \in [r]: S_\ell = 0 \andd S'_\ell \neq 0.
		\end{equation}
		To reduce to this case, we choose any $\ell$ with $S'_\ell \neq 0$, and replace $S$ by $S - \frac{S_\ell}{S'_\ell} S'$, which is also nonzero. (Note that the sum on the left hand side of~\eqref{eq:mag-exp} does not change by this replacement.)
		
		We condition on the values of $\gamma, u, \gamma'$ and $\{x_j : j \neq \ell\}$ (i.e. on the $\sigma$-algebra generated by these variables). The only information that is missing in the probability space is $x_\ell$ -- it is uniformly distributed under the current conditioning. While the contribution to $Q_4'$ is fixed under the current conditioning (as we assumed $S_\ell=0$), we claim there exists exactly one value of $x_\ell$ that would contribute to the probability expressed by $Q_4$. To see this, let $\mu$ be the unique element of $\integers_p^\ast$ that has $\mu \langle S, x \rangle = \langle S, y \rangle$. In order to have $\mu V(x) = V(y)$ we must also have $\mu \langle S', x \rangle = \langle S', y \rangle$. Under our conditioning, this latter equation is a linear equation in $x_\ell$ with the linear coefficient $(\mu - \gamma)S'_\ell$ (recall how $y_\ell$ depends on $x_\ell$), and some constant coefficient which is deterministic under our conditioning. This equation has a unique solution in $x_\ell$. Since $x_\ell$ has a uniform distribution, we have $Q_4 = Q_4' / p = (\eta-\beta)/ p$, as asserted.

		\paragraph{Step 5.} Lastly, we upper bound $Q_5$. Recall
		\[
			V(y) =
			\begin{pmatrix}
           		\langle S, y \rangle \\
           		\langle S', y \rangle \\
         	\end{pmatrix}
         	=
         	\gamma \begin{pmatrix}
           		\langle S, x \rangle \\
           		\langle S', x \rangle \\
         	\end{pmatrix}
         	+ \gamma'
         	\begin{pmatrix}
           		\langle S, u \rangle \\
           		\langle S', u \rangle \\
         	\end{pmatrix}.
		\]
		Hence, the event $\gamma V(x) = V(y)$ is simply $\{\langle S, u \rangle = 0 \, \andd \, \langle S', u \rangle = 0\}$, implying
		\[
			Q_5 \leq \Pr_u[ \langle S, u \rangle = 0 \, \andd \, \langle S', u \rangle = 0 ].
		\]
		We use an argument similar to the argument we had in Lemma~\ref{lem:mag5} (specifically,~\eqref{eq:offord1}) to upper bound this last quantity.
		Let $l \in [r]$ be any coordinate with $S_l \neq 0$, and let $\ell$ be a coordinate with $S'_\ell \neq 0$ while $S_\ell = 0$, whose existence we assumed (see~\eqref{Eq:Assumption0}). Recall $u_j = \lfloor \alpha v_j/ q\rceil$. Condition on any specific values for $\alpha, \{u_j : j \notin \{l,\ell\}\}$. Similarly to Step $4$, we can upper bound the probability that $\langle S, u \rangle = 0$ by the probability that $u_l$ turns out to be just the right value that would make $\langle S, u \rangle = 0$ true. This probability is upper bounded by $\frac{1}{q}$ times the maximal number of elements $v'' \in \{ (1-q)/2, \ldots, (q-1)/2 \}$ that map to the same $u''$ under $u'' = \lfloor \alpha v'' / q\rceil \modd p$. Using the same reasoning as in Lemma~\ref{lem:mag5}, we see that this probability is upper bounded by $(|Z_\alpha|+1) / q$, where
		\[
			Z_\alpha = \{ \sigma \in \{ (1-q)/2, \ldots, (q-1)/2 \} : (\alpha \sigma) \in pq \integers + (-q, q) \}.
		\]
		Under the described conditioning, $\langle S, u \rangle$ is determined by $u_l$, and by further conditioning on the value of $u_l$, $\langle S', u \rangle$ is determined by $u_\ell$. Thus, the event that both these quantities are equal zero has probability
		\begin{equation}\label{eq:second-moment}
		\Pr[\langle S, u\rangle = 0 \andd \langle S', u\rangle = 0] \leq (|Z_\alpha| + 1)^2 / q^2.
		\end{equation}
		Our task of upper bounding $Q_5$ is hence reduced to understanding the second moment of $|Z_\alpha|$.
		For any fixed $\alpha$,
		\[
		\begin{aligned}
			|Z_\alpha|^2
			&= \sum_{\sigma_1=(1-q)/2}^{(q-1)/2} \sum_{\sigma_2 = (1-q)/2}^{(q-1)/2} \one\big\{ \{\sigma_1 \cdot \alpha, \sigma_2 \cdot \alpha\} \seq pq\integers + (-q, q) \big\} \\
			&\leq 4\sum_{\sigma_1=0}^{q-1} \sum_{\sigma_2 = 0}^{q-1} \one\big\{ \{\sigma_1 \cdot \alpha, \sigma_2 \cdot \alpha\} \seq pq\integers + (-q, q) \big\}.
		\end{aligned}
		\]
		Denote $\tau_j' \defeq (\sigma_j \cdot \alpha \modd pq)$, then we have $\tau_1' \cdot \sigma_2 \modd (pq) = \tau_2' \cdot \sigma_1  \modd (pq)$. Using the assumption $q \leq p$, we may leverage this equation into an equation over the integers (i.e. not involving a modulus), in the following way.

Let $\tau_j \defeq \min\{\tau_j', pq - \tau_j'\}$, so the above equation reads either $\tau_1 \cdot \sigma_2 \modd (pq) = \tau_2 \cdot \sigma_1  \modd (pq)$ or $\tau_1 \cdot \sigma_2 \modd (pq) = -\tau_2 \cdot \sigma_1  \modd (pq)$. Observe that $\tau_j \leq q$, and hence, $\tau_1 \sigma_2, \tau_2 \sigma_1 \leq q^2 \leq pq$.  Therefore, over the integers we must have
\begin{equation}\label{Eq:AAux1}
\tau_1 \cdot \sigma_2 = \tau_2 \cdot \sigma_1 \qquad \mbox{ or } \qquad \tau_1 \cdot \sigma_2 + \tau_2 \cdot \sigma_1 = pq.
\end{equation}
In order to bound from above the expectation of $|Z_\alpha|^2$ over the $|\integers_{pq}^\ast|=\phi(pq)$ possible values of $\alpha$, we note that $\alpha$ can be recovered uniquely from $\sigma_1,\tau'_1$, as $\alpha= \tau'_1 \cdot \sigma_1^{-1} \modd pq$. Thus, any given quadruple $(\sigma_1,\sigma_2,\tau_1,\tau_2)$ corresponds to at most two different values of $\alpha$, and so, when we sum over all values of $\alpha$, each solution of each of the equations in~\eqref{Eq:AAux1} is counted at most twice.     		

Therefore, we have
		\begin{equation}\label{eq:magD}
			\be_{\alpha}[|Z_\alpha|^2] \leq \frac{4 \cdot 2}{\phi(pq)} \Big( \# \{(a,b,c,d) : ab = cd \} + \# \{(a,b,c,d) : ab + cd = pq \} \Big),
		\end{equation}
		where $a,b,c,d$ take values in $\{0,1,\ldots, q-1\}$.
		
		We bound the number of such quadruples $(a,b,c,d)$ by the following simple number-theoretic lemma, whose proof is given below.
		
		\begin{lemma}\label{lem:number-theory}
			Let $q,N > 0$ be positive integers. Define
			\[
			\begin{aligned}
			P &= \set{(a, b, c, d)}{ab+cd = N} \seq \{0,1,\ldots, q-1\}^4, \\
			Q &= \set{(a, b, c, d)}{ab=cd} \seq \{0,1,\ldots, q-1\}^4.
			\end{aligned}
			\]
			Then, $|P| \leq O(q^2 \log(q))$ and $|Q| \leq O(q^2 \log(q))$.
		\end{lemma}
		By Lemma~\ref{lem:number-theory}, the number of these quadruples $(a,b,c,d)$ is $O(q^2 \log(q))$. Combining~\eqref{eq:second-moment} and~\eqref{eq:magD} with Lemma~\ref{lem:number-theory}, we arrive at
		\[
			Q_5 \leq O\Big( \frac{1}{q^2} \cdot \big( 1 + \frac{q^2 \log(q)}{pq} \big) \Big) \leq
			O\Big( \frac{1}{q^2} + \frac{\log(q)}{pq} \Big),
		\]
		concluding the proof of Lemma~\ref{lem:mag-exp}.

		\paragraph{Summary of the proof of Lemma~\ref{lem:mag-exp}.} We parsed the left hand side of~\eqref{eq:mag-exp}, and interpreted it as the bias introduced in an event, ($V(x)$ is proportional to $V(y)$) caused by dependence between $x,y\sim \integers_p^r$. The core of the argument upper bounds this bias by posing the problem as a Littlewood-Offord-type problem, and using an antichain argument along with simple number theoretic estimates.
\end{proof}

\begin{proof}[Proof of Lemma~\ref{lem:number-theory}]
	First, notice that we may consider, in both cases, $a,b,c,d > 0$, as there are only $O(q^2)$ quadruples with $0 \in \{a,b,c,d\}$ and either $ab=cd$ or $ab+cd = N$. Indeed, regarding $ab=cd$, we must have $0$ on both sides, which implies that there are only $O(q^2)$ possible pairs. Regarding $ab+cd=N$, if $a=0$, then $b,c$ have $q^2$ options, and they determine $d$, totaling in $\leq O(q^2)$ pairs. We call the analogs of $P,Q$, with the quadruples containing $0$ removed, $P', Q'$, respectively.
	
	Second, we count $|P'|$. Fixing $a,c$, we see that $b,d$ must satisfy the linear equation $ab + cd = N$. The different solutions $(b,d)$ for this equation differ by integral multiples of the vector $(c/\gcd(a,c), -a/\gcd(a,c))$. Since both $b,d$ are integers in $[1,q)$, the number of such solutions is at most $q \gcd(a,c) / \max(a,c)$. Denoting $g = \gcd(a,c)$ we arrive at

	\[
	|P'| \leq \sum_{g=1}^{q} \sum_{g | a} \sum_{g|c} \frac{q g}{\max(a,c)} \leq \sum_{g=1}^{q} \sum_{g|c} \frac{2c}{g} \cdot \frac{q g}{c} \leq \sum_{g=1}^{q} \sum_{g| c} 2q \leq \sum_{g=1}^{q} 2q^2/g = O(q^2 \log(q)),
	\]
	as required.
	
	Lastly, bounding $|Q'|$ is done likewise, this time considering the equation $ab-cd = 0$.

\end{proof}
We note that the  $\log(q)$ factor in the bound $O\Big( \frac{1}{q^2} + \frac{\log(q)}{pq} \Big)$ of Lemma~\ref{lem:number-theory} is the reason for the logarithmic loss in Theorem~\ref{thm:red-sum}.
Unfortunately, one can show that the assertion of Lemma~\ref{lem:number-theory} is tight up to a constant factor, at least regarding the size of $Q$.

\subsubsection{Wrapping up the proof of the obfuscation lemma}

	\begin{proof}[Proof of Lemma~\ref{lem:ss-bound}]
		Recall that we assume an algorithm $\func{B}{\integers_p^r}{\binom{[r]}{k}}$ which always reports a $k$-tuple of its input numbers whose sum is $0$ modulo $p$ (and is allowed to report failure). Using $B$, we define the obfuscation algorithm $\func{A}{\integers_{pq}^r}{\binom{[r]}{k}}$ (Algorithm~\ref{alg:sum}) which reports a $k$-tuple of its input numbers whose sum is $0$ modulo $pq$.
We further define $\func{B'}{\integers_p^r}{[0,1]}$ by
		\[
			B'(x) = \Pr_{P,\gamma}[B(P(\gamma \cdot x)) = P(\{1,2,\ldots, k\})],
		\]
		where $P \sim S_r$ and $\gamma \sim \integers_p^\ast$, and the probability is taken also over $B$'s internal randomness. By Lemma~\ref{lem:to-bounded-sum}, we have
        \[
			\sum_{i = 1}^{k} x_i \neq 0 \quad \implies \quad B'(x) = 0,
		\]
		and
		\begin{equation}\label{eq:ss1}
			\mu \defeq \be[B'(x)] \leq 1/\binom{r}{k},
		\end{equation}
		and if $(x,y)$ has a $(p,q,r)$-arithmetic-distribution then
		\begin{equation}\label{eq:ss2}
			\pr[P^{-1}(B(P(x))) = Q^{-1}(B(Q(y)))] = \binom{r}{k} \be_{x,y} [B'(x) B'(y)].
		\end{equation}
		Furthermore, by Claim~\ref{claim:useful-sum}, $B'(x) = B'(\gamma x)$ for all $\gamma \in \integers_p^\ast$
and $\wh{B'}(S+S') = \wh{B'}(S)$ for all $S$, with $S'$ as defined in Claim~\ref{claim:useful-sum}.
		
Hence, we may apply Lemma~\ref{lem:non-cartesian2} with
		\[
		C = O(p\log(q) /q^2),
		\]
		as provided by Lemma~\ref{lem:mag-exp} (notice that we assume $p \geq q$), to conclude that
		\[
			\cov(B'(x), B'(y)) \leq O\Big( \frac{p}{q} \mu^2 + \frac{C}{p}\mu \Big) = O\Big( \frac{p}{q} \mu^2 + \frac{\log(q)}{q^2} \mu \Big).
		\]
		Notice that as both $x,y$ are uniformly distributed in $\integers_p^r$, we have
		\[
		\be[B'(x) B'(y)] = \mu^2 + \cov(B'(x), B'(y)).
		\]
		Combining these estimates with~\eqref{eq:ss1} and~\eqref{eq:ss2}, we obtain
		\[
			\pr[P^{-1}(B(P(x))) = Q^{-1}(B(Q(x)))] \leq O\Big( \frac{p}{q\binom{r}{k}} + \frac{\log(q)}{q^2} \Big),
		\]
completing the proof.
	\end{proof}

\section*{Acknowledgments}
The authors thank Alon Rosen and Prashant Vasudevan for helpful comments on a previous version of this paper.

	
	\bibliographystyle{siam}
	\bibliography{refs3}

\begin{thebibliography}{10}

\bibitem{AW14}
{\sc A.~Abboud and V.~V. Williams}, {\em Popular conjectures imply strong lower bounds for dynamic problems}, in {FOCS} 2014, {IEEE} Computer Society, 2014, pp.~434--443.

\bibitem{AWW14}
{\sc A.~Abboud, V.~V. Williams, and O.~Weimann}, {\em Consequences of faster alignment of sequences}, in {ICALP} 2014, J.~Esparza, P.~Fraigniaud, T.~Husfeldt, and E.~Koutsoupias, eds., vol.~8572 of Lecture Notes in Computer Science, Springer, 2014, pp.~39--51.

\bibitem{MY96}
{\sc M.~Y. An}, {\em Log-concave probability distributions: Theory and statistical testing}, game theory and information, University Library of Munich, Germany, 1996.

\bibitem{AH08}
{\sc B.~Aronov and S.~Har{-}Peled}, {\em On approximating the depth and related problems}, {SIAM} J. Comput., 38 (2008), pp.~899--921.

\bibitem{BRSV18}
{\sc M.~Ball, A.~Rosen, M.~Sabin, and P.~N. Vasudevan}, {\em Proofs of work from worst-case assumptions}, in {CRYPTO} 2018, H.~Shacham and A.~Boldyreva, eds., vol.~10991 of Lecture Notes in Computer Science, Springer, 2018, pp.~789--819.

\bibitem{BJMM12}
{\sc A.~Becker, A.~Joux, A.~May, and A.~Meurer}, {\em Decoding random binary linear codes in $2^{n/20}$: How 1 + 1 = 0 improves information set decoding}, in {EUROCRYPT} 2012, D.~Pointcheval and T.~Johansson, eds., vol.~7237 of Lecture Notes in Computer Science, Springer, 2012, pp.~520--536.

\bibitem{BenhamoudaLLO021}
{\sc F.~Benhamouda, T.~Lepoint, J.~Loss, M.~Orr{\`{u}}, and M.~Raykova}, {\em On the (in)security of {ROS}}, in {EUROCRYPT} 2021, A.~Canteaut and F.~Standaert, eds., vol.~12696 of Lecture Notes in Computer Science, Springer, 2021, pp.~33--53.

\bibitem{BKW03}
{\sc A.~Blum, A.~Kalai, and H.~Wasserman}, {\em Noise-tolerant learning, the parity problem, and the statistical query model}, J. {ACM}, 50 (2003), pp.~506--519.

\bibitem{BoucheronL13}
{\sc S.~Boucheron, G.~Lugosi, and P.~Massart}, {\em {Concentration Inequalities - {A} Nonasymptotic Theory of Independence}}, Oxford University Press, 2013.

\bibitem{BDF18}
{\sc C.~Bouillaguet, C.~Delaplace, and P.~Fouque}, {\em Revisiting and improving algorithms for the 3xor problem}, {IACR} Trans. Symmetric Cryptol., 2018 (2018), pp.~254--276.

\bibitem{BDV20}
{\sc Z.~Brakerski, N.~Stephens{-}Davidowitz, and V.~Vaikuntanathan}, {\em On the hardness of average-case k-sum}, in {APPROX/RANDOM} 2021, M.~Wootters and L.~Sanit{\`{a}}, eds., vol.~207 of LIPIcs, Schloss Dagstuhl - Leibniz-Zentrum f{\"{u}}r Informatik, 2021, pp.~29:1--29:19.

\bibitem{CP91}
{\sc P.~Camion and J.~Patarin}, {\em The knapsack hash function proposed at {C}rypto'89 can be broken}, in {EUROCRYPT} 1991, D.~W. Davies, ed., vol.~547 of Lecture Notes in Computer Science, Springer, 1991, pp.~39--53.

\bibitem{DVV16}
{\sc A.~Degwekar, V.~Vaikuntanathan, and P.~N. Vasudevan}, {\em Fine-grained cryptography}, in {CRYPTO} 2016, M.~Robshaw and J.~Katz, eds., vol.~9816 of Lecture Notes in Computer Science, Springer, 2016, pp.~533--562.

\bibitem{Dinur19}
{\sc I.~Dinur}, {\em An algorithmic framework for the generalized birthday problem}, Des. Codes Cryptogr., 87 (2019), pp.~1897--1926.

\bibitem{DKMN21}
{\sc Y.~Dodis, D.~Khovratovich, N.~Mouha, and M.~Nandi}, {\em T$_5$: Hashing five inputs with three compression calls}, in {ITC} 2021, S.~Tessaro, ed., vol.~199 of LIPIcs, Schloss Dagstuhl - Leibniz-Zentrum f{\"{u}}r Informatik, 2021, pp.~24:1--24:23.

\bibitem{DKS15}
{\sc O.~Dunkelman, N.~Keller, and A.~Shamir}, {\em Slidex attacks on the {E}ven-{M}ansour encryption scheme}, J. Cryptol., 28 (2015), pp.~1--28.

\bibitem{Erdos45}
{\sc P.~Erd{\H{o}}s}, {\em On a lemma of {L}ittlewood and {O}fford}, Bull. Amer. Math. Soc., 5 (1945), pp.~898--902.

\bibitem{GO95}
{\sc A.~Gajentaan and M.~H. Overmars}, {\em On a class of ${O}(n^2)$ problems in computational geometry}, Comput. Geom., 5 (1995), pp.~165--185.

\bibitem{GGHPV20}
{\sc A.~Golovnev, S.~Guo, T.~Horel, S.~Park, and V.~Vaikuntanathan}, {\em Data structures meet cryptography: 3sum with preprocessing}, in {STOC} 2020, K.~Makarychev, Y.~Makarychev, M.~Tulsiani, G.~Kamath, and J.~Chuzhoy, eds., {ACM}, 2020, pp.~294--307.

\bibitem{HGJ10}
{\sc N.~Howgrave{-}Graham and A.~Joux}, {\em New generic algorithms for hard knapsacks}, in {EUROCRYPT} 2010, H.~Gilbert, ed., vol.~6110 of Lecture Notes in Computer Science, Springer, 2010, pp.~235--256.

\bibitem{ILL89}
{\sc R.~Impagliazzo, L.~A. Levin, and M.~Luby}, {\em Pseudo-random generation from one-way functions (extended abstracts)}, in {STOC} 1989, D.~S. Johnson, ed., {ACM}, 1989, pp.~12--24.

\bibitem{JV16}
{\sc Z.~Jafargholi and E.~Viola}, {\em 3sum, 3xor, triangles}, Algorithmica, 74 (2016), pp.~326--343.

\bibitem{J09}
{\sc A.~Joux}, {\em Algorithmic cryptanalysis}, CRC Press, 2009.

\bibitem{LLW19}
{\sc R.~LaVigne, A.~Lincoln, and V.~V. Williams}, {\em Public-key cryptography in the fine-grained setting}, in {CRYPTO} 2019, A.~Boldyreva and D.~Micciancio, eds., vol.~11694 of Lecture Notes in Computer Science, Springer, 2019, pp.~605--635.

\bibitem{LS19}
{\sc G.~Leurent and F.~Sibleyras}, {\em Low-memory attacks against two-round even-mansour using the 3-xor problem}, in {CRYPTO} 2019, A.~Boldyreva and D.~Micciancio, eds., vol.~11693 of Lecture Notes in Computer Science, Springer, 2019, pp.~210--235.

\bibitem{LO43}
{\sc J.~E. Littlewood and A.~C. Offord}, {\em On the number of real roots of a random algebraic equation ({III})}, Rec. Math. (Mat. Sbornik). Nouvelle S\'{e}rie, 54 (1943), pp.~277--286.

\bibitem{LV04}
{\sc Y.~Lu and S.~Vaudenay}, {\em Faster correlation attack on bluetooth keystream generator {E0}}, in {CRYPTO} 2004, M.~K. Franklin, ed., vol.~3152 of Lecture Notes in Computer Science, Springer, 2004, pp.~407--425.

\bibitem{L05}
{\sc V.~Lyubashevsky}, {\em The parity problem in the presence of noise, decoding random linear codes, and the subset sum problem}, in {APPROX/RANDOM} 2005, C.~Chekuri, K.~Jansen, J.~D.~P. Rolim, and L.~Trevisan, eds., vol.~3624 of Lecture Notes in Computer Science, Springer, 2005, pp.~378--389.

\bibitem{MPRKS08}
{\sc F.~Mendel, N.~Pramstaller, C.~Rechberger, M.~Kontak, and J.~Szmidt}, {\em Cryptanalysis of the {GOST} hash function}, in {CRYPTO} 2008, D.~A. Wagner, ed., vol.~5157 of Lecture Notes in Computer Science, Springer, 2008, pp.~162--178.

\bibitem{MS12}
{\sc L.~Minder and A.~Sinclair}, {\em The extended k-tree algorithm}, J. Cryptol., 25 (2012), pp.~349--382.

\bibitem{NW21}
{\sc J.~Nederlof and K.~Wegrzycki}, {\em Improving {S}chroeppel and {S}hamir's algorithm for subset sum via orthogonal vectors}, in {STOC} 2021, S.~Khuller and V.~V. Williams, eds., {ACM}, 2021, pp.~1670--1683.

\bibitem{NS15}
{\sc I.~Nikolic and Y.~Sasaki}, {\em Refinements of the k-tree algorithm for the generalized birthday problem}, in {ASIACRYPT} 2015, T.~Iwata and J.~H. Cheon, eds., vol.~9453 of Lecture Notes in Computer Science, Springer, 2015, pp.~683--703.

\bibitem{Patrascu10}
{\sc M.~Patrascu}, {\em Towards polynomial lower bounds for dynamic problems}, in {STOC} 2010, L.~J. Schulman, ed., {ACM}, 2010, pp.~603--610.

\bibitem{P15}
{\sc S.~Pettie}, {\em Higher lower bounds from the 3sum conjecture}.
\newblock Fine-Grained Complexity and Algorithm Design Workshop at the Simons Institute, 2015.

\bibitem{VW19}
{\sc V.~Vasillevska-Williams}, {\em On some fine-grained questions in algorithms and complexity}, Proceedings of the International Congress of Mathematicians (ICM), 2018 (2019), pp.~3447--3487.

\bibitem{W02}
{\sc D.~A. Wagner}, {\em A generalized birthday problem}, in {CRYPTO} 2002, M.~Yung, ed., vol.~2442 of Lecture Notes in Computer Science, Springer, 2002, pp.~288--303.

\bibitem{WW13}
{\sc V.~V. Williams and R.~Williams}, {\em Finding, minimizing, and counting weighted subgraphs}, {SIAM} J. Comput., 42 (2013), pp.~831--854.

\end{thebibliography}

\appendix

\section{Wagner's \tops{$k$}-tree Algorithm}
\label{app:ktree}

In this appendix we sketch the details of Wagner's $k$-tree algorithm for solving the $k$-XOR problem and its generalization published in~\cite{MS12}.
The variant for solving $k$-SUM is similar. For more details, we refer the reader to the original publications~\cite{MS12,W02}.

\subsection{The 4-XOR algorithm}
We begin by describing the algorithm applied to a 4-list variant of 4-XOR.
In this problem, the input consists of 4 lists $\{x^{(j)}\}_{j=1}^{4}$,
where each $x^{(j)} \in \{0,1\}^{2^{n/3} \times n}$ is chosen uniformly at random.
The goal is to find 4 vectors, one from each list, whose XOR is $0_n$,
namely, output a 4-tuple $\{i_j\}_{j=1}^{4}$, where $i_j \in [2^{n/3}]$
such that $\bigoplus_{j=1}^{4} x^{(j)}_{i_j} = 0_n$.
It is easy to see the 4-list variant is equivalent to the single-list variant
(Definition~\ref{def:kxor})
up to $O(1)$ factors in success probability and complexity.

The $k$-tree algorithm for $k=4$ is described below.
\begin{enumerate}
\item Sort the lists $\{x^{(j)}\}_{j=1}^{4}$.
\item By a linear scan, find all pairs $(x^{(1)}_{i_1},x^{(2)}_{i_2})$ such that the $n/3$ most significant bits of $x^{(1)}_{i_1} \oplus x^{(2)}_{i_2}$ are zero. Store all values $x^{(1)}_{i_1} \oplus x^{(2)}_{i_2}$ in a new sorted list $y^{(1)}$, along with the corresponding pair $(x^{(1)}_{i_1},x^{(2)}_{i_2})$.
\item Apply the previous step to $x^{(3)}$ and $x^{(4)}$ and build the sorted list $y^{(2)}$.
\item Find a pair $(y^{(1)}_{j_1},y^{(2)}_{j_2})$ such that $y^{(1)}_{j_1} \oplus y^{(2)}_{j_2} = 0_n$. Trace $(y^{(1)}_{j_1},y^{(2)}_{j_2})$ back to a solution to 4-XOR problem and output it.
\end{enumerate}

To analyze the algorithm, note that the expected size of $y^{(1)}$ and $y^{(2)}$ is $2^{n/3}$ (as a pair $(x^{(1)}_{i_1},x^{(2)}_{i_2})$ is added to $y^{(1)}$ with probability $2^{-n/3}$).
Therefore, the algorithm runs in expected time $\tilde{O}(2^{n/3})$.
Moreover, on average, there is a single 4-XOR solution to be found in the last step, since any 4-tuple
$\{x^{(j)}_{i_j}\}_{j=1}^{4}$ satisfies the $4n/3$ bit constraints imposed by the algorithm with probability $2^{-4n/3}$ (and there are $2^{4n/3}$ such 4-tuples).
Based on tail bounds, one can show that the algorithm succeeds with constant probability. We refer the reader to~\cite{MS12} for a rigorous analysis.

\subsection{Generalizations}
We briefly summarize two important generalizations of the 4-XOR algorithm.

\subsubsection{\tops{The full $k$-tree algorithm~\cite{W02}}} The first generalization
applies to larger $k$ that is a power of 2.
The input consists of $k$ lists, each containing $2^{n/(\log k + 1)}$ vectors of $n$ bits. The algorithm merges the $k$ lists in pairs in a tree-like structure with $\log k + 1$ levels.
The merging maintains the property that the vectors in all $k/2^{\ell}$ lists in level $\ell \in \{0,1,\ldots,\log k -1\}$ have zero $\ell \cdot n/(\log k + 1)$ most significant bits.
The final 2-list merge at level $\log k - 1$ zeroes the remaining $2n/(\log k + 1)$ bits, giving a $k$-XOR solution at the last level with high probability.

When $k$ is not a power of 2, the $k$-XOR problem can be easily reduced to a $k'$-XOR problem where
$k'$ is the largest power of 2 that is smaller than $k$.

\subsubsection{\tops{The extended $k$-tree algorithm~\cite{MS12}}} This generalization applies when the input lists contain less than $2^{n/(\log k + 1)}$ vectors (i.e., the input is less dense) and the $k$-tree algorithm is not directly applicable. The extended algorithm gives a tradeoff between the size of
the inputs lists and the time complexity.

Specifically, for $4$-XOR, when the input lists are of size $r$ for $2^{n/4} \leq r \leq 2^{n/3}$,
we change the second step to find all pairs $(x^{(1)}_{i_1},x^{(2)}_{i_2})$ such that the
$4 \log r - n$ most significant bits of $x^{(1)}_{i_1} \oplus x^{(2)}_{i_2}$ are equal to $0_{4 \log r - 1}$ (we also change the third step similarly).
Therefore, the expected size of $y^{(1)}$ and $y^{(2)}$ becomes $2^{n}/r^2$, and the expected complexity of the algorithm is $\tilde{O}(2^{n}/r^2)$.
Finally, on average, there is a single 4-XOR solution to be found in the last step, since there are
$4 \log r$ bit constraints imposed by the algorithm on $r^4$ 4-tuples (once again, a tail bound is required to rigorously compute the success probability).

\end{document}